\tikzset{every picture/.style={thick,>=angle 60}}
\tikzset{MDPrand/.style={draw,circle,minimum size=11*1.5,inner sep=1mm}}
\crefname{algorithm}{Algorithm}{Algorithms}
\crefname{appendix}{Appendix}{Appendices}
\crefname{corollary}{Corollary}{Corollaries}
\crefname{equation}{}{}
\crefname{lemma}{Lemma}{Lemmas}
\crefname{proposition}{Proposition}{Propositions}
\crefname{section}{Section}{Sections}
\crefname{table}{Table}{Tables}
\crefname{theorem}{Theorem}{Theorems}
\newcommand{\A}{\mathcal{A}}
\newcommand{\Adet}{\A_{\mathit{det}}}
\newcommand{\chk}{\mathit{chk}}
\newcommand{\B}{\mathcal{B}}
\newcommand{\Bdet}{\B_{\mathit{det}}}
\newcommand{\btran}[1]{\xhookrightarrow{#1}}
\newcommand{\coNBA}{\textup{coNBA}}
\newcommand{\coUBA}{\textup{coUBA}}
\newcommand{\Cyl}[1]{#1{\downarrow}}
\newcommand{\deltadet}{\delta_{\mathit{det}}}
\newcommand{\DPA}{\textup{DPA}}
\newcommand{\en}{\mathit{end}}
\newcommand{\Ex}{\mathbb{E}}
\newcommand{\finally}{\mathsf{F}}
\newcommand{\ftrees}[1]{\llparenthesis #1 \rrparenthesis}
\newcommand{\LTL}{\textup{LTL}}
\newcommand{\N}{\mathbb{N}}
\newcommand{\NBA}{\textup{NBA}}
\newcommand{\norm}[1]{\left\lVert#1\right\rVert}
\renewcommand{\P}{\mathbb{P}}
\newcommand{\Prob}{\mathit{Prob}}
\newcommand{\Q}{\mathbb{Q}}
\newcommand{\R}{\mathbb{R}}
\newcommand{\sacc}{s_{\mathit{acc}}}
\newcommand{\trees}[1]{\llbracket #1 \rrbracket}
\newcommand{\UBA}{\textup{UBA}}
\newcommand{\Xhookrightarrow}[2]{{\;\xhookrightarrow{#1}{}\negthickspace{#2}\;}}
\newcommand{\Xrightarrow}[2]{{\;\xrightarrow{#1}{}\negthickspace^{#2}\;}}
\newcommand{\BPtuple}{(\Gamma, \mathord{\btran{}}, \Prob, X_0)}
\title{Linear-Time Model Checking Branching Processes}
\author{Stefan Kiefer}{University of Oxford, UK}{}{}{supported by a Royal Society University Research Fellowship}
\author{Pavel Semukhin}{University of Oxford, UK}{}{}{}
\author{Cas Widdershoven}{University of Oxford, UK}{}{}{}
\authorrunning{S. Kiefer and P. Semukhin and C. Widdershoven}
\keywords{model checking, Markov chains, branching processes, automata, computational complexity} 
\begin{document}

\maketitle

\begin{abstract}
(Multi-type) branching processes are a natural and well-studied model for generating random infinite trees.
Branching processes feature both nondeterministic and probabilistic branching, generalizing both transition systems and Markov chains (but not generally Markov decision processes).
We study the complexity of model checking branching processes against linear-time omega-regular specifications:
is it the case almost surely that every branch of a tree randomly generated by the branching process satisfies the omega-regular specification?
The main result is that for LTL specifications this problem is in PSPACE, subsuming classical results for transition systems and Markov chains, respectively.
The underlying general model-checking algorithm is based on the automata-theoretic approach, using unambiguous B\"uchi automata.
\end{abstract}

\section{Introduction} \label{sec:intro}

Checking whether a (labelled) transition system satisfies a linear-time specification is a staple in verification.
The specification is often given as a formula of linear temporal logic (LTL).
While early procedures for LTL model checking work directly with the formula \cite{LichtensteinP85}, the automata-theoretic approach translates LTL formulas into finite automata on infinite words, such as B\"uchi automata, and analyzes a product of the system and the automaton \cite{VardiWolper86}.
This approach can lead to clean and modular model-checking algorithms.

Although LTL captures only a subset of $\omega$-regular languages, model-checking algorithms based on the automata-theoretic approach can be made optimal from the point of view of computational complexity.
In particular, model checking finite transition systems against LTL specifications is PSPACE-complete~\cite{SistlaVW87}, and the algorithm \cite{VardiWolper86} that, loosely speaking, translates (the negation of) the LTL formula into a B\"uchi automaton and checks the product with the transition system for emptiness can indeed be implemented in PSPACE.

The same approach does not directly work for probabilistic systems modelled as finite Markov chains: intuitively, the nondeterminism in a B\"uchi automaton causes issues in a stochastic setting where the specification should hold with probability~$1$, i.e., almost surely but not necessarily surely.
A possible remedy is to translate the nondeterministic B\"uchi automaton further into a deterministic automaton, e.g., a deterministic Rabin automaton (deterministic B\"uchi automata are less expressive), with which the Markov chain can be naturally instrumented and subsequently analyzed.
This determinization step causes a (second) exponential blowup and does not lead to algorithms that are optimal from a computational-complexity point of view.
However, for \emph{Markov decision processes (MDPs)}, which allow for nondeterminism in the probabilistic system, this approach is adequate and leads to an optimal, double-exponential time, model-checking algorithm.

Checking whether a Markov chain satisfies an LTL specification with probability~$1$ is PSPACE-complete, but membership in PSPACE was proved only in \cite{CY88,CourcoubetisYannakakis95}, not using the automata-theoretic approach but by a recursive procedure on the formula.
This raised the question if there is also an optimal algorithm based on the automata-theoretic approach; see \cite{Vardi99} for a survey of the state of the art at the end of the 90s.

The answer is yes and was first given in \cite{CouSahSut03}, using a single-exponential translation from LTL to \emph{separated} B\"uchi automata.
Such automata are special \emph{unambiguous} B\"uchi automata, which restrict nondeterministic B\"uchi automata by requiring that every word have at most one accepting run.
Another algorithm, using alternating B\"uchi automata, was proposed in \cite{BustanRubinVardi04}, exploiting \emph{reverse determinism}, a property also related to unambiguousness.
A polynomial-time (even NC) model-checking algorithm for Markov chains against general unambiguous B\"uchi automata was given in \cite{16BKKKMW-CAV}.
These works all imply optimal PSPACE algorithms for LTL model checking of Markov chains via the automata-theoretic approach.

In this paper we exhibit an LTL model checking algorithm that has the following features: (1)~it applies to (multi-type) branching processes, a well established model for random trees, generalizing both nondeterministic transition systems and Markov chains; (2)~it runs in PSPACE, which is the optimal complexity both for nondeterministic transition systems and Markov chains; and (3)~it is based on the automata-theoretic approach (using unambiguous B\"uchi automata).
The fact that there exists an algorithm with the first two features might seem surprising, as one might think that any system model that encompasses both nondeterminism and probability will generalize MDPs, for which LTL model checking is 2EXPTIME-complete \cite{CourcoubetisYannakakis95}.

Branching processes (BPs) are a well-studied model in mathematics with applications in numerous fields including biology, physics and natural language processing; see, e.g., \cite{Harris63,AthreyaNey72,HaccouJV05}.
BPs randomly generate infinite trees, and, from a computer-science point of view, they might be the most natural model to do so:
(multi-type) BPs can be thought of as a version of stochastic context-free grammars without terminal symbols, randomly generating infinite derivation trees.
For example, consider the following BP, taken from \cite{12CDK-MFCS}, with $3$ \emph{types} $I, B, D$:
\begin{align}
 &I \btran{0.9} I     && B \btran{0.2} D && D \btran{1} D \nonumber \\[-1mm]
 &I \btran{0.1} I B   && B \btran{0.5} B \label{eq:intro-example} \\[-1mm]
 &                    && B \btran{0.3} B B \nonumber
\end{align}
This BP might generate a tree with the following prefix:
\begin{center}
\begin{tikzpicture}[yscale=0.8]
  \node (0) at (0,0) {$I$};
  \node (1) at (-2,-1) {$I$};
  \node (2) at (2,-1) {$B$};
  \node (11) at (-2,-2) {$I$};
  \node (111) at (-3,-3) {$I$};
  \node (112) at (-1,-3) {$B$};
  \node (21) at (1,-2) {$B$};
  \node (22) at (3,-2) {$B$};
  \node (211) at (1,-3) {$B$};
  \node (221) at (3,-3) {$D$};
  \draw (0)--(1);
  \draw (0)--(2);
  \draw (1)--(11);
  \draw (11)--(111);
  \draw (11)--(112);
  \draw (2)--(21);
  \draw (2)--(22);
  \draw (21)--(211);
  \draw (22)--(221);
  \draw[fill] (-3,-3.4) circle (0.7pt);
  \draw[fill] (-3,-3.6) circle (0.7pt);
  \draw[fill] (-3,-3.8) circle (0.7pt);
  \draw[fill] (-1,-3.4) circle (0.7pt);
  \draw[fill] (-1,-3.6) circle (0.7pt);
  \draw[fill] (-1,-3.8) circle (0.7pt);
  \draw[fill] (1,-3.4) circle (0.7pt);
  \draw[fill] (1,-3.6) circle (0.7pt);
  \draw[fill] (1,-3.8) circle (0.7pt);
  \draw[fill] (3,-3.4) circle (0.7pt);
  \draw[fill] (3,-3.6) circle (0.7pt);
  \draw[fill] (3,-3.8) circle (0.7pt);
\end{tikzpicture}
\end{center}
The probability that the BP generates a tree with the shown prefix is the product of the probabilities of the fired transition rules, i.e., (in breadth-first order) $0.1 \cdot 0.9 \cdot 0.3 \cdot 0.1 \cdot 0.5 \cdot 0.2$.

BPs generalize transition systems.
Consider the following transition system:
\begin{center}
\begin{tikzpicture}
\node[MDPrand] (X) at (0,0) {$X$};
\node[MDPrand] (Y) at (2,0) {$Y$};
\draw[->] (-1,0) to (X);
\draw[->] (X) edge[bend left] (Y);
\draw[->] (Y) edge[bend left] (X);
\draw[->] (Y) edge[loop right,looseness=12] 
(Y);
\end{tikzpicture}
\end{center}
It is equivalent to the BP with $X \btran{1} Y$ and $Y \btran{1} X Y$, which generates with probability~$1$ the following unique tree:
\begin{center}
\begin{tikzpicture}[yscale=0.8]
\node (0) at (0,0) {$X$};
\node (1) at (0,-1) {$Y$};
\node (2l) at (-2,-2) {$X$};
\node (2r) at ( 2,-2) {$Y$};
\node (3l) at (-2,-3) {$Y$};
\node (3rl) at (1,-3) {$X$};
\node (3rr) at (3,-3) {$Y$};
\draw[fill] (-2,-3.4) circle (0.7pt);
\draw[fill] (-2,-3.6) circle (0.7pt);
\draw[fill] (-2,-3.8) circle (0.7pt);
\draw[fill] ( 1,-3.4) circle (0.7pt);
\draw[fill] ( 1,-3.6) circle (0.7pt);
\draw[fill] ( 1,-3.8) circle (0.7pt);
\draw[fill] ( 3,-3.4) circle (0.7pt);
\draw[fill] ( 3,-3.6) circle (0.7pt);
\draw[fill] ( 3,-3.8) circle (0.7pt);
\draw (0) -- (1);
\draw (1) -- (2l);
\draw (1) -- (2r);
\draw (2l) -- (3l);
\draw (2r) -- (3rl);
\draw (2r) -- (3rr);
\end{tikzpicture}
\end{center}
The branches of this unique tree are exactly the executions of the transition system.
As a consequence, any LTL formula holds on all executions of the transition system if and only if it holds (with probability~$1$) on all branches of the generated tree.

BPs also generalize Markov chains.
Consider the following Markov chain:
\begin{center}
\begin{tikzpicture}
\node[MDPrand] (X) at (0,0) {$X$};
\node[MDPrand] (Y) at (2,0) {$Y$};
\draw[->] (-1,0) to (X);
\draw[->] (X) edge[bend left] node[above] {$1$} (Y);
\draw[->] (Y) edge[bend left] node[below] {$0.3$} (X);
\draw[->] (Y) edge[loop right,looseness=12] node[right] {$0.7$} (Y);
\end{tikzpicture}
\end{center}
It is equivalent to the BP with $X \btran{1} Y$ and $Y \btran{0.3} X$ and $Y \btran{0.7} Y$, which generates, with probabilities $0.3$, $0.7 \cdot 0.3$, $0.7 \cdot 0.7$, respectively, the following prefixes of (degenerated) trees:
\begin{center}
\begin{tikzpicture}[yscale=0.8]
\node (00) at (0, 0) {$X$};  \node(10) at (2, 0) {$X$};  \node(20) at (4, 0) {$X$};
\node (01) at (0,-1) {$Y$};  \node(11) at (2,-1) {$Y$};  \node(21) at (4,-1) {$Y$};
\node (02) at (0,-2) {$X$};  \node(12) at (2,-2) {$Y$};  \node(22) at (4,-2) {$Y$};
\node (03) at (0,-3) {$Y$};  \node(13) at (2,-3) {$X$};  \node(23) at (4,-3) {$Y$};
\draw[fill] (0,-3.4) circle (0.7pt);
\draw[fill] (0,-3.6) circle (0.7pt);
\draw[fill] (0,-3.8) circle (0.7pt);
\draw[fill] (2,-3.4) circle (0.7pt);
\draw[fill] (2,-3.6) circle (0.7pt);
\draw[fill] (2,-3.8) circle (0.7pt);
\draw[fill] (4,-3.4) circle (0.7pt);
\draw[fill] (4,-3.6) circle (0.7pt);
\draw[fill] (4,-3.8) circle(0.7pt);
\draw (00) -- (01) -- (02) -- (03);
\draw (10) -- (11) -- (12) -- (13);
\draw (20) -- (21) -- (22) -- (23);
\end{tikzpicture}
\end{center}
Here, each possible ``tree'' has only a single branch, and the possible ``trees'' are distributed in the same way as the possible executions of the Markov chain.
As a consequence, any LTL formula holds with probability~$1$ on a random execution of the Markov chain if and only if it holds with probability~$1$ on the (single) branch of the generated tree.

Hence, both for the transition system and for the Markov chain, the respective model-checking question reduces to the \emph{BP model-checking problem} which asks whether with probability~$1$ the property holds on \emph{all} branches.

For LTL specifications, we refer to this BP model-checking problem as $\P(\LTL)=1$.
Our main result is that it is in PSPACE, generalizing the corresponding classical results on transition systems and Markov chains.
As mentioned, our model-checking algorithm is based on the automata-theoretic approach, in particular on unambiguous B\"uchi automata.
Another important technical ingredient is the algorithmic analysis of certain nonnegative matrices in terms of their spectral radius.

The latter points to the fact that the numbers in the system generally matter, even though we only consider the qualitative problem of comparing the satisfaction probability with~$1$.
For example, for the BP given in \eqref{eq:intro-example}, one can show that the probability that all branches eventually hit a node of type~$D$ is less than~$1$ (in fact, it is~$0$).
Intuitively, this is because the probability of ``branching'' via $B \btran{0.3} B B$ is larger than the probability of ``dying'' via $B \btran{0.2} D$.
Were the probabilities $0.3$ and $0.2$ swapped, the probability that all branches eventually hit a node of type~$D$ would be~$1$; cf.~\cite[Section~1]{12CDK-MFCS}.

We also consider the problem $\P(\LTL = 0)$, which asks whether the probability that \emph{all} branches satisfy a given LTL formula is~$0$.
Even though it is trivial to negate an LTL formula, this problem is (unlike in Markov chains) not equivalent to the complement of $\P(\LTL = 1)$, because even when the probability is less than~$1$ that the formula holds on all branches, the probability may still be~$0$ that the negated formula holds on all branches.
We will show that $\P(\LTL = 0)$ is much more computationally complex than $\P(\LTL=1)$: it is 2EXPTIME-complete.

Besides LTL, we also consider automata-based specifications.
B\"uchi automata are relevant from a verification point of view, as a way of specifying desired or undesired executions of the system.
Unambiguous B\"uchi automata are useful from a technical point of view, in particular, to facilitate our main result on $\P(\LTL=1)$.
See \cref{sec:prelims,tab:map} for definitions of our problems and a map of our results.

\begin{remark}
Readers familiar with MDPs may wonder how the problem $\P(\LTL)=1$ can have lower computational complexity than the problem whether all schedulers of an MDP satisfy an LTL specification almost surely.
Consider the BP
\begin{alignat*}{6}
X &\btran{1} Y_1 Y_2  & \qquad Y_1 &\btran{0.7} X & \qquad Y_1 &\btran{0.3} Z & \qquad Y_2 &\btran{0.5} X & \qquad Y_2 &\btran{0.5} Z& \qquad Z &\btran{1} Z\,,
\end{alignat*}
which might be depicted graphically as follows:
\begin{center}
\begin{tikzpicture}
\node[MDPrand] (X) at (0,0) {$X$};
\node[MDPrand] (Y1) at (3,1) {$Y_1$};
\node[MDPrand] (Y2) at (3,-1) {$Y_2$};
\node[MDPrand] (Z) at (6,0) {$Z$};
\draw[->] (-1,0) to (X);
\draw[->] (X) edge[bend left=20] (Y1);
\draw[->] (X) edge[bend right=20] (Y2);
\draw[->] (Y1) edge[bend left=0] node[pos=0.4,below] {$0.7$} (X);
\draw[->] (Y2) edge[bend right=0] node[pos=0.4,above] {$0.5$} (X);
\draw[->] (Y1) edge node[pos=0.4,below] {$0.3$} (Z);
\draw[->] (Y2) edge node[pos=0.4,above] {$0.5$} (Z);
\draw[->] (Z) edge[loop right,looseness=12] node[right] {$1$} (Z);
\end{tikzpicture}
\end{center}
One might view this BP as an MDP where in an $X$-node the scheduler nondeterministically picks either the $Y_1$- or the $Y_2$-successor, and in an $Y_i$-node, the $X$- or the $Z$-successor is chosen randomly.
In such an MDP, regardless of the scheduler, a random run reaches with probability~$1$ a $Z$-node.
However, in the BP above, the probability is positive that \emph{some} branch of a random tree never reaches a $Z$-node.
Although each branch of a random tree could be thought of as being witnessed by at least one scheduler, this is not a contradiction, as there are uncountably many schedulers (over which one cannot take a sum).
Hence, if an MDP is interpreted as a BP in the way sketched above, then the requirement that the BP satisfy an LTL formula almost surely on \emph{all} branches is \emph{stronger}, and computationally less complex to check, than the requirement that the MDP satisfy, for \emph{each} scheduler, the formula almost surely.
\end{remark}

\subparagraph*{Related work.}
We have already discussed related work concerning model checking transition systems and Markov chains.

In addition to the mentioned applications of BPs in various fields, there has also been work on BPs in computer science, especially in the last 10~years.
This paper builds on \cite{12CDK-MFCS}, where specifications in terms of deterministic parity \emph{tree} automata are considered.
The work \cite{12CDK-MFCS} implies decidability of the problems considered in this paper and some basic upper complexity bounds.
For example, it is not hard to derive from \cite{12CDK-MFCS} that $\P(\LTL=1)$ is in 2EXPTIME.
Lowering this to PSPACE is the main achievement of this paper.

A related strand of work considers \emph{regular tree languages}; i.e., the specification is not in terms of a word automaton that is run on each branch but in terms of tree automata.
Even measurability is not easy to show in this case \cite{GogaczMMS17}, and fundamental decidability questions around computing the measure have been answered positively only for subclasses of regular tree languages \cite{MichalewskiM15,NiwinskiPS20}.

Fundamental results on the complexity of algorithmically analyzing BPs have been obtained in \cite{EtessamiYannakakis09}.
Indeed, in \cref{sec:as-finite} we build on and improve results from \cite{EtessamiYannakakis09} on \emph{finiteness} (more often called ``extinction'' in the literature) of BPs.

Another recent line of work considers extensions of BPs with nondeterminism, focusing on algorithmic questions about properties such as reachability.
\emph{Branching MDPs}, which are BPs where a controller chooses actions to influence the evolution of the tree, have been investigated, e.g., in \cite{EtessamiSY18,EtessamiSY20}.
Even branching \emph{games}, featuring two adversarial controllers, have been studied recently \cite{EtessamiMSY19}.

The work \cite{GorlinR18} also considers BPs with ``internal'' nondeterminism (as opposed to the ``external'' nondeterminism manifested as branching in the generated tree), along with model-checking problems against the logic \emph{GPL}.
This expressive, $\mu$-calculus based modal logic had been introduced in \cite{CleavelandIN05}.
The system model therein, called \emph{reactive probabilistic labeled transition systems (RPLTSs)}, is essentially equivalent to BPs as considered in this paper.

BPs are related to models for probabilistic programs with recursion, such as \emph{Recursive Markov chains}, for which model-checking problems have been studied in detail; see, in particular, \cite{EtessamiYannakakis12}.
Very loosely speaking, a run of a (``1-exit'') Recursive Markov chain can be viewed as a depth-first traversal of a tree generated by a BP.
Indeed, for a lower bound in the present paper (\cref{thm:NBA-0}) we adapt a proof from \cite{EtessamiYannakakis12}.
However, most qualitative model-checking problems for Recursive Markov chains are EXPTIME-complete~\cite{EtessamiYannakakis12}, and so many of the BP problems we study turn out to have different computational complexity.

As a key technical tool we use unambiguous B\"uchi automata, as recently proposed for Markov chains~\cite{16BKKKMW-CAV}.
It is non-trivial to extend their use to random trees, as the branching behaviour of BPs interferes with the spectral-radius based analysis from~\cite{16BKKKMW-CAV}.
One may view as the main technical insight of this paper that the limited nondeterminism in unambiguous automata can be combined with the tree branching of BPs, so that, in a sense, BP model checking reduces to comparing the spectral radius of a certain nonnegative matrix with~$1$ (\cref{prop:coUBA-1}).

\section{Preliminaries} \label{sec:prelims}

Let $\N$ and $\N_0$ denote the set of positive and nonnegative integers, respectively.
For a finite set $\Gamma$, we write $\Gamma^*$ (resp., $\Gamma^+$) for the set of words (resp., nonempty words) over~$\Gamma$.

\subparagraph*{Branching processes.}
A \emph{(multi-type) branching process (BP)} is a tuple $\B  =  \BPtuple$, where $\Gamma$ is a finite set of types,
 $\mathord{\btran{}} \subseteq \Gamma\times \Gamma^+$ is a finite set of transition rules,
 $\Prob$ is a function assigning positive rational probabilities to transition rules
  so that for every $X\in \Gamma$ we have $\sum_{X \btran{} w} \Prob(X \btran{} w)=1$,
 and $X_0\in \Gamma$ is the start type.
We write $X \btran{p} w$ to denote that $\Prob(X \btran{} w) = p$.
Given a BP~$\B$ and a type $X \in \Gamma$ we write $\B[X]$ for the BP obtained from~$\B$ by making $X$ the start type.
For $X, Y \in \Gamma$ we call $Y$ a \emph{successor} of~$X$ if there is a rule $X \btran{} u Y v$ for some $u, v \in \Gamma^*$.

A BP \emph{with $\varepsilon$-rules allowed} relaxes the requirement $\mathord{\btran{}} \subseteq \Gamma\times \Gamma^+$ to $\mathord{\btran{}} \subseteq \Gamma\times \Gamma^*$, i.e., there may be rules of the form $X \btran{} \varepsilon$, where $\varepsilon$ denotes the empty word.
In the following, we disallow $\varepsilon$-rules unless specified otherwise; but the definitions generalize in a natural way.

Fix a BP $\B = \BPtuple$ for the rest of the section.

\subparagraph*{Trees.}
Write $\trees{\B}$ for the set of trees \emph{generated} by~$\B$; i.e., $\trees{\B}$ denotes the set of ordered $\Gamma$-labelled trees~$t$ such that for each $X \in \Gamma$ and each $X$-labelled node~$v$ in~$t$, there is a rule $X \btran{} X_1 \cdots X_k$, denoted by $\mathit{rule}(v)$, such that the $k$ ordered children of~$v$ are labelled with $X_1, \ldots, X_k$, respectively.
We say a node \emph{has type $X \in \Gamma$} if the node is labelled with~$X$.
A \emph{finite prefix} of a tree $t \in \trees{\B}$ is an ordered $\Gamma$-labelled \emph{finite} tree obtained from~$t$ by designating some nodes as leaves, and removing all their children, grandchildren, etc.
Write $\ftrees{\B}$ for the set of finite prefixes of trees generated by~$\B$.
For $t \in \ftrees{\B}$ write $\Cyl{t} \subseteq \trees{\B}$ for the (``cylinder'') set of trees $t' \in \trees{\B}$ such that $t$ is a finite prefix of~$t'$.
For $X \in \Gamma$ write $\trees{\B}_X \subseteq \trees{\B}$ and $\ftrees{\B}_X \subseteq \ftrees{\B}$ for the subsets of trees whose root has type~$X$; the trees in $\trees{\B}_X$ are called \emph{$X$-trees}.
A \emph{branch} of a tree~$t$ is a sequence $v_0 v_1 \cdots$ of nodes in~$t$, where $v_0$ is the root of~$t$ and $v_{i+1}$ is a child of~$v_i$ for all $i \in \N_0$.
See~\cite{12CDK-MFCS} for equivalent, more formal tree-related definitions.

\subparagraph*{Probability space.}
For each $X \in \Gamma$ we define the probability space $(\trees{\B}_X,\Sigma_X,\P_X)$, where $\Sigma_X$ is the $\sigma$-algebra generated by $\{\Cyl{t} \mid t\in\ftrees{\B}_X\}$, and $\P_X$ is the probability measure generated by $\P_X(\Cyl{t})  :=  \prod_{v} \Prob(\mathit{rule}(v))$
for all $t \in \ftrees{\B}_X$, where the product extends over all non-leaf nodes~$v$ in~$t$.
This is analogous to the standard definition of the probability space of a Markov chain.
We may write $\P_\B$ for $\P_{X_0}$, omitting the subscript when $\B$ is understood.
We often talk about events (i.e., measurable sets of trees) and their probability in text form.
For example, by saying ``a $\B$-tree has with positive probability infinitely many nodes of type~$X$'' we mean that $\P_\B(E) > 0$ where $E \subseteq \trees{\B}_{X_0}$ is the set of $X_0$-trees with infinitely many nodes of type~$X$.

\subparagraph*{Linear-Time Properties.}
We are particularly interested in sets of trees \emph{all whose branches} (more precisely, their associated sequences of types) satisfy an $\omega$-regular linear-time property $L \subseteq \Gamma^\omega$.
Given $L \subseteq \Gamma^\omega$, we write $\P_\B(L)$ for the probability that \emph{all} branches of a $\B$-tree satisfy~$L$.
\emph{Linear temporal logic (LTL)} formulas specify linear-time properties; see, e.g., \cite{Thomas90} for a definition of LTL.
An important example for us are formulas of the form $\finally T$, where $T \subseteq \Gamma$, which denotes the linear-time property $\{u X w \mid u \in \Gamma^*,\ X \in T,\ w \in \Gamma^\omega\}$.
Accordingly, $\P_\B(\finally T)$ denotes the probability that all branches of a $\B$-tree have a node whose type is in~$T$ (equivalently, the probability that a $\B$-tree has a finite prefix all whose leaves have a type in~$T$).

\subparagraph*{Automata.}
We use finite automata on infinite words over~$\Gamma$, where $\Gamma$ is the set of types of a BP.
We use \emph{deterministic parity automata (DPAs)}, \emph{deterministic B\"uchi automata (DBAs)}, \emph{nondeterministic B\"uchi automata (NBAs)}, and \emph{unambiguous B\"uchi automata (UBAs)}.
The definitions are standard; see, e.g., \cite{Thomas90}.
In the following we fix some terms and notation.
Let $\A = (Q, \Gamma, \delta, Q_0, F)$ be an NBA, where $Q$ is a finite set of states, $\Gamma$ is the alphabet, $\delta \subseteq Q \times \Gamma \times Q$ is the transition relation, $Q_0 \subseteq Q$ is the set of initial states, and $F \subseteq Q$ is the set of accepting states.
We write $q \xrightarrow{X} r$ to denote that $(q,X,r) \in \delta$.
A finite sequence $q_0 \xrightarrow{X_1} q_1 \xrightarrow{X_2} \cdots \xrightarrow{X_n} q_n$ is called a \emph{path} and can be summarized as $q_0 \Xrightarrow{X_1 \cdots X_n}* q_n$. 
An infinite sequence $q_0 \xrightarrow{X_1} q_1 \xrightarrow{X_2} \cdots$ is called a \emph{run} of $X_1 X_2 \cdots$.
We call the run \emph{accepting} if $q_0 \in Q_0$ and $q_i \in F$ holds for infinitely many~$q_i$.
The NBA~$\A$ \emph{accepts} (resp., \emph{rejects}) an infinite word $w \in \Gamma^\omega$ if $w$ has (resp., does not have) an accepting run in~$\A$.
The NBA~$\A$ is called an \emph{unambiguous B\"uchi automaton (UBA)} if every $w \in \Gamma^\omega$ has at most one accepting run.
An automaton~$\A$ defines $\omega$-regular linear-time properties $\{w \in \Gamma^\omega \mid \A \text{ accepts } w\}$ and $\{w \in \Gamma^\omega \mid \A \text{ rejects } w\}$.
In keeping with previous definitions, we write $\P_\B(\A \text{ accepts})$ (resp., $\P_\B(\A \text{ rejects})$) for the probability that \emph{all} branches of a $\B$-tree (more precisely, their associated sequences of types) are accepted (resp., rejected) by~$\A$.

\subparagraph*{Problems.}
We consider the following computational problems.
The problem $\P(\text{finite}) = 1$ asks, given a BP~$\B$ with $\varepsilon$-rules allowed, whether the probability that a $\B$-tree is finite is~$1$.
The problem $\P(\LTL) = 1$ asks, given a BP~$\B$ and an LTL formula~$\varphi$, whether $\P_\B(\varphi) = 1$.
The problems $\P(\DPA) = 1$ (resp., $\P(\NBA) = 1$) ask, given a BP~$\B$ and a DPA (resp., NBA)~$\A$, whether $\P_\B(\A \text{ accepts}) = 1$.
The problems $\P(\coNBA) = 1$ (resp., $\P(\coUBA) = 1$)\footnote{We do not explicitly define or use a notion of ``co-B\"uchi automata'' to avoid possible confusion about accepting/rejecting. If one were to do so, one would define a ``co-NBA''~$\A$ like an NBA~$\A$, but the ``co-NBA''~$\A$ would accept a word $w \in \Gamma^\omega$ if and only if $\A$ viewed as an NBA rejects~$w$. Similarly for ``co-UBAs''.} ask, given a BP~$\B$ and an NBA (resp., UBA)~$\A$, whether $\P_\B(\A \text{ rejects}) = 1$.
The problems $\P(\LTL) = 0,  \P(\DPA) = 0, \ldots$ are defined similarly, where ``${=}\,1$'' is replaced with ``${=}\,0$''.
\begin{table}
\begin{center}
\begin{tabular}{l|cc}
                     & $=1$                      & $=0$               \\ \hline
$\P(\text{finite})$  & in NC                                          \\
\cref{sec:as-finite} & \cref{prop:as-finite-NC}  &                    \\[2mm]
$\P(\DPA)$           & in NC                     & P                  \\
\cref{sec:DPA}       & \cref{thm:DPA-1}          & \cref{thm:DPA-0}   \\[2mm]
$\P(\NBA)$           & PSPACE                    & EXPTIME            \\
\cref{sec:NBA}       & \cref{thm:NBA-1}          & \cref{thm:NBA-0}
\end{tabular}
\hfill
\begin{tabular}{l|cc}
                     & $=1$                      & $=0$               \\ \hline
$\P(\coNBA)$         & PSPACE                    & EXPTIME            \\
\cref{sec:coNBA}     & \cref{thm:coNBA-1}        & \cref{thm:coNBA-0} \\[2mm]
$\P(\coUBA)$         & in NC                     &                    \\
\cref{sec:coUBA}     & \cref{prop:coUBA-1}       &                    \\[2mm]
$\P(\LTL)$           & PSPACE                    & 2EXPTIME           \\
\cref{sec:LTL}       & \cref{thm:LTL-1}          & \cref{thm:LTL-0}
\end{tabular}
\end{center}
\caption{Results and organization of the paper.
The complexity classes indicate completeness results, except ``in NC'', which only means membership in NC.}
\label{tab:map}
\end{table}
See \cref{tab:map} for a map of our results in those terms, as well as for an overview of the rest of the paper.
As explained in the introduction, the problem $\P(\LTL)=1$ 
is of particular interest from a model-checking point of view, and the technically most challenging one.

\subparagraph*{Complexity Classes.}
In addition to standard complexity classes between P and 2EXPTIME, we use the class NC, the subclass of P comprising those problems solvable in polylogarithmic time by a parallel random-access machine using polynomially many processors; see, e.g., \cite[Chapter 15]{Pap94}.
To prove membership in PSPACE in a modular way, we will use the following pattern:
\begin{lemma} \label{lem:PSPACE-transducer}
Let $P_1, P_2$ be two problems, where $P_2$ is in NC.
Suppose there is a reduction from $P_1$ to~$P_2$ implemented by a PSPACE transducer, i.e., a Turing machine whose work tape (but not necessarily its output tape) is PSPACE-bounded.
Then $P_1$ is in PSPACE.
\end{lemma}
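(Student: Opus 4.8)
The plan is to exploit the \emph{space}-efficiency of NC rather than merely its membership in P. The obstacle to overcome is that a PSPACE transducer $T$ may, on an input $x$ of length $n$, produce an output $y = T(x)$ whose length $N$ is exponential in $n$; we therefore cannot afford to write $y$ down and then hand it to a decision procedure for $P_2$. Instead I would decide whether $y \in P_2$ by a polynomial-space procedure that never stores $y$ explicitly, but recomputes individual symbols of $y$ on demand.

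First I would record the two quantitative facts I need. Since $T$ computes a (total) reduction it halts, and a halting machine using $\text{poly}(n)$ work space cannot repeat a work-tape configuration; as the output tape is write-only and does not affect transitions, $T$ runs for at most $2^{\text{poly}(n)}$ steps, so $N = |y| \le 2^{\text{poly}(n)}$ and any output position is an index of $\text{poly}(n)$ bits. Second, I would invoke the classical inclusion $\text{NC} \subseteq \bigcup_i \text{DSPACE}(\log^i)$, obtained by depth-first evaluation of a logspace-uniform polylog-depth polynomial-size circuit in space proportional to depth times $\log$ of size. Thus $P_2$ is decided by a machine $M_2$ that uses space $\text{polylog}$ in its input length and accesses that input by two-way read-only random access. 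Evaluated at input length $N = 2^{\text{poly}(n)}$, this bound becomes $\text{polylog}(N) = \text{poly}(n)$.

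Next I would compose the two. I simulate $M_2$ on the notional input $y$, allocating a $\text{poly}(n)$-size work tape for it. Whenever $M_2$ moves its input head to a position $i \le N$, I supply the symbol $y_i$ by re-running $T$ on $x$ from scratch, maintaining a counter that is incremented with each output symbol $T$ emits and halting the re-run as soon as the $i$-th symbol is produced. This subroutine uses $T$'s $\text{poly}(n)$ work space plus a counter of $\lceil \log N \rceil = \text{poly}(n)$ bits, and all of this space is reused across the (possibly exponentially many) queries; the index $i$ itself is held by $M_2$ within its $\text{poly}(n)$ work tape. Summing the tapes, the whole simulation runs in $\text{poly}(n)$ space and accepts exactly when $y \in P_2$, i.e.\ when $x \in P_1$; hence $P_1 \in \text{PSPACE}$.

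I expect the only genuine subtlety to be the one already highlighted, namely the exponential output length, which is handled by lazy recomputation of output symbols. This is precisely why membership of $P_2$ in P would not be enough (a polynomial-time procedure on an exponential-size input may consume exponential space) and why the hypothesis is stated with NC: polylogarithmic space on an exponential-size input is polynomial space in $n$, and everything else is routine bookkeeping.
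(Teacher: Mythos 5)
Your proposal is correct and follows essentially the same route as the paper: bound the transducer output by $2^{\mathrm{poly}(n)}$, use the fact that NC is contained in polylogarithmic space (the paper cites Borodin for this), and compose the two space-bounded computations by recomputing output symbols of the transducer on demand (the paper cites this standard composition technique from Papadimitriou rather than spelling it out). Your write-up simply makes explicit the bookkeeping the paper delegates to those references, including the correct observation that membership of $P_2$ in P alone would not suffice.
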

\begin{proof}
Note that the output of the transducer is (at most) exponential.
Problems in NC can be decided in polylogarithmic space~\cite[Theorem~4]{Borodin77}.
Using standard techniques for composing space-bounded transducers (see, e.g., \cite[Proposition~8.2]{Pap94}), it follows that $P_1$ is in PSPACE.
\end{proof}

\subparagraph*{Matrices.}
We use finite sets $S$ to index matrices $M \in \R^{S \times S}$ and vectors $v \in \R^S$.
The \emph{graph} of a nonnegative matrix $M \in [0, \infty)^{S \times S}$ is the directed graph $(S,E)$ with $E = \{(s,t) \in S \times S \mid M_{s,t} > 0\}$.
The \emph{spectral radius} of a matrix is the largest absolute value of its eigenvalues.
The following lemma allows to efficiently compare the spectral radius of a nonnegative matrix with~$1$.
\begin{lemma} \label{lem:determine-spectral-radius}
Given a nonnegative rational matrix~$M$, one can determine in NC whether $\rho < 1$ or $\rho = 1$ or $\rho > 1$, where $\rho$ denotes the spectral radius of~$M$.
\end{lemma}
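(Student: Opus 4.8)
The plan is to reduce the three-way comparison to two instances of the simpler predicate ``$\rho(N) < 1$?'' for nonnegative rational matrices~$N$, and to show that this predicate is in NC. For the predicate I would use the $M$-matrix characterisation: since $N \geq 0$, the matrix $I - N$ is a $Z$-matrix (all off-diagonal entries are $\leq 0$), and by the Fiedler--Pt\'ak theorem a $Z$-matrix is a nonsingular $M$-matrix if and only if all of its leading principal minors are positive; moreover $I - N$ is a nonsingular $M$-matrix precisely when $\rho(N) < 1$. Hence $\rho(N) < 1$ holds if and only if, for every $k$, the order-$k$ leading principal minor of $I - N$ is strictly positive. After clearing the common denominator, each such minor is the determinant of an integer matrix, and the determinant of an integer matrix — and hence its sign — can be computed in NC. This makes ``$\rho(N) < 1$?'' an NC predicate.

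To obtain the three outcomes I would apply this predicate twice. First I test $\rho(M) < 1$ directly; if it succeeds, the answer is $\rho < 1$. Otherwise $\rho \geq 1$, and it remains to separate $\rho = 1$ from $\rho > 1$. For this I choose a rational $c \in (0,1)$ with $1 - c$ inverse-exponentially small and test $\rho(cM) < 1$. Since $\rho(cM) = c\,\rho$, this predicate is equivalent to $\rho < 1/c$. Writing $1/c = 1 + \gamma$, if $\gamma$ is smaller than the smallest amount by which any admissible $\rho$ can exceed~$1$, then $\rho(cM) < 1 \iff \rho \leq 1$: indeed $\rho \leq 1$ gives $c\rho \leq c < 1$, while $\rho > 1$ forces $\rho \geq 1 + \gamma = 1/c$ and hence $c\rho \geq 1$. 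Combining, the trichotomy is decided as follows: $\rho < 1$ if the first test succeeds; $\rho > 1$ if the first fails and the second fails; and $\rho = 1$ if the first fails but the second succeeds. Both tests are in NC, and forming $c$ and $cM$ is elementary NC arithmetic.

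The crux is the existence of such a~$c$, that is, an inverse-exponential spectral gap: I claim that if $\rho > 1$ then in fact $\rho \geq 1 + 2^{-p}$ for some $p$ polynomial in the bit-size of~$M$. By Perron--Frobenius, $\rho$ is the largest real root of the characteristic polynomial $\chi(\lambda) = \det(\lambda I - M)$, which after scaling has integer coefficients of polynomially bounded bit-size. The substitution $\lambda = 1 + \mu$ yields an integer polynomial $\tilde\chi(\mu) = \chi(1 + \mu)$ of the same complexity, and if $\rho > 1$ then $\rho - 1$ is a positive root of~$\tilde\chi$. Cauchy's lower bound on the nonzero roots of an integer polynomial then gives $\rho - 1 \geq 2^{-p}$, the desired gap, and one sets $c := (1 + 2^{-p})^{-1}$, a rational of polynomial bit-size. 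The one point requiring care — and the main obstacle — is that $1$ may itself be an eigenvalue of~$M$, so that $\tilde\chi$ has $0$ as a root and the naive Cauchy bound degenerates; this is handled by first dividing out the maximal power of~$\mu$ dividing~$\tilde\chi$ and applying the bound to the remaining factor, whose smallest nonzero root still lower-bounds $\rho - 1$. With the gap established, the two NC $M$-matrix tests decide the trichotomy, giving membership in NC.
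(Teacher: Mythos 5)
Your argument is correct, but it takes a genuinely different route from the paper's. The paper's proof is essentially a citation: it takes the spectral-radius trichotomy algorithm of \cite[Proposition~2.2]{13EGK-IPL}, observes that its only non-NC ingredient is the solution of systems of linear equations (done there by Gaussian elimination), and replaces that step by the NC linear-system solver of \cite{BorodinGathenHopcroft82}; that underlying algorithm resolves even the borderline case $\rho=1$ exactly, so no perturbation is needed. You instead give a self-contained algorithm: the Fiedler--Pt\'ak characterization correctly reduces the predicate $\rho(N)<1$ to checking strict positivity of the $n$ leading principal minors of the $Z$-matrix $I-N$, each an NC determinant-sign computation after clearing (positive) denominators; and you separate $\rho=1$ from $\rho>1$ by rerunning this test on $cM$, where $c=(1+2^{-p})^{-1}$ and $2^{-p}$ is an inverse-exponential lower bound on $\rho-1$ whenever $\rho>1$, obtained via Cauchy's root bound applied to $\chi(1+\mu)$ after dividing out the maximal power of $\mu$ --- you rightly flag that $1$ may itself be an eigenvalue, which is exactly the degeneracy that the division handles, and since $\rho-1$ remains a nonzero root of the quotient, the bound applies. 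This perturbation step is the delicate part of your proof and it is genuinely needed: one cannot instead relax your minor test to nonnegativity to capture $\rho\le 1$, since for instance $M=\mathrm{diag}(1,2)$ has all leading principal minors of $I-M$ equal to $0$ while $\rho=2$. In exchange for this extra argument (and the exponentially-close-to-$1$ but polynomial-bit-size rational $c$ it introduces), your proof rests only on textbook M-matrix theory and standard NC determinant computation, rather than on the specific algorithm of \cite{13EGK-IPL}.
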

\begin{proof}
Use the algorithm from \cite[Proposition~2.2]{13EGK-IPL}, but not with Gaussian elimination as suggested there, but by solving the systems of linear equations described in \cite[Proposition~2.2]{13EGK-IPL} in~NC.
The latter is possible in NC~\cite[Theorem~5]{BorodinGathenHopcroft82}. 
\end{proof}

\section{Basic Results} \label{sec:basic-results}

In this section we develop the more basic results indicated in \cref{tab:map}, on finiteness (\cref{sec:as-finite}), deterministic parity automata (\cref{sec:DPA}), and B\"uchi automata (\cref{sec:NBA}), on the one hand rounding off the complexity map in \cref{tab:map}, and on the other hand building the foundation for more challenging results in the following sections.
In particular, \cref{prop:as-finite-NC} is indirectly used throughout the paper.

\subsection{Finiteness} \label{sec:as-finite}

In this section we consider BPs with $\varepsilon$-rules allowed, i.e., rules of the form $X \btran{} \varepsilon$.
Such BPs may generate finite trees.
We are interested in the \emph{almost-sure finiteness} problem, also denoted as $\P(\text{finite})=1$, i.e., the problem whether the probability that a given BP with $\varepsilon$-rules allowed generates a finite tree is equal to~$1$.
In \cref{prop:as-finite-NC} below we show that this problem is in~NC.
All upper bounds on the complexity of $\P(\cdot) = 1$ problems in this paper build directly or indirectly on this result.

While the almost-sure finiteness (or ``extinction'') problem has often been studied and is known to be in (strongly) polynomial time \cite{EtessamiYannakakis09,13EGK-IPL}, its membership in NC is, to the best of the authors' knowledge, new.
For instance, since linear programming is P-complete, one cannot use linear programming (as in \cite{EtessamiYannakakis09}) to show membership in~NC.
Nor can one directly use the strongly polynomial-time algorithm of~\cite{13EGK-IPL}, as it computes, in a sub-procedure, the set of types~$X$ for which there \emph{exists} a finite $X$-tree.
But the latter problem is P-complete.

For the rest of the section, fix a BP $\B = \BPtuple$ with $\varepsilon$-rules allowed.
Define a directed graph $G = (\Gamma,E)$ (i.e., the types of~$\B$ are the vertices of~$G$) with an edge $(X,Y) \in E$ if and only if $Y$ is a successor of~$X$ (i.e., there is a rule $X \btran{} u Y v$ for some $u,v \in \Gamma^*$).
Given a strongly connected component (SCC) $S \subseteq \Gamma$ of~$G$ and $X \in S$, define a BP $\B[S,X] = (S,\mathord{\btran{}_S},\Prob_S,X)$ obtained from~$\B$ by restricting the types to~$S$ and deleting on all right-hand sides of the rules those types not in~$S$.
The following lemma is straightforward:%
\begin{lemma} \label{lem:as-finite-SCC}
A $\B$-tree is infinite with positive probability if and only if there exist an SCC $S \subseteq \Gamma$ of~$G$ and $X \in S$ such that $X$ is reachable from~$X_0$ in~$G$ and a $\B[S,X]$-tree is infinite with positive probability.
\end{lemma}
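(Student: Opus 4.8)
The plan is to prove the two implications separately, relying throughout on two elementary facts. First, since every node of a $\B$-tree has only finitely many children, König's lemma makes ``a $\B$-tree is infinite'' equivalent to ``a $\B$-tree has an infinite branch'', and an infinite branch read off as its sequence of types is exactly an infinite path in~$G$ starting at~$X_0$. Second, SCCs enjoy a \emph{no-return} property: if a path in~$G$ leaves~$S$, it can never re-enter~$S$, because any vertex that is both reachable from~$S$ and able to reach~$S$ must itself lie in~$S$ by strong connectedness. A consequence I would record up front is that, in any $\B$-tree rooted at some $X \in S$, the $S$-typed nodes form a top-closed subtree whose branching is governed exactly by deleting the non-$S$ letters from each fired rule; since $\Prob_S$ carries over the rule probabilities, this subtree (the ``$S$-restriction'') is distributed precisely as a $\B[S,X]$-tree.

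For the backward direction, suppose such $S$ and $X \in S$ exist. I would first deduce that a $\B[X]$-tree is infinite with positive probability: its $S$-restriction is a $\B[S,X]$-tree by the coupling above, and if that subtree is infinite then so is the whole tree, whence $\P_X(\text{infinite}) \ge \P(\B[S,X]\text{-tree infinite}) > 0$. Then, because $X$ is reachable from~$X_0$ in~$G$, a fixed finite sequence of rules produces an $X$-node at bounded depth with positive probability; by the branching property the subtree below that node is an independent $\B[X]$-tree, infinite with positive probability, so the $\B[X_0]$-tree is infinite with positive probability.

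For the forward direction, assume a $\B[X_0]$-tree is infinite with positive probability, so with positive probability it has an infinite branch. Such a branch is an infinite path in the finite graph~$G$, hence (the condensation of~$G$ being acyclic) is eventually trapped in a single SCC, and by no-return it stays in that SCC from its first node therein. Writing $E_{S,X}$ for the event that some infinite branch first enters~$S$ at an $X$-typed node, the infinite-branch event is the union of the finitely many $E_{S,X}$, so $\P(E_{S,X}) > 0$ for some SCC~$S$ and some $X \in S$; this $X$ is reachable from~$X_0$ since $E_{S,X}$ forces the existence of an actual $X$-node. On $E_{S,X}$ the $S$-restriction of the subtree rooted at that $X$-node --- a $\B[S,X]$-tree --- is infinite, and it remains to conclude $\P(\B[S,X]\text{-tree infinite}) > 0$.

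The main obstacle is this last transfer. I would handle it by localization: ranging over the countably many possible addresses of the entry node~$v$, and using that the subtree below any $X$-node is an independent $\B[X]$-tree, a countable-union bound shows that if a $\B[S,X]$-tree were almost surely finite then $\P(E_{S,X})$ would be~$0$, contradicting $\P(E_{S,X}) > 0$. The finiteness of~$G$ is what lets us fix a single pair $(S,X)$ before performing this localization, and the no-return property is what makes the $S$-restriction a genuine $\B[S,X]$-tree rather than an arbitrary subgraph; together these turn the intuitive ``an infinite branch localizes to an SCC'' into the precise statement about $\B[S,X]$.
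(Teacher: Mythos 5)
Your proof is correct. Note that the paper itself offers no proof of this lemma---it is introduced with the remark ``The following lemma is straightforward''---so there is no official argument to compare against; what you have written is a complete version of the routine argument the authors evidently had in mind. All the right ingredients are present and correctly deployed: K\"onig's lemma (valid since right-hand sides are finite, so trees are finitely branching) to pass between ``infinite tree'' and ``infinite branch''; the no-return property of SCCs, which both makes the $S$-typed nodes of a tree rooted in~$S$ an ancestor-closed subtree distributed exactly as a $\B[S,X]$-tree and forces an infinite branch to remain in its eventual SCC from its first node there; the branching property, to paste an independent $\B[X]$-subtree (infinite with positive probability) below an $X$-node reached from~$X_0$; and the countable union over node addresses for the final transfer. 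Your coupling also silently accounts for why the statement lives in the setting of BPs with $\varepsilon$-rules allowed: a rule of~$\B$ whose right-hand side contains no type of~$S$ becomes an $\varepsilon$-rule of~$\B[S,X]$, i.e., a leaf of the $S$-restriction.

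One wording slip should be repaired. As literally defined, $E_{S,X}$ (``some infinite branch first enters~$S$ at an $X$-typed node'') does not imply that the branch stays in~$S$ afterwards---for instance, every infinite branch ``first enters'' the SCC of~$X_0$ at the root---so on the literal $E_{S,X}$ the $S$-restriction below the entry node need not be infinite, and the subsequent inference would fail. The fix is already contained in your preceding sentence: define $E_{S,X}$ as the event that some infinite branch is eventually trapped in~$S$ and first meets~$S$ at an $X$-typed node; by no-return this is equivalent to the branch lying in~$S$ from that node onwards, the finitely many events so defined still cover the infinite-branch event, and the rest of your localization argument goes through verbatim.
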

Let $M \in \Q^{\Gamma \times \Gamma}$ be the nonnegative $\Gamma \times \Gamma$-matrix with $M_{X,Y} = \sum_{X \btran{p} w} p |w|_Y$, where $|w|_Y \in \N_0$ is the number of occurrences of~$Y$ in~$w$.
That is, $M_{X,Y}$ is the expected number of direct $Y$-successors of the root of a $\B[X]$-tree.
By induction, $M^i$, the $i$th power of~$M$, is such that $(M^i)_{X,Y}$ is the expected number of $Y$-nodes that are exactly $i$ levels under the root of a $\B[X]$-tree.
The graph of~$M$ is exactly the previously defined graph~$G$.

Let $S \subseteq \Gamma$ be an SCC of~$G$.
Denote by $M_S \in \Q^{S \times S}$ the (square) principal submatrix obtained from~$M$ by restricting it to the rows and columns indexed by elements of~$S$.
Let $\rho_S$ denote the spectral radius of~$M_S$.
Call $S$ \emph{supercritical} if $\rho_S > 1$.
Call $S$ \emph{linear} if for all rules $X \btran{} w$ with $X \in S$ there is exactly one occurrence in~$w$ of a type in~$S$.
Observe that if $S$ is linear then $M_S$ is \emph{stochastic}, i.e., $M_S \vec{1} = \vec{1}$ where $\vec{1}$ is the all-$1$ vector, i.e., the element of $\{1\}^S$.
In that case, by the Perron-Frobenius theorem \cite[Theorem~2.1.4~(b)]{BermanP94}, we have $\rho_S = 1$ and, thus, $S$ is not supercritical.

The following characterization can be proved using \cite[Section~3]{13EGK-IPL} (which builds on \cite[Section~8.1]{EtessamiYannakakis09}):
\begin{restatable}{lemma}{lemasfinitenesschar} \label{lem:as-finiteness-char}
A $\B$-tree is infinite with positive probability if and only if there exist an SCC $S \subseteq \Gamma$ of~$G$ and $X \in S$ such that $X$ is reachable from~$X_0$ in~$G$ and $S$ is supercritical or linear.
\end{restatable}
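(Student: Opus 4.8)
The plan is to combine \cref{lem:as-finite-SCC} with a per-SCC dichotomy governed by the spectral radius $\rho_S$. By \cref{lem:as-finite-SCC}, a $\B$-tree is infinite with positive probability if and only if some reachable SCC $S$ and some $X \in S$ admit an infinite $\B[S,X]$-tree with positive probability. It therefore suffices to prove, for a fixed SCC $S$ and $X \in S$, that a $\B[S,X]$-tree is infinite with positive probability if and only if $S$ is supercritical or linear. Since $\B[S,X]$ is obtained by restricting $\B$ to $S$ (in the setting with $\varepsilon$-rules allowed, as deleting non-$S$ types may empty a right-hand side), its mean matrix is exactly $M_S$, which is irreducible because $S$ is an SCC of~$G$. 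The reachability condition is already provided by \cref{lem:as-finite-SCC}, so it need not be re-derived.

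First I would dispose of the linear case directly. If $S$ is linear, then every rule $X \btran{} w$ with $X \in S$ contains exactly one occurrence of a type in~$S$, so in $\B[S,X]$ each such rule becomes $X \btran{} Y$ for a single $Y \in S$. Hence every node of a $\B[S,X]$-tree has exactly one child, the tree is a single infinite branch with probability~$1$, and in particular it is infinite with positive probability. This is consistent with $M_S$ being stochastic and $\rho_S = 1$, as already observed before the statement.

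For the remaining cases I would invoke the classical extinction dichotomy for irreducible multi-type branching processes, as developed in \cite[Section~8.1]{EtessamiYannakakis09} and \cite[Section~3]{13EGK-IPL}. In the supercritical case $\rho_S > 1$, the extinction probability of $\B[S,X]$ is strictly less than~$1$, so a $\B[S,X]$-tree is infinite with positive probability. In the subcritical case $\rho_S < 1$, a first-moment argument suffices: the expected number of nodes at level~$i$ is $(M_S^i \vec{1})_X$, and since $\rho_S < 1$ these quantities are summable, so the expected total number of nodes is finite and the tree is finite almost surely. This settles everything except the critical case $\rho_S = 1$ with $S$ not linear.

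The main obstacle is precisely this critical, genuinely branching case, where the first-moment bound is inconclusive. Here $S$ not linear means that some $S$-type, reachable within the strongly connected $M_S$, has with positive probability a number of $S$-children different from~$1$, so the offspring distribution is non-degenerate. I must show that such an irreducible critical process goes extinct almost surely. This is the delicate part of branching-process theory, and it is exactly the content supplied by \cite[Section~3]{13EGK-IPL} (building on \cite[Section~8.1]{EtessamiYannakakis09}): I would cite their characterization of when the least fixed point of the offspring generating map equals~$\vec{1}$, which holds precisely when $\rho_S \le 1$ and $S$ is not linear. Combining the four cases---linear and supercritical yielding positive survival probability, subcritical and critical-non-linear yielding almost-sure finiteness---gives exactly the stated equivalence.
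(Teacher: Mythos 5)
Your overall architecture matches the paper's: reduce to a strongly connected SCC via \cref{lem:as-finite-SCC}, dispose of the linear case directly (your one-child argument is fine), and your elementary first-moment argument for $\rho_S<1$ is correct. The genuine gap is where you discharge the supercritical and, crucially, the critical non-linear case to the characterization in \cite[Section~3]{13EGK-IPL} and \cite[Section~8.1]{EtessamiYannakakis09}. Those results are stated for processes in which every type is \emph{mortal}, i.e., admits at least one finite tree; indeed, as the paper points out in \cref{sec:as-finite}, the algorithm of \cite{13EGK-IPL} computes the set of such types in a sub-procedure before its spectral characterization applies. Your proposal never verifies this hypothesis. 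In the critical case $\rho_S=1$ with $S$ not linear, you must first rule out \emph{immortal} types (types $X$ for which no finite $X$-tree exists), since an immortal type forces the extinction probability below~$1$ regardless of~$\rho_S$; showing that this situation cannot arise at $\rho_S\le 1$ is precisely the nontrivial content of the paper's proof, not something the cited sections hand you for free.

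The paper fills this hole as follows. If $S$ is not linear and strongly connected, then every type has, with positive probability, a finite tree prefix with at least two leaves of its own type; hence, if some type $X$ were immortal, an $X$-tree would almost surely accumulate ever more $X$-leaves, so the level sizes $Z_i$ tend to infinity almost surely. Fatou's lemma then gives $\lim_{i\to\infty}\Ex Z_i = \lim_{i\to\infty}(M_S^i\vec{1})_X = \infty$, and pairing the powers $M_S^i$ against a strictly positive Perron--Frobenius eigenvector of the irreducible matrix~$M_S$ forces $\rho_S>1$. Contrapositively, $\rho_S\le 1$ together with non-linearity implies that all types are mortal, and only then does the cited dichotomy yield almost-sure finiteness at $\rho_S=1$. (Your supercritical case is unharmed: if an immortal type exists, survival with positive probability is trivial by strong connectedness, and otherwise the citation applies.) Without some version of this immortality argument, your treatment of the critical case rests on a citation whose hypotheses you have not established.
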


It follows:

\begin{restatable}{proposition}{propasfiniteNC} \label{prop:as-finite-NC}
The problem $\P(\text{finite})=1$ is in NC.
\end{restatable}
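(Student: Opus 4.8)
The plan is to read off an NC decision procedure directly from \cref{lem:as-finiteness-char}. That lemma states that a $\B$-tree is infinite with positive probability if and only if some SCC $S$ of $G$ that is reachable from $X_0$ is supercritical or linear. Consequently $\P(\text{finite}) = 1$ holds if and only if \emph{no} SCC of $G$ reachable from $X_0$ is supercritical or linear, and since NC is closed under complementation it suffices to exhibit an NC test for this condition.

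First I would compute, in parallel, the matrix $M$ and the graph $G$. Each entry $M_{X,Y} = \sum_{X \btran{p} w} p\,|w|_Y$ is a sum of polynomially many products and is thus computable in NC, and $G$ is just the graph of $M$. Next I would compute the SCC decomposition of $G$ together with the set of types reachable from $X_0$; both reduce to computing the transitive closure of $G$ and are therefore standard NC computations. For each SCC $S$ I would then run two tests in parallel. Checking whether $S$ is \emph{linear} is purely combinatorial: for every rule $X \btran{} w$ with $X \in S$, count the occurrences in $w$ of types in $S$ and verify that this count equals $1$; this is clearly in NC. Checking whether $S$ is \emph{supercritical}, i.e. whether the spectral radius $\rho_S$ of the principal submatrix $M_S$ satisfies $\rho_S > 1$, is in NC by \cref{lem:determine-spectral-radius}. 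Finally, the output is ``yes'' exactly when no SCC $S$ reachable from $X_0$ passes either test, a conjunction over polynomially many SCCs that is again computable in NC. Since each of the finitely many phases lies in NC and the per-SCC tests are carried out as polynomially many independent NC subcomputations (each using polylogarithmic time and polynomially many processors), closure of NC under composition and under polynomial-fan-in combination yields the claimed NC bound.

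The real difficulty has already been absorbed into \cref{lem:as-finiteness-char}, so for this proposition there is no single hard step, only the verification that all primitives compose within NC. It is worth stressing \emph{why} the characterization is the crucial ingredient: the naive route via the strongly polynomial algorithm of \cite{13EGK-IPL} would require computing the set of types $X$ admitting \emph{some} finite $X$-tree, a problem that is P-complete and hence presumably not in NC. The characterization sidesteps this by reducing almost-sure finiteness to the two NC-computable SCC properties, supercriticality (a spectral-radius comparison handled by \cref{lem:determine-spectral-radius}) and linearity (a syntactic count), leaving only the routine task of confirming that reachability, SCC decomposition, and the spectral-radius test fit together inside NC.
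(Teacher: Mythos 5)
Your proposal is correct and follows essentially the same route as the paper's proof: both invoke the characterization of \cref{lem:as-finiteness-char}, compute the SCC decomposition and reachability from~$X_0$ within NC (the paper does this in NL), and then test each relevant SCC for supercriticality via \cref{lem:determine-spectral-radius} and for linearity by a syntactic count. The only cosmetic difference is that the paper reduces to the strongly connected case ``without loss of generality'' whereas you run the per-SCC tests explicitly in parallel, which amounts to the same NC computation.
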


\subsection{Deterministic Parity Automata} \label{sec:DPA}

In this section we consider deterministic parity automata (DPAs) on words.
In \cite[Section~3]{12CDK-MFCS} it was shown that the problem $\P(\DPA) = 1$ can be decided in polynomial time.
We improve this to membership in~NC.

By the following lemma we can check in NC whether a $\B$-tree almost surely has a finite prefix all whose leaves have types in a given set~$T$.
The proof is by reduction to almost-sure finiteness.
\begin{restatable}{lemma}{lemAFTone} \label{lem:AFT-1}
Given a BP $\B = \BPtuple$ and a set of types $T \subseteq \Gamma$, the problem whether $\P_{X_0}(\finally T) = 1$ is in NC.
\end{restatable}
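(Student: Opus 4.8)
The plan is to reduce $\P_\B(\finally T) = 1$ to almost-sure finiteness and then invoke \cref{prop:as-finite-NC}. Given $\B = \BPtuple$ and $T \subseteq \Gamma$, I would build a BP $\B'$ with $\varepsilon$-rules allowed that agrees with~$\B$ on every type outside~$T$ but makes each type in~$T$ \emph{terminal}: replace all rules whose left-hand side is some $X \in T$ by the single rule $X \btran{1} \varepsilon$, and keep all other rules unchanged. This is a purely local rewriting of the rule set, hence computable in~NC, and running the NC algorithm for $\P(\text{finite}) = 1$ on the poly-size output~$\B'$ is again an NC computation. So it suffices to prove
\[
 \P_\B(\finally T) \;=\; \P_{\B'}(\text{finite})\,,
\]
where $\P_{\B'}(\text{finite})$ is the probability that a $\B'$-tree is finite.

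The key is a correspondence at the level of finite prefixes. Since $\B$ has no $\varepsilon$-rules, in~$\B'$ only types in~$T$ can label leaves, so a $\B'$-tree is finite exactly when every branch reaches a $T$-node. Concretely, I would match the finite $\B'$-trees with the prefixes $t_0 \in \ftrees{\B}$ obtained by cutting each branch of a $\B$-tree at its \emph{first} $T$-node; these are precisely the finite $\Gamma$-labelled trees whose leaves all have types in~$T$ and whose internal nodes do not. For such a~$t_0$ the product $\prod_v \Prob(\mathit{rule}(v))$ over its non-leaf (hence non-$T$) nodes is simultaneously $\P_{\B'}(\Cyl{t_0})$ (the $T$-leaves are terminal, contributing factor~$1$) and $\P_\B(\Cyl{t_0})$, so the two agree.

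To finish I would observe that the cylinders $\Cyl{t_0}$ for distinct first-$T$ cuts~$t_0$ are pairwise disjoint, and that by the characterisation of $\finally T$ in \cref{sec:prelims} their union is exactly the event that all branches hit~$T$: by König's lemma a $\B$-tree has all branches hitting~$T$ iff its first-$T$ cut is a \emph{finite} prefix, in which case the tree lies in $\Cyl{t_0}$ for the unique such~$t_0$. Summing then gives $\P_\B(\finally T) = \sum_{t_0} \P_\B(\Cyl{t_0}) = \sum_{t_0} \P_{\B'}(\Cyl{t_0}) = \P_{\B'}(\text{finite})$, which is the displayed equality; membership in~NC follows from \cref{prop:as-finite-NC}.

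I expect the main obstacle to be the measure-theoretic bookkeeping in the last step: one must verify that the first-$T$ cut really induces a \emph{partition} of the event ``all branches hit~$T$'' into cylinders, and that the per-prefix probabilities coincide across $\B$ and~$\B'$ even though $\B$-trees keep growing below their $T$-nodes while $\B'$-trees are pruned there. The remaining ingredients --- the NC-computability of the pruning and the appeal to almost-sure finiteness --- are routine.
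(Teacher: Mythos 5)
Your proof is correct and follows essentially the same route as the paper: reduce $\P_{X_0}(\finally T)=1$ to almost-sure finiteness of a modified BP with $\varepsilon$-rules and invoke \cref{prop:as-finite-NC}. The only (immaterial) difference is the rewriting: the paper deletes all occurrences of $T$-types from right-hand sides, whereas you keep them and give each $X \in T$ the single rule $X \btran{1} \varepsilon$; your first-$T$-cut/K\"onig argument correctly fills in the measure-theoretic correspondence that the paper leaves implicit.
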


By combining \cref{lem:AFT-1} with results from~\cite{12CDK-MFCS} we obtain:
\begin{restatable}{theorem}{thmDPAone} \label{thm:DPA-1}
The problem $\P(\DPA) = 1$ is in NC.
\end{restatable}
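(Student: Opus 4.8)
The plan is to reduce, via a synchronous product, the acceptance condition of the DPA to a parity condition carried by the \emph{types} of a branching process, and then to characterise almost-sure acceptance of all branches by a combination of strongly-connected-component analysis and almost-sure-finiteness checks, each of which is in~NC.

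First I would form the product BP $\B \times \A$. Let $\A = (Q,\Gamma,\delta,q_0,\mathrm{col})$ be the (total, deterministic) DPA, where $\mathrm{col}$ assigns a priority to each state. The product is the BP over types $\Gamma \times Q$ whose start type is $(X_0,\delta(q_0,X_0))$ and whose rules let the unique DPA run track each branch: a rule $X \btran{p} X_1 \cdots X_k$ of~$\B$ induces, for every $q \in Q$, the rule $(X,q) \btran{p} (X_1, \delta(q,X_1)) \cdots (X_k, \delta(q,X_k))$. Since $\A$ is deterministic and total, each branch of a $(\B\times\A)$-tree carries exactly the DPA run of the corresponding branch of the $\B$-tree; assigning to each type $(X,q)$ the priority $\mathrm{col}(q)$, a branch of the $\B$-tree is accepted iff the maximal priority occurring infinitely often along the corresponding branch is even. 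The construction is purely local and of polynomial size, hence computable in~NC, and it preserves all transition probabilities. Thus $\P_\B(\A\text{ accepts}) = 1$ iff in the \emph{coloured} BP $\B\times\A$ the probability that every branch satisfies the max-parity condition is~$1$.

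It remains to decide, for a BP whose types carry priorities, whether a.s. every branch is parity-accepting. Writing $G$ for the successor graph of the product and, for a priority $c$, $L_c$ for the (directly computable) set of types of priority at most~$c$, I would use the following characterisation, whose correctness I would justify along the lines of~\cite{12CDK-MFCS}: it is \emph{not} the case that every branch is a.s.\ accepting iff there exist an \emph{odd} priority~$c$ and an SCC $S$ of the restricted graph $G[L_c]$ such that (i)~the maximal priority in~$S$ equals~$c$, (ii)~$S$ is reachable from the start type in~$G$, and (iii)~the BP obtained from the product by confining it to~$S$ (deleting from right-hand sides every type outside~$S$, which may create $\varepsilon$-rules) is \emph{not} almost surely finite. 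The equivalence rests on the fact that the set of types visited infinitely often by any branch induces a strongly connected subgraph, so its maximal priority is the branch's max-parity value; condition~(iii), by \cref{lem:as-finiteness-char} applied to the strongly connected confined BP, is exactly positive survival probability of an $S$-confined branch, equivalently $S$ being supercritical or linear, and routing such a surviving branch through a top-priority vertex of~$S$ infinitely often (possible since $S$ is strongly connected and survives) yields, with positive probability, a branch whose max-parity value is the odd number~$c$.

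Finally I would argue that this condition is decidable in~NC. The threshold sets $L_c$ are obtained by comparing priorities; SCC decomposition and reachability in~$G$ and in each $G[L_c]$ are in~NC; condition~(iii) is an almost-sure-finiteness query, in~NC by \cref{prop:as-finite-NC} (or directly by comparing a spectral radius with~$1$ via \cref{lem:determine-spectral-radius} together with a linearity test), and it can alternatively be cast as a $\finally T$ query handled by \cref{lem:AFT-1}. Crucially, the tests for the different priorities~$c$ and components~$S$ are \emph{independent}: because each $L_c$ is a direct threshold set rather than the output of a previous stage, there is no sequential dependency across priority levels, so the polynomially many component-tests run in parallel and their outcomes are combined by a single reachability test---keeping the whole procedure in~NC. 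I expect the main obstacle to be exactly this last point, namely turning the naturally sequential, colour-by-colour parity recursion of~\cite{12CDK-MFCS} into a parallel computation; the observation that makes it work is that confinement to $L_c$ and the survival test for each candidate SCC depend only on the priority threshold and the fixed product graph, not on the outcome of higher priorities.
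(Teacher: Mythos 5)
Your proposal is correct and shares the paper's skeleton---form the priority-coloured product of the BP and the DPA (the paper notes this is computable even in logarithmic space) and then reduce the parity condition to polynomially many \emph{independent} NC tests---but it takes a genuinely different route through the characterization step. The paper's proof simply quotes \cite[Section~3.1]{12CDK-MFCS}: almost-sure acceptance holds if and only if every reachable odd-priority type $X$ of the product satisfies $\P_X(\finally N_X)=1$, where $N_X$ is an NL-computable set of types, and each such query is discharged by \cref{lem:AFT-1}. You instead re-derive from scratch the complementary, existential characterization---an odd priority $c$ and an SCC $S$ of the threshold graph $G[L_c]$ with maximal priority exactly~$c$, reachable from the start type, whose $S$-confined sub-BP survives with positive probability---and discharge the survival tests via \cref{lem:as-finiteness-char} (equivalently \cref{prop:as-finite-NC}). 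Both directions of your equivalence check out: the inf-set of a rejecting branch is strongly connected, lies in $G[L_c]$, and forces such an~$S$; conversely, a surviving confined tree yields a branch eventually confined to~$S$. The one non-elementary step is your ``routing'' claim, that positive survival probability of the strongly connected confined BP yields, with positive probability, a branch visiting a priority-$c$ type infinitely often---a surviving branch cannot literally be steered, so this is a genuinely probabilistic fact; it is precisely \cite[Lemma~11]{12CDK-MFCS}, the same subtlety the paper itself flags as concern~(a) in its discussion of \cref{lem:Bdet}, so your appeal to \cite{12CDK-MFCS} there is a legitimate citation rather than a gap. Complexity-wise the two routes buy the same thing: in both, the candidate sets are produced by reachability/SCC computations (NL $\subseteq$ NC) and the seemingly sequential colour-by-colour recursion of \cite{12CDK-MFCS} collapses into parallel tests (the paper's per-type $\finally N_X$ queries are just as independent as your per-$(c,S)$ queries). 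Your version is more self-contained, making the SCC structure and the role of \cref{lem:as-finiteness-char} explicit; the paper's is shorter by outsourcing the characterization wholesale.
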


The hardness result in the following theorem highlights the different complexities of $\P(\cdot) = 0$ and $\P(\cdot) = 1$ problems in this paper.

\begin{restatable}{theorem}{thmDPAzero} \label{thm:DPA-0}
The problem $\P(\DPA) = 0$ is P-complete.
It is P-hard even for deterministic B\"uchi automata with two states, the accepting state being a sink.
\end{restatable}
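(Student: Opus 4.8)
The plan is to prove the two halves separately: membership in P by a product construction that reduces the question to a positivity problem on a single branching process, and P-hardness by encoding the monotone circuit value problem into the property $\finally T$.

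\emph{Membership in P.} Given a BP $\B$ over $\Gamma$ and a DPA $\A = (Q,\Gamma,\delta,q_0,\ldots)$, I would form the product BP $\B' = \B\times\A$ over the type set $\Gamma\times Q$: a node of type $(X,q)$ firing a rule $X\btran{p} X_1\cdots X_k$ of $\B$ produces children $(X_1,\delta(q,X_1)),\ldots,(X_k,\delta(q,X_k))$, which is well defined because $\A$ is deterministic. Each branch of a $\B'$-tree records a branch of the underlying $\B$-tree together with the unique run of $\A$ on it, so that the branch is accepting iff the priorities of its $Q$-components satisfy the parity condition. Hence $\P_\B(\A\text{ accepts}) = \P_{\B'}(\text{all branches satisfy the parity condition})$, and $\P(\DPA)=0$ amounts to deciding whether the value $a_{Y_0}=0$, where $Y_0=(X_0,\delta(q_0,X_0))$ is the product start type and $a_Y := \P_{\B'[Y]}(\text{all branches accepting})$. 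The $a_Y$ obey the local equations $a_Y = \sum_{Y\btran{p} Z_1\cdots Z_k} p\prod_i a_{Z_i}$ subject to the global parity constraint, and the qualitative values of all $a_Y$ (whether $0$, $1$, or strictly between) can be computed in polynomial time by the nested-fixpoint analysis of \cite{12CDK-MFCS} that underlies their $\P(\DPA)=1$ result; in particular the set $\{Y : a_Y = 0\}$ is obtained in P, which decides the problem.

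\emph{P-hardness: the key observation.} A two-state DBA whose accepting state is a sink recognizes exactly a property $\finally T$. Indeed, with initial (rejecting) state $q_0$, accepting sink $q_{\mathrm{acc}}$, and transitions $q_0\xrightarrow{X}q_{\mathrm{acc}}$ for $X\in T$ and $q_0\xrightarrow{X}q_0$ for $X\notin T$, a branch is accepted iff it ever reads a type in $T$. Thus for such automata $\P_\B(\A\text{ accepts}) = \P_\B(\finally T)$, and it suffices to show that deciding $\P_\B(\finally T)=0$ is P-hard.

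\emph{P-hardness: the reduction.} I would reduce from the monotone circuit value problem, which is P-complete \cite{Pap94}. Introduce one type per gate. A TRUE input $g$ gets the self-loop $g\btran{1} g$ with $g\in T$; a FALSE input $g$ gets $g\btran{1} g$ with $g\notin T$; an AND-gate $g=g_1\wedge g_2$ gets the single rule $g\btran{1} g_1 g_2$ with $g\notin T$; an OR-gate $g=g_1\vee g_2$ gets $g\btran{1/2} g_1$ and $g\btran{1/2} g_2$ with $g\notin T$; the start type is the output gate. Writing $p_g := \P_{\B[g]}(\finally T)$, independence of the two subtrees and $g\notin T$ at internal gates give $p_g = p_{g_1}p_{g_2}$ for AND-gates and $p_g = \tfrac12(p_{g_1}+p_{g_2})$ for OR-gates, while $p_g=1$ for TRUE and $p_g=0$ for FALSE inputs; a straightforward induction shows $p_g>0$ iff $g$ evaluates to TRUE. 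Hence $\P_\B(\finally T)=0$ iff the circuit outputs FALSE. Since the reduction is logarithmic-space computable and P is closed under complement, $\P_\B(\finally T)=0$ is P-hard, and therefore so is $\P(\DPA)=0$, already for two-state DBAs with accepting sink (which are in particular DPAs). Together with membership in P this gives P-completeness.

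\emph{Main obstacle.} The hardness direction becomes routine once one sees that accepting-sink DBAs capture precisely $\finally T$ and that OR- and AND-gates are realised by probabilistic and deterministic branching; the only care needed is the base case for inputs and the fact that a shared gate-type may appear on several right-hand sides (harmless, as the circuit is a DAG). The more delicate point will be the membership argument: one must verify that the polynomial-time machinery of \cite{12CDK-MFCS}, designed for the almost-sure ($=1$) question, indeed yields the positivity set $\{Y:a_Y>0\}$ — equivalently that the nested fixpoint for ``positive probability that all branches accept'' is the one solvable in P. This is where I expect to spend most of the verification effort.
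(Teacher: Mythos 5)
Your proposal is correct and follows essentially the same route as the paper: the hardness half is the paper's exact reduction from the monotone circuit value problem (AND-gates realised by tree branching, OR-gates by uniform probabilistic branching, inputs by self-loops, and a two-state accepting-sink DBA recognising $\finally T$), with the same inductive positivity argument. For membership, the paper simply cites \cite[Theorem~15]{12CDK-MFCS}, which establishes that $\P(\DPA)=0$ is in P directly, so the verification effort you anticipated for extracting the positivity set from the $=1$ machinery is not needed.
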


\subsection{B\"uchi Automata} \label{sec:NBA}

\begin{theorem} \label{thm:NBA-1}
The problem $\P(\NBA) = 1$ is PSPACE-complete.
\end{theorem}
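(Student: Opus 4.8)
The statement has two halves: membership in PSPACE and PSPACE-hardness, and the plan is to treat them separately, with membership being the substantial part.

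For membership, the plan is to reduce $\P(\NBA)=1$ to the unambiguous problem $\P(\coUBA)=1$, which is in NC by \cref{prop:coUBA-1}, and then invoke \cref{lem:PSPACE-transducer}. The key observation is purely event-theoretic: for every tree~$t$, the event ``all branches of~$t$ are accepted by~$\A$'' is literally the event ``all branch label-sequences of~$t$ lie in $L(\A)$''. Hence if $\mathcal{U}$ is any automaton with $L(\mathcal{U}) = \Gamma^\omega \setminus L(\A)$, then ``$\mathcal{U}$ rejects~$w$'' is equivalent to ``$w \in L(\A)$'', so that $\P_\B(\A \text{ accepts}) = \P_\B(\mathcal{U} \text{ rejects})$ as an identity of probabilities, for \emph{every} BP~$\B$. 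Consequently $(\B,\A)$ is a positive instance of $\P(\NBA)=1$ if and only if $(\B,\mathcal{U})$ is a positive instance of $\P(\coUBA)=1$, provided $\mathcal{U}$ is unambiguous.

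The crux is therefore to construct, from an NBA~$\A$ with $n$ states, an \emph{unambiguous} B\"uchi automaton $\mathcal{U}$ for the complement $\Gamma^\omega \setminus L(\A)$ of single-exponential size, so that the map $(\B,\A) \mapsto (\B,\mathcal{U})$ is computable by a PSPACE transducer in the sense of \cref{lem:PSPACE-transducer}. I would determinize $\A$ (Safra/Piterman) into a deterministic parity automaton of size $2^{O(n\log n)}$, complement it (for parity automata this leaves the state space unchanged, merely shifting priorities), and finally convert the resulting deterministic parity automaton into an equivalent UBA with only a polynomial further blow-up. Determinism makes the run on each word unique, so one can guess the dominating priority seen infinitely often together with the position past which smaller priorities no longer occur, and exactly one such guess is correct; this yields unambiguity. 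Each state of $\mathcal{U}$ is described by polynomial-size data over the states of~$\A$, and its transition relation is computable in polynomial space, so the exponential-size $\mathcal{U}$ is produced by a PSPACE transducer. Composing this reduction with the NC algorithm for $\P(\coUBA)=1$ via \cref{lem:PSPACE-transducer} gives $\P(\NBA)=1 \in \text{PSPACE}$. I expect the parity-to-UBA conversion to be the main obstacle: one must keep the overall size single-exponential, ensure genuine unambiguity, and verify that the transitions are PSPACE-computable.

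For hardness, the plan is a reduction from the (classically PSPACE-complete) universality problem for NBAs. Given an NBA~$\A$ over $\Gamma = \{c_1,\dots,c_k\}$, introduce a fresh symbol~$\$$ and build a BP~$\B$ over $\Gamma \cup \{\$\}$ with start type~$\$$, the single rule $\$ \btran{1} c_1 \cdots c_k$, and one rule $c_i \btran{1} c_1 \cdots c_k$ for each~$i$. Since every type has a unique rule, $\B$ generates a single tree with probability~$1$, and the set of its branch label-sequences is exactly $\{\$\,w \mid w \in \Gamma^\omega\}$. Modifying~$\A$ into an NBA $\A'$ over $\Gamma \cup \{\$\}$ that first reads~$\$$ (passing to the initial states of~$\A$) and thereafter behaves like~$\A$, we obtain that $\A'$ accepts $\$\,w$ iff $\A$ accepts~$w$. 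Hence $\P_\B(\A' \text{ accepts}) = 1$ iff $\A$ accepts every $w \in \Gamma^\omega$, i.e.\ iff $\A$ is universal. As this is a logspace reduction, $\P(\NBA)=1$ is PSPACE-hard, completing the proof.
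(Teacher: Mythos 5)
Your proposal is correct, and the hardness half is essentially the paper's argument spelled out: the paper simply cites PSPACE-hardness of model checking Markov chains against NBAs \cite{Vardi85} and of transition systems against NBAs (via NBA universality \cite{SistlaVW87}), and your BP with the fresh symbol~$\$$ and one rule per type is exactly the ``easy'' transition-system reduction the paper alludes to. The membership half, however, takes a genuinely different (and more roundabout) route. The paper also determinizes the NBA via Piterman's construction, observing that it is computable by a PSPACE transducer, but then it stops there: it invokes \cref{thm:DPA-1}, which places $\P(\DPA)=1$ in NC, and concludes by \cref{lem:PSPACE-transducer}. No complementation and no unambiguous automaton is needed for NBA specifications. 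You instead complement the DPA by a priority shift and convert it to a UBA for the complement language, reducing to $\P(\coUBA)=1$ via \cref{prop:coUBA-1}. This is sound --- your event-theoretic identity $\P_\B(\A\text{ accepts}) = \P_\B(\mathcal{U}\text{ rejects})$ is correct, and the parity-to-UBA conversion does work with polynomial blow-up --- but note two points you should nail down: past the guessed jump point it is priorities \emph{larger} than the guessed dominating even priority that must no longer occur (you wrote ``smaller''), and genuine unambiguity requires fixing a \emph{canonical} jump position (e.g., forcing the jump to occur immediately after an occurrence of a higher priority, or at position~$0$ if none ever occurs), since otherwise every late-enough jump yields a distinct accepting run. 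In exchange for this extra work your route buys nothing for this theorem, since once a DPA is in hand the problem is already in NC by \cref{thm:DPA-1}; the coUBA machinery is what the paper reserves for \cref{thm:LTL-1}, where one translates the negated LTL formula \emph{directly} to a single-exponential UBA and thereby avoids determinization (and its second exponential) altogether --- applying your determinize-then-disambiguate pipeline there would not give PSPACE.
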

\begin{proof}
PSPACE-hardness is immediate in two different ways.
It follows from the PSPACE-hardness of model checking Markov chains against NBAs~\cite{Vardi85}.
It also follows from the PSPACE-hardness of model checking transition systems against NBAs. 
(The latter follows easily from the PSPACE-hardness of NBA universality~\cite{SistlaVW87}.)
Both model-checking problems are special cases of $\P(\NBA) = 1$.

Towards membership in PSPACE, we use a translation from NBA to DPA~\cite{Piterman07}.
This translation causes an exponential blow-up, but an inspection of the construction~\cite[Section~3.2]{Piterman07} reveals that it can be computed by a PSPACE transducer.
By \cref{thm:DPA-1} the problem $\P(\DPA) = 1$ is in NC.
By \cref{lem:PSPACE-transducer} it follows that $\P(\NBA) = 1$ is in PSPACE.
\end{proof}

\begin{restatable}{theorem}{thmNBAzero} \label{thm:NBA-0}
The problem $\P(\NBA) = 0$ is EXPTIME-complete.
It is EXPTIME-hard even for NBAs whose only accepting state is a sink.
\end{restatable}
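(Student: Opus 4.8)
The plan is to prove the two directions separately: membership in EXPTIME by determinization, and EXPTIME-hardness by adapting the lower-bound technique of \cite{EtessamiYannakakis12}, already in the restricted form where the only accepting state is a sink.

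For the upper bound I would reduce $\P(\NBA)=0$ to $\P(\DPA)=0$. Given an NBA~$\A$, apply a determinization procedure (e.g.\ \cite{Piterman07}) to obtain a DPA~$\A'$ with $L(\A')=L(\A)$; this costs an exponential blow-up, so $\A'$ has exponentially many states in~$|\A|$. Since whether a branch is accepted depends only on the recognized language, $\P_\B(\A\text{ accepts})=\P_\B(\A'\text{ accepts})$ for every BP~$\B$, whence $\P_\B(\A\text{ accepts})=0$ iff $\P_\B(\A'\text{ accepts})=0$. By \cref{thm:DPA-0} the problem $\P(\DPA)=0$ is decidable in time polynomial in~$|\A'|$, hence in time exponential in~$|\A|$, placing $\P(\NBA)=0$ in EXPTIME. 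It is worth noting the contrast with \cref{thm:NBA-1}: there $\P(\DPA)=1$ is in NC, so the exponential determinization can be absorbed by a PSPACE transducer via \cref{lem:PSPACE-transducer}; here $\P(\DPA)=0$ is only P-complete, so the very same route yields EXPTIME rather than PSPACE, and one should not expect to do better.

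For the lower bound I would reduce from an EXPTIME-complete problem with built-in alternation, such as acceptance of an alternating polynomial-space (equivalently, linear-bounded) Turing machine~$M$ on input~$x$, adapting the construction of \cite{EtessamiYannakakis12}. The guiding correspondence is that the alternating computation of~$M$ is simulated by a random $\B$-tree: a \emph{universal} configuration is encoded by genuine tree-branching of the BP, so that \emph{all} resulting branches, hence all universal successors, must be accepted; whereas an \emph{existential} configuration is encoded by a \emph{probabilistic} choice among successors, so that through the schematic identity $\P_\B(\text{all accepted})=\sum_i p_i\,\P(\text{all accepted from }s_i)$, having \emph{positive} probability that all branches are accepted corresponds to \emph{some} existential successor succeeding. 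Thus the $\wedge/\vee$-alternation of~$M$ is matched by the tree-branching versus probabilistic-choice structure of~$\B$ under the qualitative test ``$\P_\B(\A\text{ accepts})>0$''. The NBA~$\A$, of size polynomial in~$M$ and~$x$, reads each branch (a sequence of types spelling out a computation) and uses its nondeterminism to \emph{guess} a witness routing the run into a single accepting sink; exactly as in the RMC picture where a run is a depth-first traversal of such a tree and the stack stores the tape, the tape content is carried by the tree/stack discipline rather than copied, so that consistency of successive configurations is verifiable by a polynomial-size nondeterministic automaton. Since $\B$ and~$\A$ have polynomial size and $\P_\B(\A\text{ accepts})>0$ iff $M$ accepts~$x$, and since EXPTIME is closed under complement, this makes $\P(\NBA)=0$ EXPTIME-hard; by construction all acceptance is funneled through one sink state, giving the stated restriction.

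The main obstacle will be the faithfulness of this encoding, and specifically reconciling the polynomial size of~$\A$ with the need to check consistency between successive configurations. A polynomial-size BP cannot copy a whole configuration, and an automaton that \emph{enforces} consistency deterministically would be exponential; the resolution, and the heart of the adapted argument, is to use the tree/stack to hold the unchanged part of the tape and to let the nondeterminism of~$\A$ guess where a violation occurs, so that consistency checking is of co-safety flavour and fits a polynomial-size accepting-sink NBA. Getting the polarity exactly right---ensuring that spurious (inconsistent) generations neither make ``$\P_\B(\A\text{ accepts})>0$'' hold vacuously nor require complementation to be performed inside the reduction---is the delicate point, and is precisely where the quantitative $\P(\cdot)>0$ semantics does the work that would otherwise demand an exponentially larger automaton, consistent with the exponential determinization invoked in the upper bound.
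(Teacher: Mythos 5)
Your upper bound is correct and is exactly the paper's argument: determinize via \cite{Piterman07} to an exponential-size DPA and invoke \cref{thm:DPA-0}; your remark on why the \cref{lem:PSPACE-transducer} route from \cref{thm:NBA-1} cannot be recycled here is apt. The skeleton of your lower bound also matches the paper's: reduce from acceptance of a linear-bounded alternating Turing machine~$M$, encode universal states as tree branching and existential states as probabilistic choice, so that ``$\P_\B(\A \text{ accepts})>0$'' captures the existence of a winning strategy, and conclude by closure of EXPTIME under complement.

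The gap is in the mechanism you propose for consistency checking, and it fails in two ways. First, there is no stack: unlike in a recursive Markov chain, where a run is a depth-first traversal and the tape can be carried by the recursion, here the word automaton reads only the sequence of types along a single branch, so no ``tree/stack discipline'' holds the tape. The paper's construction accordingly never writes the tape onto the branch at all: after each transition the BP guesses the letter under the head uniformly at random, and a branch records only head positions, control states, guessed letters, and transitions. Second, and more seriously, your polarity is wrong. An NBA whose nondeterminism ``guesses where a violation occurs'' accepts precisely the \emph{inconsistent} branches; with such an automaton, $\P_\B(\A \text{ accepts})>0$ holds vacuously, since with positive probability the BP guesses violating letters on every branch of some finite prefix, so the reduction for $\P(\NBA)=0$ breaks, and it cannot be repaired without complementing the NBA, which is exponential. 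That violation-guessing polarity is exactly what the paper uses for the \emph{co}-problem, \cref{thm:coNBA-0}, where one needs all branches \emph{rejected}. For \cref{thm:NBA-0} the paper instead makes a polynomial-size NBA accept the \emph{good} branches by distributing the consistency check: the NBA has $n$ initial states, one per tape cell, and the run started at cell~$i$ \emph{deterministically} tracks the content of that cell, dying if the branch's guessed letter at cell~$i$ ever disagrees; upon reaching $\sacc$ the BP spawns $n$ children of types $(1,\chk),\ldots,(n,\chk)$, and the run tracking cell~$i$ moves to the accepting sink~$f$ exactly on the branch ending in $(i,\chk)$. Thus a branch is accepted iff the computation along it is consistent with respect to its designated cell, and \emph{all} branches are accepted iff the whole computation tree is consistent and accepting: the universal quantification over branches, not nondeterministic guessing of a violation, is what lets a polynomial NBA enforce an exponential-size consistency condition. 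You correctly flagged the polarity as ``the delicate point'', but the resolution you sketch is the one that founders on it.
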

\begin{proof}
Towards membership in EXPTIME, an NBA can be translated, in exponential time, to a DPA of exponential size; see, e.g., \cite{Piterman07}.
Since $\P(\DPA) = 0$ is in P by \cref{thm:DPA-0}, it follows that $\P(\NBA) = 0$ is in EXPTIME.

Concerning EXPTIME-hardness, we adapt the proof (in the online appendix) of \cite[Theorem 17]{EtessamiYannakakis12} on model checking \emph{recursive Markov chains} against NBAs.
The details are in \cref{app:NBA-0}.
\end{proof}

\section{Co-B\"uchi Automata} \label{sec:coNBA}

In this section we consider the problem $\P(\coNBA) = 1$, which asks, given a BP~$\B$ and a B\"uchi automaton~$\A$, whether $\B$ almost surely generates a tree whose branches are all rejected by~$\A$; i.e., whether $\P_{\B}(\A \text{ rejects}) = 1$.
Dually, one might ask whether the probability is positive that a $\B$-tree has a branch accepted by~$\A$.
Intuitively, we view the B\"uchi automaton~$\A$ as specifying ``bad'' branches, and we would like the tree almost surely not to have any bad branches.

This problem is in PSPACE, which can be shown via a translation to DPAs, as in \cref{thm:NBA-1}.
However, with a view on the following sections, in particular on LTL specifications, we pursue a different approach to the problem $\P(\coNBA) = 1$.
In this section we lay the groundwork for arbitrary B\"uchi automata~$\A$.
By building on these results, we will show in the next section that if $\A$ is \emph{unambiguous} then the problem is in~NC, which will allow us to derive our headline result, namely that $\P(\LTL)=1$ is in PSPACE.

Let $\B = \BPtuple$ be a BP and $\A = (Q, \Gamma, \delta, Q_0, F)$ a (not necessarily unambiguous) B\"uchi automaton.

Define a B\"uchi automaton, $\A \times \B$, by $\A \times \B  := (Q \times \Gamma, \Gamma, \delta_{\A \times \B}, Q_0 \times \{X_0\}, F \times \Gamma)$, where
\begin{align*}
 &\delta_{\A \times \B}((q_1, X_1), X_2)
 =\begin{cases}
 \delta(q_1, X_1) \times \{X_2\} & \text{if $X_2$ is a successor of~$X_1$}\\
 \emptyset                       & \text{otherwise.}
\end{cases}
\end{align*}

The remainder of the section is organized as follows.
In \cref{sub:UBA-X1-f} we show that the problem $\P(\coNBA) = 1$ reduces to the analysis of certain SCCs within $\A \times \B$.
In \cref{sub:Bdet} we introduce a key lemma, \cref{lem:Bdet}, which allows us to ``forget'' about the distinction between accepting and non-accepting states:
the lemma reduces $\P(\coNBA) = 1$ to a pure reachability problem in an exponential-sized BP, $\Bdet$.
This leads us to prove PSPACE-completeness of $\P(\coNBA) = 1$, but more importantly, \cref{lem:Bdet} plays a key role in the rest of the paper.
We prove it in \cref{sub:Bdet-proof}.

\subsection{The Automaton \texorpdfstring{$\A[f,X_f]$}{A[f,Xf]}} \label{sub:UBA-X1-f}

For any $(f,X_f) \in F \times \Gamma$ on a cycle of the transition graph of $\A \times \B$, define the B\"uchi automaton \[\A[f,X_f] \ := \ (\{\bar{q}_0\} \cup Q[f,X_f], \Gamma, \delta[f,X_f], \{\bar{q}_0\}, \{(f,X_f)\})\] as the B\"uchi automaton obtained from $\A \times \B$ by
\begin{enumerate}
\item making $(f,X_f)$ the only accepting state,
\item restricting the set of states, $Q[f,X_f] \subseteq Q \times \Gamma$, to those $(q,X)$ that, in the transition graph of~$\A \times \B$, are reachable from~$(f,X_f)$ and can reach~$(f,X_f)$, i.e., those $(q,X)$ in the SCC containing $(f, X_f)$,
\item restricting the transition function $\delta[f,X_f]$ accordingly, i.e., \[\delta[f,X_f]((q,X),Y) \ := \ \delta_{\A \times \B}((q,X),Y) \cap Q[f,X_f]\,,\]
\item making $\bar{q}_0$ the only initial state, and
\item setting $\delta[f,X_f](\bar{q}_0,X_f) := \{(f,X_f)\}$ and $\delta[f,X_f](\bar{q}_0,X) := \emptyset$ for all $X \in \Gamma \setminus \{X_f\}$.
\end{enumerate}

The following lemma follows from the pigeonhole principle and basic probability arguments:%
\begin{restatable}{lemma}{lemUBAXonef} \label{lem:UBA-X1-f}
The probability that some branch of a $\B$-tree is accepted by~$\A$ is positive if and only if
there are $q_0 \in Q_0$ and $f \in F$ and $X_f \in \Gamma$ such that $(f, X_f)$ is reachable from~$(q_0, X_0)$ in the transition graph of $\A \times \B$ and the probability that some branch of a $\B[X_f]$-tree is accepted by~$\A[f,X_f]$ is positive.
\end{restatable}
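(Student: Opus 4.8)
The plan is to reduce everything to the product automaton $\A \times \B$ via the following correspondence, which is immediate from the definition of $\delta_{\A \times \B}$: a branch $X_0 X_1 X_2 \cdots$ of a $\B$-tree is accepted by~$\A$ if and only if there is $q_0 \in Q_0$ such that the product run started in $(q_0, X_0)$ and reading the branch passes through states $(q_i, X_i)$ with $q_i \in F$ for infinitely many~$i$. Indeed, the $\A$-component of this product run is exactly a run of~$\A$ on $X_0 X_1 X_2 \cdots$, while the $\Gamma$-component is forced to follow a genuine branch of the BP. Both directions of the lemma are then stated in terms of this product run.

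For the ($\Rightarrow$) direction I would start from the positive-probability event that some branch is accepted by~$\A$. On each accepted branch the associated product run visits $F \times \Gamma$ infinitely often, and since $\A \times \B$ has finitely many states, the pigeonhole principle yields some $(f, X_f) \in F \times \Gamma$ that is visited infinitely often along it. As there are only finitely many triples $(q_0, f, X_f)$, at least one of them carries positive probability, so with positive probability a branch admits a product run from $(q_0, X_0)$ hitting $(f, X_f)$ infinitely often; this already gives reachability of $(f, X_f)$ from $(q_0, X_0)$. To obtain the local condition, observe that from the first node~$v$ of type~$X_f$ at which the product run is in state $(f, X_f)$, the tail of the run stays inside the SCC of $(f, X_f)$ and still visits $(f, X_f)$ infinitely often; hence the subtree rooted at~$v$, which is a $\B[X_f]$-tree, has a branch accepted by $\A[f,X_f]$. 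Finally I would transfer positivity: if $\P_{\B[X_f]}(\A[f,X_f] \text{ accepts some branch})$ were~$0$, then, using that the subtree at each node of type~$X_f$ is an independent copy of a $\B[X_f]$-tree, a countable union bound over all nodes would force the probability of the event just described to be~$0$, a contradiction.

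For the ($\Leftarrow$) direction, reachability of $(f, X_f)$ from $(q_0, X_0)$ provides a finite product path, hence a finite type path $X_0 \cdots X_n$ with $X_n = X_f$ that the BP realizes with positive probability (each successor step fires a rule of positive probability) and along which $\A$ runs $q_0 \xrightarrow{X_0} \cdots \to f$ from $q_0 \in Q_0$. Conditioned on the tree containing this path down to a node~$v$ of type~$X_f$, the subtree at~$v$ is an independent $\B[X_f]$-tree which, by hypothesis, with positive probability has a branch $X_f X_1' X_2' \cdots$ accepted by $\A[f,X_f]$ via a run $\bar{q}_0 \xrightarrow{X_f} (f,X_f) \xrightarrow{X_1'} \cdots$ visiting $(f,X_f)$ infinitely often. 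Taking first components and gluing this to the prefix run along the path produces an accepting run of~$\A$ on the full branch $X_0 \cdots X_{n-1} X_f X_1' X_2' \cdots$, so $\P_\B(\A \text{ accepts some branch}) > 0$.

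The main obstacle is the probabilistic transfer in the forward direction: a single accepted branch concerns only one realization, and we must convert ``positive probability that \emph{some} branch is accepted'' into ``positive probability that a \emph{fresh} $\B[X_f]$-tree has an accepted branch.'' The pigeonhole step over the finite state set of $\A \times \B$, together with the self-similarity and independence of subtrees rooted at a fixed type (combined with the countable union bound), is exactly what makes this work; the remaining gluing arguments are routine.
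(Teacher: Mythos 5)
Your proposal is correct and takes essentially the same route as the paper's own proof: the pigeonhole argument over $F \times \Gamma$ along the product run of $\A \times \B$, the transfer of positivity via a countable union bound over nodes of type~$X_f$ combined with the self-similarity of subtrees (the paper makes this explicit through events $E(f,X_f,n)$ indexed by nodes in breadth-first order, using $\P_{X_0}(E(f,X_f,n)) \le \P_{X_f}(\A[f,X_f] \text{ accepts some branch})$), and, for the converse, gluing a positive-probability path prefix realizing the reachability witness onto an accepted continuation in the subtree. Your SCC justification for why the tail run lives in $\A[f,X_f]$ (namely that $(f,X_f)$ recurs infinitely often after its first visit) is also the correct reading of the step the paper leaves implicit.
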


For the rest of the section let $(f,X_f) \in F \times \Gamma$ be on a cycle of the transition graph of $\A \times \B$.

\subsection{The Determinization \texorpdfstring{$\Adet$}{Adet} and the BP \texorpdfstring{$\Bdet$}{Bdet}} \label{sub:Bdet}

Let \[\Adet := (2^{\{\bar{q}_0\} \cup Q[f,X_f]}, \Gamma, \deltadet, \{\bar{q}_0\}, 2^{\{\bar{q}_0\} \cup Q[f,X_f]} \setminus \{\emptyset\})\] be the determinization of~$\A[f,X_f]$ obtained by the standard subset construction.
Which states are accepting will not actually be relevant. 
Note that every state reachable via a nonempty path from~$\{\bar{q}_0\}$ is of the form $P \times \{X\}$ with $P \subseteq Q$ and $X \in \Gamma$.

Define a BP~$\Bdet$ based on~$\Adet$ as \[\Bdet \ := \ (\Gamma', \mathord{\btran{}'}, \Prob', \{(f,X_f)\})\,,\] where the set of types $\Gamma' \subseteq 2^{Q[f,X_f]}$ is the set of those states in~$\Adet$ that are reachable (in~$\Adet$) from~$\{\bar{q_0}\}$ via a nonempty path (recall that they are of the form $P \times \{X\}$ with $P \subseteq Q$ and $X \in \Gamma$), and
\begin{align*}
X' &\Xhookrightarrow{p}' \deltadet(X',X_1) \cdots \deltadet(X',X_k)
\end{align*}
for all $X' = P \times \{X\} \in \Gamma'$ with $P \ne \emptyset$ and all $X \btran{p} X_1 \cdots X_k$, and $\emptyset \Xhookrightarrow{1}' \emptyset$.
Here is the key lemma of this section:

\begin{restatable}{lemma}{lemBdet} \label{lem:Bdet}
The following statements are equivalent:
\begin{enumerate}
\item[(i)] The probability that some branch of a $\B[X_f]$-tree is accepted by~$\A[f,X_f]$ is positive.
\item[(ii)] The probability that some branch of a $\Bdet$-tree does not have any nodes of type~$\emptyset$ is positive.
\end{enumerate}
\end{restatable}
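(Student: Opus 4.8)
The plan is to replace both statements by equivalent statements about \emph{runs} of $\A[f,X_f]$ on the random tree, via a coupling between $\B[X_f]$-trees and $\Bdet$-trees. Given a $\B[X_f]$-tree $t$, I would label its root with the determinized state $\{(f,X_f)\}$ and, inductively, label each child $u$ of a node carrying state $P\times\{X\}$ with $\deltadet(P\times\{X\},X_u)$, where $X_u$ is the type of $u$; wherever this label first becomes $\emptyset$, I truncate $t$ and attach the $\emptyset$-ray $\emptyset \btran{1} \emptyset \btran{1}\cdots$. Since every type on a right-hand side is a $\B$-successor, $\deltadet$ is exactly the transition used in $\Bdet$, so this labelling sends $t$ to a $\Bdet$-tree and pushes $\P_{X_f}$ forward to $\P_{\Bdet}$ (the rule probabilities agree on non-$\emptyset$ nodes and are $1$ on $\emptyset$-nodes). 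Under this coupling a branch of the $\Bdet$-tree avoids $\emptyset$ iff along the corresponding branch of $t$ the subset label stays nonempty at every level, which by K\"onig's lemma holds iff the branch's type sequence admits an \emph{infinite} run in $\A[f,X_f]$. Hence (ii) becomes: with positive probability some branch of a $\B[X_f]$-tree carries an infinite run, while (i) is the same statement with ``infinite'' replaced by ``accepting''.

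With this reformulation (i)$\Rightarrow$(ii) is immediate, since an accepting run is in particular infinite; the content is the converse. Recall that the states of $\A[f,X_f]$ other than $\bar q_0$ form a single SCC containing the unique accepting state $(f,X_f)$, so every $(q,X)\in Q[f,X_f]$ both reaches and is reachable from $(f,X_f)$. I would exploit this as follows: for each $(q,X)$ fix a finite path in $\A[f,X_f]$ from $(q,X)$ to $(f,X_f)$; its type sequence is realizable as a branch segment of $\B$, so from any node of type $X$ whose subset label contains $q$ there is a positive probability (bounded below by some $\beta>0$, uniformly, as $Q[f,X_f]$ is finite) that the BP spawns a descendant of type $X_f$ at which that run reaches $(f,X_f)$ — and, crucially, the subtree rooted at this descendant is a fresh independent $\B[X_f]$-tree.

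Using this I would set up a renewal (Galton--Watson) structure on the \emph{marked} nodes, i.e.\ product nodes carrying the run-state $(f,X_f)$: the children of a marked node are its first-return marked descendants, and by self-similarity these offspring are i.i.d.\ across marked nodes. An accepting branch exists exactly when this marked process survives, i.e.\ has an infinite line of descent. The goal is then to show that the reformulated hypothesis (positive probability of some infinite run) forces the marked process to be non-subcritical, so that it survives with positive probability. I would derive this by applying the finiteness characterization \cref{lem:as-finiteness-char} to $\Bdet$ restricted to its non-$\emptyset$ types: a positive-probability infinite $\emptyset$-free branch means this sub-BP has a reachable SCC that is supercritical or linear, and I would translate that supercriticality/linearity into survival of the marked process, using strong connectivity to route the recurrent subsets back to genuine $(f,X_f)$-visits.

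The main obstacle is precisely this last step of upgrading ``infinite run'' to ``accepting run'', which is genuinely nontrivial because the subset construction does not preserve B\"uchi acceptance: a single $\emptyset$-free branch may carry an infinite run that visits $(f,X_f)$ only finitely often (the subset can stay nonempty while membership of $(f,X_f)$ is witnessed by ever-changing runs). Consequently a naive greedy continuation that independently re-routes back to $(f,X_f)$ succeeds only with probability $\prod_k \beta = 0$; the positive survival probability must instead come from the \emph{branching} of the tree, and the delicate point is to show that positivity of the infinite-run probability transfers to non-subcriticality of the marked branching process rather than merely to the existence of one (possibly non-accepting) infinite path. Making the link between $\emptyset$-avoidance in $\Bdet$ and survival of the marked renewal process fully rigorous — in particular the critical boundary case $\rho=1$ where only linear SCCs survive — is where I expect the real work to lie.
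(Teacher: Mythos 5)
Your first step is sound and matches the paper's: the subset-labelling coupling shows that condition~(ii) is equivalent to ``with positive probability some branch of a $\B[X_f]$-tree admits an infinite (not necessarily accepting) run in $\A[f,X_f]$'' (the K\"onig's lemma argument is exactly right; this is condition~(ii) of the paper's \cref{lem:UBA-3-equivalences}), and (i)$\Rightarrow$(ii) is then trivial. But the hard direction --- upgrading a positive probability of an \emph{infinite} run to a positive probability of an \emph{accepting} branch --- is left unproven, as your own closing paragraph concedes. The missing idea in the paper is a concrete combinatorial device: fix a word $w \in \Gamma^+$ for which $\delta[f,X_f]((f,X_f),w)$ is $\subseteq$-maximal. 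Then (\cref{lem:UBA-w}) whenever $(f,X_f) \in \delta[f,X_f]((f,X_f),v)$, one has $\delta[f,X_f]((f,X_f),vw) = \delta[f,X_f]((f,X_f),w)$, so every state reached after $vw$ is reachable from a \emph{fresh} visit to $(f,X_f)$ followed by $w$. This is precisely what lets one stitch ever-changing finite run fragments into a single genuine accepting run: the paper enriches $\Adet$ to a DBA $\Adet'$ that accepts exactly when infinitely many such $w$-blocks are read from subsets containing $(f,X_f)$, proves (iii)$\Rightarrow$(i) by the stitching argument, and proves (ii)$\Rightarrow$(iii) by a reachability argument plus \cite[Lemma~11]{12CDK-MFCS} (if with positive probability some branch keeps the accepting type forever reachable, then with positive probability some branch visits it infinitely often). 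Your plan to ``route the recurrent subsets back to genuine $(f,X_f)$-visits'' names the problem but supplies no mechanism playing the role of the maximal word~$w$.

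Your proposed renewal route has two further defects that would need repair even granting the routing. First, the marked first-return nodes do not form a Galton--Watson process: two first-return offspring of a marked node can lie on the \emph{same} branch (the run witnessing the deeper one may pass through the shallower one in a state other than $(f,X_f)$), so their offspring counts are computed from overlapping subtrees and independence fails; one would have to pass to a product tree and argue more carefully. Second, the inference ``non-subcritical $\Rightarrow$ survives with positive probability'' is false at criticality --- a critical Galton--Watson process dies out almost surely unless degenerate --- and the linear-SCC alternative of \cref{lem:as-finiteness-char} lands exactly on this $\rho = 1$ boundary, so the case you flag as delicate is not merely delicate but breaks the stated strategy. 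The paper sidesteps both issues by never constructing a marked branching process at all: after the $\Adet'$ reduction, positivity in the hard direction comes from the reachability-based result of \cite{12CDK-MFCS}, not from a survival/criticality analysis. In short: your reduction of \cref{lem:Bdet} to the infinite-run reformulation is correct and identical to the paper's, but the core of the lemma --- the acceptance-upgrading argument --- is missing, and the sketched substitute would not work as stated.
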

We prove \cref{lem:Bdet} in \cref{sub:Bdet-proof}.
It will be used in the proof of \cref{thm:coNBA-1} below; but more importantly, \cref{lem:Bdet} is the foundation of \cref{sec:coUBA}.

Given that \cref{lem:Bdet} reflects the key insight of this section, let us comment further.
Considering that condition~(ii) does not mention a notion of acceptance, one might have two concerns at this point:
\begin{enumerate}
\item[(a)] Condition~(ii) does not obviously imply that with positive probability there is even a branch with infinitely many nodes of types containing $(f,X_f)$.
\item[(b)] Even if with positive probability there is such a branch, it is not obvious that such branches would necessarily correspond to branches of~$\B[X_f]$ that are accepted by~$\A[f,X_f]$.
\end{enumerate}
Even for the special case of Markov chains (i.e., every tree has only a single branch), \cref{lem:Bdet} is not at all obvious, and both concerns (a) and~(b) apply.
Indeed, for Markov chains, Courcoubetis and Yannakakis prove a statement related to \cref{lem:Bdet}, namely \cite[Proposition 4.1.4]{CourcoubetisYannakakis95}, with a proof related to ours and dealing explicitly with concern~(b) above.
For the special case of transition systems (i.e., the BP generates exactly one tree), \cref{lem:Bdet} is simple though: consider the branch that follows a cycle around $(f,X_f)$.
For the general case, we need a result on BPs from \cite{12CDK-MFCS}, dealing with concern~(a) above.
The high-level principle behind the proof of \cref{lem:Bdet} is often used in the analysis of Markov chains: if it is possible, infinitely often, to reach a state with a probability bounded away from~$0$, then this state is almost surely reached infinitely often.
See \cref{sub:Bdet-proof} for a full proof of \cref{lem:Bdet}.

We can now derive a PSPACE procedure for the problem $\P(\coNBA) = 1$ without resorting to DPAs:
\begin{restatable}{theorem}{thmcoNBAone} \label{thm:coNBA-1}
The problem $\P(\coNBA) = 1$ is PSPACE-complete.
\end{restatable}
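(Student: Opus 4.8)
The plan is to prove PSPACE-completeness of $\P(\coNBA) = 1$ by establishing the two bounds separately. PSPACE-hardness I would obtain in the same way as in \cref{thm:NBA-1}: the corresponding problem for Markov chains (where each tree has a single branch, so ``all branches rejected'' coincides with ``the single branch is rejected'') is already PSPACE-hard, and this is a special case of $\P(\coNBA) = 1$. Alternatively one can reduce from NBA universality, since $\P_\B(\A \text{ rejects}) = 1$ on a suitable trivial BP amounts to asking whether $\A$ rejects all words.

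For membership in PSPACE, the idea is to assemble the reductions developed in this section. First observe that $\P_\B(\A \text{ rejects}) = 1$ is exactly the negation of the statement ``with positive probability some branch of a $\B$-tree is accepted by~$\A$''. By \cref{lem:UBA-X1-f}, the latter holds if and only if there exist $q_0 \in Q_0$, $f \in F$ and $X_f \in \Gamma$ with $(f, X_f)$ reachable from $(q_0, X_0)$ in the transition graph of $\A \times \B$ and such that, with positive probability, some branch of a $\B[X_f]$-tree is accepted by $\A[f, X_f]$. By \cref{lem:Bdet}, this last condition is in turn equivalent to condition~(ii): with positive probability some branch of a $\Bdet$-tree avoids all nodes of type~$\emptyset$. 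The point is that condition~(ii) is a \emph{purely qualitative reachability/avoidance property} of the BP~$\Bdet$, with no reference to acceptance; I expect it can be decided in NC on~$\Bdet$ itself, most plausibly via \cref{prop:as-finite-NC} or a similar almost-sure-finiteness argument applied after collapsing the absorbing type~$\emptyset$.

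The procedure, then, iterates over the (polynomially many) candidate triples $(q_0, f, X_f)$, and for each reachable one decides condition~(ii). The efficiency comes from the sizes: although $\Adet$ and $\Bdet$ are exponential in $\A$, an inspection shows that $\Bdet$ can be produced by a PSPACE transducer (each type is a subset $P \times \{X\}$, and its rules are computed by the subset construction $\deltadet$, all of which is PSPACE-computable). Since deciding condition~(ii) on~$\Bdet$ is in NC, \cref{lem:PSPACE-transducer} yields membership of $\P(\coNBA) = 1$ in PSPACE. The reachability tests in the transition graph of $\A \times \B$, and the enumeration over triples, are likewise comfortably within PSPACE (indeed NL for the reachability) and fold into the transducer.

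The main obstacle I anticipate is the verification that condition~(ii) is genuinely an NC-decidable property of~$\Bdet$, and that the construction of~$\Bdet$ together with this check fits the $P_1$-to-$P_2$ transducer pattern of \cref{lem:PSPACE-transducer}. Concretely, one must show that ``with positive probability some branch avoids type~$\emptyset$'' reduces to an almost-sure-finiteness or spectral-radius computation on a nonnegative matrix derived from~$\Bdet$: the natural route is to make $\emptyset$ absorbing, delete it, and ask whether the resulting (possibly $\varepsilon$-rule) BP generates an infinite tree with positive probability, which by \cref{lem:as-finiteness-char} and \cref{prop:as-finite-NC} is in NC. Care is needed that complementing the ``positive probability'' answer to recover the ``$=1$'' answer is compatible with PSPACE (it is, since PSPACE is closed under complement), and that the exponential output of the transducer is handled only by the NC subroutine rather than stored in full on a polynomial tape.
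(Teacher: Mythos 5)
Your membership argument is essentially the paper's proof: complement the question (using closure of PSPACE under complement), apply \cref{lem:UBA-X1-f} and \cref{lem:Bdet}, and decide condition~(ii) on~$\Bdet$ in NC. Your proposed NC check --- make $\emptyset$ absorbing, delete it, and test positive-probability infiniteness of the resulting BP with $\varepsilon$-rules via \cref{prop:as-finite-NC} --- is exactly the content of \cref{lem:AFT-1}, which the paper invokes in the form $\P_{\Bdet}(\finally \{\emptyset\}) < 1$. The remaining ingredients also match: $\Bdet$ is produced by a PSPACE transducer (the reachability test defining its type set being in NPSPACE $=$ PSPACE), and \cref{lem:PSPACE-transducer} finishes membership; the enumeration over triples $(q_0,f,X_f)$ and the NL reachability tests fold in as you say.

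One concrete error, though harmless to the overall proof: your ``alternative'' hardness route via NBA universality does not work. For $\P(\coNBA)=1$ the transition-system special case asks whether \emph{no} branch of the (unique) generated tree is accepted by~$\A$; on a BP whose tree realizes all words over~$\Gamma$ this is NBA \emph{emptiness}, not universality, and emptiness is decidable in NL --- the paper makes precisely this remark in its proof of \cref{thm:coNBA-1} (``search the product for an accepting cycle''). Universality-based hardness is available for $\P(\NBA)=1$, where all branches must be \emph{accepted}, but complementing the acceptance condition flips which special case carries the hardness: for $\P(\coNBA)=1$ only the Markov chain side does. Your primary hardness argument --- PSPACE-hardness of the probabilistic emptiness problem for Markov chains against NBAs, which embeds into $\P(\coNBA)=1$ since each ``tree'' is a single branch --- is correct and is the paper's argument, so the proof stands once the universality claim is deleted.
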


\Cref{thm:NBA-0} (for NBAs) has a coNBA-analogue:

\begin{restatable}{theorem}{thmcoNBAzero} \label{thm:coNBA-0}
The problem $\P(\coNBA) = 0$ is EXPTIME-complete.
It is EXPTIME-hard even for NBAs all whose states are accepting.
\end{restatable}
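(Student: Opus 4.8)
The plan is to prove the two directions separately: EXPTIME-membership via a reduction to $\P(\DPA) = 0$, and EXPTIME-hardness by adapting the lower bound technique already used for \cref{thm:NBA-0}. For membership, I would mirror the structure of the membership argument in \cref{thm:NBA-0}. Given a BP~$\B$ and an NBA~$\A$, the problem $\P(\coNBA) = 0$ asks whether $\P_\B(\A \text{ rejects}) = 0$, i.e.\ whether with probability~$1$ some branch of a $\B$-tree is accepted by~$\A$. Since the complement language $\{w \mid \A \text{ rejects } w\}$ is $\omega$-regular, I would translate the ``rejecting'' condition into a deterministic parity automaton: the standard NBA-to-DPA construction (e.g.\ \cite{Piterman07}) produces, in exponential time, a DPA~$\A'$ of exponential size accepting exactly the words rejected by~$\A$. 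Then $\P_\B(\A \text{ rejects}) = \P_\B(\A' \text{ accepts})$, so $\P(\coNBA) = 0$ for $(\B,\A)$ reduces to the instance $(\B,\A')$ of $\P(\DPA) = 0$. By \cref{thm:DPA-0} the latter is in~P, and since $\A'$ has exponential size and is computed in exponential time, the whole procedure runs in EXPTIME. This is the same ``determinize then apply the polynomial-time parity test'' pattern as in the membership half of \cref{thm:NBA-0}, just with the roles of accept/reject (and hence $=0$ versus $=0$ on the complement) tracked carefully.

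For EXPTIME-hardness, I would reuse and lightly modify the reduction behind \cref{thm:NBA-0}, which itself adapts the recursive-Markov-chain lower bound of \cite[Theorem~17]{EtessamiYannakakis12}. The goal is to encode an alternating polynomial-space (equivalently, EXPTIME) computation so that, in the constructed BP and automaton, forcing ``all branches rejected'' has probability zero exactly when the encoded machine accepts. The statement asks for hardness even when \emph{all} states of the NBA are accepting; with every state accepting, a word is rejected by the NBA precisely when it has \emph{no infinite run}, i.e.\ when every run gets stuck after finitely many steps. Thus $\P(\coNBA) = 0$ with an all-accepting NBA becomes the question whether, almost surely, some branch admits an infinite run of~$\A$ — a pure reachability/infiniteness condition on the product, which is exactly the flavour of the \cref{thm:NBA-0} construction. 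I would therefore design the NBA so that ``possessing an infinite run'' along a branch codes a successful (accepting) simulation of the alternating machine, and verify that the probabilistic branching of the BP realises the universal/existential alternation in the same way as in \cref{thm:NBA-0}.

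The main obstacle I expect is the bookkeeping in the hardness reduction, not the high-level idea. The delicate point is that the all-accepting restriction changes what ``rejects'' means — rejection now coincides with the absence of any infinite run rather than with failing the B\"uchi condition on an existing run — so the gadget that previously used a designated accepting sink in \cref{thm:NBA-0} must be re-engineered so that ``getting stuck'' (no successor transition) plays the role formerly played by ``never hitting the accepting sink.'' Concretely, I must ensure that the transition relation of~$\A$ is total exactly along the branches that encode faithful machine simulations and becomes blocking precisely where a simulation deviates, and that these blocking points align with the branches the BP produces with positive probability. Checking that this alignment pushes the satisfaction probability across the $0$ threshold in the intended direction — and that the reduction is computable in polynomial time — is where the care is required; the overall architecture, however, transfers directly from the proof of \cref{thm:NBA-0} in \cref{app:NBA-0}.
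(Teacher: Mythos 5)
Your membership argument is essentially the paper's: determinize the NBA into an exponential-size DPA via \cite{Piterman07} and complement it, then invoke the polynomial-time algorithm for $\P(\DPA)=0$ from \cref{thm:DPA-0}. Your phrase ``a DPA accepting exactly the words rejected by~$\A$'' glosses over the complementation step, but that is harmless: the paper makes it explicit by shifting all priorities by~$1$. This half is correct.

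The hardness half, however, contains a genuine polarity error. You propose to design the all-accepting NBA so that \emph{possessing an infinite run} along a branch codes a \emph{successful} simulation of the alternating machine, and to arrange that $\P_\B(\A \text{ rejects})=0$ (i.e., almost surely some branch is accepted) exactly when $M$ accepts. This cannot be realized by an \cref{thm:NBA-0}-style construction, for two concrete reasons. First, that construction has the BP guess tape contents uniformly at random, via rules of the form $(i,t) \btran{1/|\Sigma|} (i{+}D,s',b_j)$. With probability at least $1-1/|\Sigma|$ the very first guess is wrong, and then \emph{every} branch of the generated tree extends an unfaithful prefix; under your semantics no branch then carries an infinite run, so all branches are rejected. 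Hence $\P_\B(\A \text{ rejects})>0$ regardless of whether $M$ accepts, and your intended criterion fails for accepting machines. More generally, ``almost surely some branch accepted'' forces the accepted branches to form a probability-$1$ family, which compels you to accept the unfaithful (wrongly guessed) branches too---contradicting your stipulation that infinite runs code successful simulations. Second, with universal states realized by tree branching as in \cref{thm:NBA-0}, the condition ``some branch succeeds'' only requires \emph{one} universal successor to lead to acceptance, so universal quantification is not encoded at all.

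The paper's construction has exactly the opposite polarity. Infinite runs mark the \emph{bad} branches: a detected tape mismatch sends the run to the sink~$f$ (which loops forever and, all states being accepting, accepts), and infinite computations avoiding $\sacc$ keep their tape-tracking runs alive forever. The \emph{good} branches---faithful computations reaching $\sacc$---enter the type~$\en$, on which only $f$ has a transition, so every run dies and the branch is rejected. In other words, ``getting stuck'' plays the role that \emph{hitting} the accepting sink played in \cref{thm:NBA-0}, not, as you write, the role of ``never hitting'' it. The reduction then shows $\P_\B(\A \text{ rejects})>0$ if and only if $M$ accepts~$w$---the complement of your target, which suffices since EXPTIME is closed under complement, and which is compatible with the random guessing: a tree all of whose branches are faithful accepting computations is generated with small but positive probability exactly when a winning strategy exists. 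Flipping your polarity accordingly (bad branches accepted, good branches stuck) repairs your plan; as stated, it would fail.
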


\section{Co-Unambiguous B\"uchi Automata} \label{sec:coUBA}

In this section we build on the previous section, in particular on \cref{lem:Bdet}, to derive our main technical result: given a BP~$\B$ and an \emph{unambiguous} B\"uchi automaton (UBA)~$\A$, one can decide in~NC whether $\B$ almost surely generates a tree all whose branches are rejected by~$\A$:
\begin{proposition} \label{prop:coUBA-1}
The problem $\P(\coUBA) = 1$ is in~NC.
\end{proposition}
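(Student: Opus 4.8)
The plan is to reduce $\P(\coUBA) = 1$ to a problem about comparing the spectral radius of a nonnegative matrix with~$1$, exploiting the unambiguousness of~$\A$ together with \cref{lem:Bdet}. First I would apply \cref{lem:UBA-X1-f} to reduce the question ``is $\P_\B(\A \text{ rejects}) = 1$?'' (equivalently, ``is the probability that some branch is accepted equal to~$0$?'') to finitely many questions of the form: for $(f,X_f) \in F \times \Gamma$ on a cycle of $\A \times \B$ and reachable from some $(q_0, X_0)$, is the probability that some branch of a $\B[X_f]$-tree is accepted by $\A[f,X_f]$ equal to~$0$? By \cref{lem:Bdet}, each such question is equivalent to asking whether, with probability~$0$, every branch of the $\Bdet$-tree eventually hits a node of type~$\emptyset$ --- equivalently, whether $\P_{\Bdet}(\finally \{\emptyset\}) = 1$. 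The difficulty is that $\Bdet$ is of exponential size, so we cannot afford to build it and invoke \cref{lem:AFT-1} directly; the whole point is to get an NC algorithm.

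The key idea to avoid the blowup is that unambiguousness of~$\A$ controls the \emph{width} of the subset-construction states in $\Adet$ along accepting behaviours. Concretely, I would argue that because $\A$ (hence $\A[f,X_f]$) is unambiguous, the reachable types $X' = P \times \{X\} \in \Gamma'$ that matter behave essentially like single $(q,X)$-pairs rather than genuinely large subsets: the relevant recurrent part of $\Bdet$ can be described by a matrix indexed (up to the relevant structure) by $Q[f,X_f]$ rather than by its power set. The plan is therefore to show that the reachability event in condition~(ii) of \cref{lem:Bdet} --- that some branch of $\Bdet$ avoids type~$\emptyset$ forever --- is governed by whether a certain nonnegative matrix built directly from $\A[f,X_f]$ and the branching structure of~$\B$ has spectral radius exceeding, equalling, or falling below~$1$. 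This matrix is the analogue of the mean matrix~$M$ from \cref{sec:as-finite}, but weighted by the UBA transitions; its entries are expected numbers of $(q',X')$-children summed over the unambiguous run structure, and it has dimension polynomial in $|Q| \cdot |\Gamma|$.

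Having set up such a matrix, I would invoke the supercritical/linear characterization of almost-sure finiteness. Recall from \cref{lem:as-finiteness-char} that a BP generates an infinite tree with positive probability iff some reachable SCC is supercritical ($\rho_S > 1$) or linear. Translating condition~(ii) of \cref{lem:Bdet} into this language, the event that some branch avoids~$\emptyset$ forever corresponds to infiniteness of a suitable sub-BP, which in turn is decided by comparing the relevant $\rho_S$ with~$1$ together with a linearity check. By \cref{lem:determine-spectral-radius} this spectral comparison is in NC, and the SCC decomposition, reachability checks, and linearity tests are all NC-computable on the polynomial-sized matrix. Combining these over the polynomially many choices of $(f, X_f)$ keeps the whole procedure in NC.

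The main obstacle I anticipate is the second paragraph: proving rigorously that unambiguousness lets us replace the exponential power-set type space $\Gamma' \subseteq 2^{Q[f,X_f]}$ by a polynomial-sized spectral analysis. The subtlety is that a $\Bdet$-branch avoiding~$\emptyset$ forever corresponds, via the determinization, to a whole \emph{subtree} of runs of $\A[f,X_f]$ in~$\B[X_f]$, and one must show that unambiguousness forces these subset-states to stay ``thin'' in the precise sense needed --- i.e., that the spectral radius of the polynomial UBA-weighted matrix correctly captures the branching rate of $\emptyset$-avoiding $\Bdet$-branches, neither over- nor under-counting because of merging or splitting of runs in the subset construction. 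This is exactly where ``the branching behaviour of BPs interferes with the spectral-radius based analysis,'' as flagged in the introduction, and where the analogue of the Markov-chain UBA analysis from~\cite{16BKKKMW-CAV} must be genuinely extended to trees.
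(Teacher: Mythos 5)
Your pipeline matches the paper's exactly: reduce via \cref{lem:UBA-X1-f} to the pairs $(f,X_f)$, apply \cref{lem:Bdet} to trade acceptance for $\emptyset$-avoidance in~$\Bdet$, and decide the latter by comparing the spectral radius of a polynomial-sized matrix built from $\A[f,X_f]$ and the rule weights of~$\B$ (this is precisely the matrix $M$ with $M_{(q,X),(r,Y)} = \sum_{X \btran{p} u} p\,|u|_Y$ when $(q,X) \xrightarrow{Y} (r,Y)$). However, there is a genuine gap, and you have located it yourself: the entire content of the result lies in the step you defer in your second paragraph, which is the paper's \cref{lem:key}. Two things are missing. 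First, the criterion itself is not just a spectral comparison ``together with a linearity check'': the correct dichotomy is that a $\Bdet''$-tree is infinite with positive probability iff $\rho > 1$, \emph{or} $\rho = 1$ and $\A[f,X_f]$ has \emph{no} proper branching --- note the polarity, which is arguably counterintuitive: at the borderline $\rho = 1$, the presence of tree branching along UBA transitions makes the tree almost surely \emph{finite}. Your proposal never translates the linearity condition of \cref{lem:as-finiteness-char} (a condition on SCCs of the exponential-sized $\Bdet''$) into a polynomially checkable condition on $\A[f,X_f]$, which is what ``proper branching'' accomplishes.

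Second, your intuition for why unambiguousness helps --- that reachable subset-states ``stay thin'' and behave like single pairs $(q,X)$ --- is not the actual mechanism and would be hard to make true: the subsets in $\Gamma'$ can be large. What unambiguousness really buys is the absence of \emph{diamonds} in $\A[f,X_f]$ (no two distinct paths between the same pair of states over the same word), which makes the predecessor map along each transition injective: every $(r,X_i) \in \deltadet(P \times \{X\}, X_i)$ has exactly one $q \in P$ with $(q,X) \xrightarrow{X_i} (r,X_i)$. This is what lets the paper take the Perron eigenvector $v > 0$ of~$M$ (available since the SCC of $(f,X_f)$ is strongly connected) and \emph{lift} it to $\bar{v}_{P \times \{X\}} := \sum_{q \in P} v_{(q,X)}$, obtaining an exact eigenvector equation $\bar{M}\bar{v} = \rho\bar{v}$ for the mean matrix $\bar{M}$ of the exponential BP --- run-splitting is permitted and is already priced into~$\rho$; only run-merging, forbidden by unambiguousness, would break the summation. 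From there an SCC-by-SCC application of \cref{lem:as-finiteness-char} (bottom SCCs are supercritical when $\rho > 1$; linear when $\rho = 1$ without proper branching; and a strict-inequality argument rules out linear or supercritical SCCs when $\rho = 1$ with proper branching or $\rho < 1$) completes the correctness proof, after which your NC accounting via \cref{lem:determine-spectral-radius} and a logspace proper-branching test is exactly right. So: correct architecture and correct complexity bookkeeping, but the lifting construction and the proper-branching dichotomy --- the paper's central technical insight --- are absent rather than merely sketched.
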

The rest of the section is devoted to the proof of this theorem.
Fix a BP~$\B$ and a UBA~$\A$.
Since NC is closed under complement, we can focus on the problem whether the probability is positive that a $\B$-tree has some branch accepted by~$\A$.
We use \cref{lem:UBA-X1-f}.
Since reachability in a graph is in~NL and, hence, in~NC, it suffices to decide in~NC whether the probability that some branch of a $\B[X_f]$-tree is accepted by~$\A[f,X_f]$ is positive.
By \cref{lem:Bdet} it suffices to decide in~NC whether the probability that some branch of a $\Bdet$-tree does not have any nodes of type~$\emptyset$ is positive.
The challenge is that $\Bdet$ may be exponentially larger than~$\A$, so we need to exploit the unambiguousness of~$\A$ and the regular structure it gives to~$\Bdet$.

Let $\Bdet''$ be the BP (with $\varepsilon$-rules allowed) obtained from $\Bdet$ by removing the type~$\emptyset$ and eliminating all occurrences of type~$\emptyset$ from all right-hand sides.
The probability that a $\Bdet$-tree has an infinite branch of non-$\emptyset$ nodes is equal to the probability that a $\Bdet''$-tree is infinite.
Hence, it remains to show that one can decide in~NC whether the probability that a $\Bdet''$-tree is infinite is positive.

Define a matrix $M \in \Q^{Q[f,X_f] \times Q[f,X_f]}$ whose rows and columns are indexed with the non-$\bar{q}_0$ states of~$\A[f,X_f]$:
\[
 M_{(q,X),(r,Y)} \ := \
\begin{cases} \displaystyle\sum_{X \btran{p} u} p |u|_Y & \text{if } (q,X) \xrightarrow{Y} (r,Y) \ \text{ in } \A[f,X_f] \\[1mm] 0 & \text{otherwise\,,}
\end{cases}
\]
where $|u|_Y \in \N_0$ is the number of occurrences of~$Y$ in~$u$.
(Think of $M_{(q,X),(r,Y)}$ as the expected number of $(r,Y)$-``successors'' of~$(q,X)$.)
The graph of~$M$ is equal to the transition graph of~$\A[f,X_f]$ (excluding~$\bar{q}_0$), which is strongly connected.

Say that $\A[f,X_f]$ has \emph{proper branching} if there exist $(q,Y) \xrightarrow{Z_1} (r_1,Z_1)$ and $(q,Y) \xrightarrow{Z_2} (r_2,Z_2)$ in $\A[f,X_f]$ and a rule $Y \btran{p} u_1 Z_1 u_2 Z_2 u_3$ in~$\B$ with $u_1, u_2, u_3 \in \Gamma^*$.
Now we can state the key lemma:
\begin{restatable}{lemma}{lemkey} \label{lem:key}
Let $\rho$ be the spectral radius of~$M$.
The probability that a $\Bdet''$-tree is infinite is positive if and only if either $\rho > 1$ or $\rho = 1$ and $\A[f,X_f]$ does not have proper branching.
\end{restatable}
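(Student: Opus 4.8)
The plan is to exploit the fact that $\Bdet''$ is itself a genuine BP (with $\varepsilon$-rules allowed), so that \cref{lem:as-finiteness-char} applies directly: a $\Bdet''$-tree is infinite with positive probability if and only if its type graph has a reachable SCC that is supercritical or linear. Since $\Bdet''$ may be exponential, the whole point is to re-express this condition in terms of the polynomial-size matrix~$M$ and the proper-branching property of $\A[f,X_f]$. The bridge is the single-state product BP~$\B_M$ on the type set $Q[f,X_f]$ whose expectation matrix is exactly~$M$: from $(q,X)$ one fires a rule $X \btran{p} X_1 \cdots X_k$ of~$\B$ with probability~$p$ and spawns, for every position~$i$ and every transition $(q,X) \xrightarrow{X_i} (r,X_i)$ of $\A[f,X_f]$, a child $(r,X_i)$. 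The graph of~$M$ is the (strongly connected) transition graph of $\A[f,X_f]$, so $\B_M$ has a single nontrivial SCC and \cref{lem:as-finiteness-char} tells us that $\B_M$ survives with positive probability iff $\rho > 1$ or ($\rho = 1$ and $\B_M$ is linear). I would first reduce \cref{lem:key} to two claims: (A) the probability that a $\Bdet''$-tree is infinite is positive iff the probability that a $\B_M$-tree is infinite is positive; and (B) in the critical case $\B_M$ is linear iff $\A[f,X_f]$ has no proper branching.

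Claim~(B) is a short Perron--Frobenius argument. If $\A[f,X_f]$ has no proper branching then every (state, rule) pair contributes at most one surviving child, so every row sum of~$M$ is at most~$1$ and hence $\rho \le 1$; this already matches the expected direction. If moreover $\rho = 1$, then since $M$ is nonnegative and irreducible (strongly connected graph), the bound that $\rho$ is at most the maximal row sum of~$M$, which is at most~$1$, together with the Frobenius equality condition (an irreducible nonnegative matrix attains $\rho$ equal to its maximal row sum only when all row sums coincide) forces all row sums to equal~$1$. Because each rule contributes at most one surviving child and the rule probabilities sum to~$1$, this in turn forces \emph{exactly} one surviving child per (state, rule) pair, i.e.\ $\B_M$ is linear. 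The converse is immediate, since linearity means exactly one surviving child, a fortiori at most one, i.e.\ no proper branching.

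Claim~(A) is where unambiguousness is essential and is the main obstacle. One direction is a clean first-moment computation that needs no unambiguousness: by linearity of expectation, the expected number of pairs (level-$n$ node~$v$, run of $\A[f,X_f]$ from $(f,X_f)$ over the branch word of~$v$) equals $\sum_{(r,Y)} (M^n)_{(f,X_f),(r,Y)}$, and the number of non-$\emptyset$ nodes of $\Bdet''$ at level~$n$ is bounded by this quantity since each such node carries at least one run. Hence if $\rho < 1$ (so $M^n \to 0$), Markov's inequality gives that $\Bdet''$ is almost surely finite. The difficulty is the reverse, survival, direction: the subset construction inside $\Bdet''$ \emph{merges} two runs as soon as they reach a common automaton state, so $\Bdet''$ is not the independent BP~$\B_M$ but a correlated version of its run tree in which runs passing through a shared $\B$-node also share the rule fired there.

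I expect the crux to be a use of the unambiguousness of~$\A$ to show that this merging is benign for the extinction criterion: intuitively, two distinct runs that re-converge on the same state cannot both be completed to accepting runs, which limits how far the subset process can collapse and lets one transfer the spectral-radius survival of~$\B_M$ (and, in the critical non-branching case, the stochastic/martingale survival coming from all row sums of~$M$ being~$1$) to~$\Bdet''$. This is precisely the adaptation of the spectral-radius analysis of~\cite{16BKKKMW-CAV} from the linear Markov-chain setting to the branching tree setting, and it is the technically hardest point. Once Claim~(A) is established, the lemma follows by combining it with Claim~(B) and the two applications of \cref{lem:as-finiteness-char}.
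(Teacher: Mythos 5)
There is a genuine gap, and it is not just the unproven step you flag at the end: your Claim~(A) is false, and the forward direction of Claim~(B) fails with it. The merging performed by the subset construction is not ``benign'' for extinction --- it is precisely what rescues survival in the critical case, so the decoupled run-tree BP~$\B_M$ and $\Bdet''$ genuinely disagree at $\rho=1$. Concretely, take types $\{A,B\}$ with rules $X \btran{1/2} A$ and $X \btran{1/2} B$ for both types (a Markov chain), and the two-step prediction UBA with states $q_{xy}$ (``current letter is $x$, next is $y$'') and transitions $q_{xy} \xrightarrow{x} q_{yz}$ for $z \in \{A,B\}$, all states accepting and initial; this is unambiguous, since every word has exactly one run. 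In the relevant $\A[f,X_f]$ (e.g., $f = q_{AA}$, $X_f = A$), every state has exactly two outgoing transitions, both carrying the \emph{same} letter; all row sums of~$M$ equal~$1$, so $\rho = 1$; and there is no proper branching, since all right-hand sides have length~$1$ and proper branching requires two distinct positions in one rule. \cref{lem:key} correctly predicts positive survival of~$\Bdet''$, and indeed the subset $\{(q_{AA},A),(q_{AB},A)\}$ evolves forever without becoming~$\emptyset$ --- note that $\Bdet''$ here is \emph{linear}, one child per node, because merging collapses the two nondeterministic successors into a single subset-child. Your $\B_M$, by contrast, has offspring $2$ with probability $1/2$ and $0$ with probability $1/2$: a critical, non-linear Galton--Watson process, which dies almost surely. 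So $\B_M$ is extinct a.s.\ while $\Bdet''$ survives, refuting Claim~(A); the same example refutes Claim~(B), whose first step (``no proper branching implies every row sum of~$M$ is at most~$1$'') overlooks automaton nondeterminism on a single letter occurrence: two transitions $(q,X) \xrightarrow{Y} (r_1,Y)$ and $(q,X) \xrightarrow{Y} (r_2,Y)$ with one occurrence of~$Y$ in a rule contribute $2p$ to a row sum without constituting proper branching.

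The paper's proof never passes through a decoupled process. It applies \cref{lem:as-finiteness-char} directly to~$\Bdet''$ and computes the spectral radius of its exponential-size expectation matrix~$\bar{M}$ exactly: the Perron eigenvector~$v$ of~$M$ is lifted to $\bar{v}_{P \times \{X\}} := \sum_{q \in P} v_{(q,X)}$, and the diamond-freeness of~$\A[f,X_f]$ (a consequence of unambiguousness: each element of $\deltadet(P \times \{X\}, X_i)$ has exactly one predecessor in~$P$) yields $\bar{M}\bar{v} = \rho\bar{v}$, whence $\rho$ is the spectral radius of $\bar{M}$ restricted to the non-$\emptyset$ types. The linear/supercritical dichotomy is then settled SCC by SCC \emph{in~$\Bdet''$}, where proper branching is exactly what destroys linearity --- whereas, as the counterexample shows, automaton nondeterminism alone does not, because the subsets absorb it. Your closing paragraph correctly senses that the merging is the crux, but the transfer argument you propose resolves it in the wrong direction, and no equivalence of the kind in Claim~(A) can hold.
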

Observe the similarity between \cref{lem:key,lem:as-finiteness-char}.
In fact, the proof of \cref{lem:key}, given below, is based on \cref{lem:as-finiteness-char}.
\Cref{lem:key} shows that properties of $\A[f,X_f]$ and $M$ (which are polynomial-sized objects) determine a property of the exponential-sized BP~$\Bdet''$.
Unambiguousness of~$\A[f,X_f]$ is crucial for that connection.

Given that \cref{lem:key} reflects the key insight of this section (if not of this paper), let us comment further.
Suppose $\A[f,X_f]$ has two outgoing transitions in a state $(q,Y)$, say $(q,Y) \xrightarrow{Z_1} (r_1,Z_1)$ and $(q,Y) \xrightarrow{Z_2} (r_2,Z_2)$.
This branching could be ``proper branching'' as defined before \cref{lem:key}, or the original UBA~$\A$ could be nondeterministic when reading~$Y$ in~$q$ and have transitions $q \xrightarrow{Y} r_1$ and $q \xrightarrow{Y} r_2$.
Either type of branching causes non-$0$ entries in the matrix~$M$ and, intuitively, increases its spectral radius~$\rho$.
\Cref{lem:key} tells us that the probability that a $\Bdet''$-tree is infinite is governed by the \emph{combined} effect on~$\rho$ of both types of branching: if $\rho > 1$ then a $\Bdet''$-tree is infinite with positive probability; only in the borderline case, $\rho=1$, the type of branching matters.
Again, this characterization is only correct if the nondeterminism in~$\A$ does not cause ambiguousness.

Let us consider what \cref{lem:key} states for the special case of Markov chains.
In that case, clearly there is no proper branching.
One can show, using unambiguousness, that for Markov chains the spectral radius~$\rho$ of the matrix~$M$ is at most~$1$.
Hence, \cref{lem:key} states for Markov chains that the probability that a $\Bdet''$-tree (consisting of a single branch) is infinite is positive if and only if $\rho = 1$.
Indeed, a related statement can be found in \cite[Lemma~6]{16BKKKMW-CAV}.

To finish the proof of \cref{prop:coUBA-1} it suffices to show that we can check the conditions of \cref{lem:key} in~NC.
Indeed, for comparing the spectral radius with~$1$, we employ \cref{lem:determine-spectral-radius}.
One can check for proper branching in logarithmic space, hence in~NC.
This completes the proof of \cref{prop:coUBA-1}.

\section{LTL} \label{sec:LTL}

With \cref{prop:coUBA-1} from the previous section, we can now show our headline result:

\begin{theorem} \label{thm:LTL-1}
The problem $\P(\LTL)=1$ is PSPACE-complete.
\end{theorem}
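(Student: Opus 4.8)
The plan is to prove PSPACE-completeness of $\P(\LTL)=1$ by separately establishing hardness and membership, following the same modular pattern used throughout the paper.

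For PSPACE-hardness, I would observe that $\P(\LTL)=1$ subsumes both classical special cases. Model checking a finite Markov chain against an LTL formula almost surely is already PSPACE-hard~\cite{CourcoubetisYannakakis95}, and as explained in \cref{sec:intro} a Markov chain can be encoded as a BP in which every generated tree has a single branch, so that $\P_\B(\varphi)=1$ exactly captures the Markov-chain question. (Alternatively, hardness follows from LTL model checking of transition systems.) Either reduction is a logspace encoding, so PSPACE-hardness is immediate.

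For membership in PSPACE, the key is to route the problem through an \emph{unambiguous} B\"uchi automaton and then invoke \cref{prop:coUBA-1}. First I would negate the formula: $\P_\B(\varphi)=1$ holds if and only if the probability that \emph{some} branch satisfies $\neg\varphi$ is~$0$, equivalently $\P_\B(\A\text{ rejects})=1$ where $\A$ is an NBA for $\neg\varphi$; thus $\P(\LTL)=1$ reduces to $\P(\coNBA)=1$ on an automaton for the negated formula. The crucial step is that one can translate an LTL formula into an \emph{unambiguous} B\"uchi automaton of single-exponential size, using the separated-automaton construction of~\cite{CouSahSut03}; this is exactly what makes $\P(\coUBA)=1$, rather than $\P(\coNBA)=1$, the relevant subproblem. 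An inspection of that construction shows it can be carried out by a PSPACE transducer: the states are single-exponential, so each one is describable in polynomial space, and the transitions can be produced on the output tape while the work tape stays polynomially bounded. Since $\P(\coUBA)=1$ is in~NC by \cref{prop:coUBA-1}, \cref{lem:PSPACE-transducer} then yields that $\P(\LTL)=1$ is in PSPACE, completing the proof.

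The main obstacle I anticipate is verifying that the LTL-to-UBA translation of~\cite{CouSahSut03} is genuinely implementable by a PSPACE transducer (as opposed to merely running in single-exponential time). This is the same subtlety that arose in \cref{thm:NBA-1} for the NBA-to-DPA translation, where an inspection of the construction was required to confirm the polynomial work-tape bound; here one must check that the separated automaton's states and transition relation can be enumerated one at a time using only polynomial auxiliary space. Once this is in place, everything else is a routine application of \cref{lem:PSPACE-transducer} together with the already-established \cref{prop:coUBA-1}.
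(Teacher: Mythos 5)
Your proposal is correct and follows essentially the same route as the paper's proof: PSPACE-hardness via the Markov-chain and transition-system special cases, and membership by translating the negated formula $\neg\varphi$ into a single-exponential UBA (the paper uses the classical construction of~\cite{VardiWolper86} adapted to unambiguousness, citing~\cite{CouSahSut03,ChaKat14}, where you invoke the separated automata of~\cite{CouSahSut03} directly---separated automata being special UBAs, this is the same idea), implemented by a PSPACE transducer and composed with the NC algorithm of \cref{prop:coUBA-1} via \cref{lem:PSPACE-transducer}. Your anticipated obstacle, checking the transducer's polynomial work-tape bound, is handled in the paper by the same kind of inspection you describe.
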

\begin{proof}
PSPACE-hardness is immediate in two different ways.
It follows both from the PSPACE-hardness of model checking Markov chains against LTL and from the PSPACE-hardness of model checking transition systems against LTL \cite{SistlaClarke85}.
Both model-checking problems are special cases of $\P(\LTL) = 1$.

Towards membership in PSPACE, there is a classical PSPACE procedure that translates an LTL formula into an (exponential-sized) B\"{u}chi automaton~\cite{VardiWolper86}.
As noted by several authors (e.g., \cite{CouSahSut03,ChaKat14}), this procedure can easily be adapted to ensure that the B\"uchi automaton be a UBA.
By applying this translation to the negation $\neg \varphi$ of the input formula~$\varphi$, we obtain a UBA that rejects exactly those words that satisfy~$\varphi$.
By \cref{prop:coUBA-1} the problem $\P(\coUBA) = 1$ is in~NC.
By \cref{lem:PSPACE-transducer} it follows that $\P(\LTL) = 1$ is in PSPACE.
\end{proof}

Finally we show the following result, exhibiting a big complexity gap between the problems $\P(\LTL)=1$ and $\P(\LTL)=0$.

\begin{restatable}{theorem}{thmLTLzero} \label{thm:LTL-0}
The problem $\P(\LTL)=0$ is 2EXPTIME-complete.
\end{restatable}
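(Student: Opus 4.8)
The plan is to establish the two directions separately. For membership in 2EXPTIME I would follow the same recipe as for \cref{thm:NBA-0}, paying one extra exponential for the LTL-to-automaton step. Translate the input formula~$\varphi$ into an NBA~$\A$ of size $2^{O(|\varphi|)}$ accepting exactly the words that satisfy~$\varphi$; this is the classical single-exponential construction, computable in exponential time. Since $\P_\B(\varphi) = \P_\B(\A \text{ accepts})$, the problem $\P(\LTL)=0$ reduces to $\P(\NBA)=0$ on the pair $(\B,\A)$. By \cref{thm:NBA-0} the latter is solvable in time exponential in $|\A|+|\B|$ (translate $\A$ to a DPA of size $2^{O(|\A|)}$ and invoke the polynomial-time algorithm for $\P(\DPA)=0$ from \cref{thm:DPA-0}). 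Composing the two exponential blow-ups gives a $2^{2^{O(|\varphi|)}}$-time algorithm, so $\P(\LTL)=0$ is in 2EXPTIME.

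For the lower bound, the key observation is that $\P_\B(\varphi) > 0$ captures exactly the $\exists\forall$-alternation of an alternating Turing machine. Reading the tree top-down: at a node where $\B$ fires one of several rules each with positive probability, ``positive probability that all branches are good'' holds iff \emph{some} rule leads to a good continuation (an existential choice), whereas at a node that deterministically spawns several children it holds iff \emph{all} children have good continuations (a universal choice). Indeed, fixing the event~$A$ that all branches satisfy~$\varphi$, independence of the subtrees hanging off a universal node makes $\P(A)$ factor as a product over the children, positive iff every factor is positive, while at an existential node $\P(A)$ is a positive-weighted sum of the children's probabilities, positive iff one of them is. I would therefore reduce from the acceptance problem of alternating Turing machines with space bound $2^{O(n)}$, which is 2EXPTIME-complete.

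Concretely, given such a machine~$M$ and input~$x$, I would build in polynomial time a BP~$\B$ and a formula~$\varphi$ whose branches are forced to spell out a computation of~$M$ on~$x$: each configuration is an exponentially long block of cell-types, every cell carrying the tape symbol together with its position written as an $n$-bit binary counter. Existential moves of~$M$ become probabilistic rule choices in~$\B$ (one child, the chosen successor configuration), universal moves become a single rule spawning one child per successor, and halting configurations loop forever in a ``happy'' or ``unhappy'' sink according to acceptance. The formula~$\varphi$, of size polynomial in $|M|$ and~$n$, asserts along every branch that the block structure and counter increments are well-formed, that consecutive configurations respect~$M$'s transition relation, and that the branch eventually enters the happy sink. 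By the alternation correspondence above, $\P_\B(\varphi) > 0$ holds iff $M$ accepts~$x$; equivalently $\P_\B(\varphi)=0$ iff $M$ rejects~$x$. Since alternating machines are closed under dualization and 2EXPTIME is closed under complement, this is a valid reduction to $\P(\LTL)=0$.

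The main obstacle is the construction of~$\varphi$: comparing a cell with the corresponding cell of the previous configuration means relating two positions that are $2^{O(n)}$ letters apart, which cannot be done with a bounded chain of $\mathsf{X}$ operators. The standard remedy, which I would adapt, is to let each cell forward-\emph{claim} the content dictated by its local window for the same position one configuration later, and to have~$\varphi$ verify, using the counter bits, that the next cell carrying the same counter value matches the claim; this keeps~$\varphi$ polynomial while pushing the exponential cost into the (exponential) automaton for~$\varphi$, exactly where the membership argument already budgets for it. Verifying that this encoding is faithful --- in particular that malformed or non-halting branches fail~$\varphi$ and hence do not spuriously contribute to the event~$A$ --- is the part requiring the most care.
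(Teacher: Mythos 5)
Your membership half coincides with the paper's proof: translate $\varphi$ into an exponential-size NBA and invoke \cref{thm:NBA-0}. The genuine gap is in the hardness direction, at exactly the step you flag as delicate. Your proposed remedy --- each cell forward-claims the content of the same position one configuration later, and $\varphi$ verifies, ``using the counter bits'', that the \emph{next cell carrying the same counter value} matches the claim --- is not expressible by a polynomial-size LTL formula. LTL cannot bind the $n$-bit address seen at the current position and compare it against a later position; spelling the check out requires one conjunct per address, i.e., $2^n$ conjuncts. This is a provable obstruction, not a matter of cleverer encoding: the well-formedness and counter-increment checks relate positions only $O(n)$ apart and are easy, so if a polynomial-size formula could additionally enforce your claim-matching condition along a single word, then acceptance of an exponentially space-bounded \emph{deterministic} Turing machine would reduce in polynomial time to LTL satisfiability, placing EXPSPACE inside PSPACE and contradicting the space hierarchy theorem. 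Your closing remark that the exponential cost is ``pushed into the (exponential) automaton for $\varphi$'' conflates the two directions: the automaton blow-up is a resource available to the membership \emph{algorithm}, whereas the reduction itself must output a polynomial-size formula, and the automaton for that formula recognizes no more than the formula expresses.

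The paper's construction (\cref{app:LTL-0}, adapting Theorem 3.2.1 of \cite{CourcoubetisYannakakis95}) shows what is missing. Your existential-to-probabilistic and universal-to-tree-branching correspondence is the same as the paper's (nodes $m_1$ and $m_2$), as is the accepting sink with $\finally\mathsf{G}\,d$; but in addition the paper inserts \emph{extra} tree-branching blocks $C_1,C_2$ that spawn a full binary tree of height $n$, one branch per cell address $k$. Since $\P_\B(\varphi)>0$ requires $\varphi$ on \emph{all} branches, every address gets checked, yet each individual branch only has to verify a single cell whose address bits occupy uniquely marked positions on that branch; the window cells $k-1,k,k+1$ of the old configuration are re-guessed in blocks $D_1,D_2,D_3$ immediately afterwards, and the formula enforces consistency with block~$O$ by polynomially many bitwise conjuncts of the form ``if an $O$-cell's $i$-th address bit agrees with the (unique) $D_j$-block's $i$-th bit for all $i$, then the contents agree''. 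To repair your proof you would need such a gadget that lets the tree branching (or randomization) \emph{select} the cell to be verified on each branch, thereby localizing the exponential-distance comparison to uniquely identified blocks; with that addition, the rest of your outline goes through essentially as in the paper.
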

\begin{proof}
For membership in 2EXPTIME, we use again the classical procedure that translates an LTL formula into an exponential-sized B\"{u}chi automaton~\cite{VardiWolper86} and then invoke \cref{thm:NBA-0}.

For 2EXPTIME-hardness we adapt the reduction from \cite[Theorem 3.2.1]{CourcoubetisYannakakis95} for MDPs.
The details are in \cref{app:LTL-0}.
\end{proof}

\section{Conclusions} \label{sec:conclusions}

We have devised a PSPACE procedure for $\P(\LTL) = 1$, i.e., qualitative LTL model checking of BPs.
The best previously known procedure ran in 2EXPTIME~\cite{12CDK-MFCS}.
Since BPs naturally generalize both transition systems and Markov chains (for both of which LTL model checking is PSPACE-complete), one might view our model-checking algorithm as an optimal general procedure.
The same holds for NBA-specifications instead of LTL.

The main technical ingredients have been the automata-theoretic approach and the algorithmic analysis of UBAs, nonnegative matrices, and finiteness of BPs.
Our proofs were inspired by the observation that the spectral radii of certain nonnegative matrices are central to model checking Markov chains against UBAs, and also determine fundamental properties of BPs.
Very loosely speaking, when model checking Markov chains against UBAs, the spectral radius measures the amount of nondeterministic branching in the UBA, whereas when analyzing BPs, the spectral radius measures the amount of tree branching.
The ``general case'', i.e., model checking BPs, features both kinds of branching.
Serendipitously, an analysis of spectral radii still leads, as we have seen, to optimal algorithms.

We have also established the complexities of related problems, partially as a tool for the mentioned LTL and NBA problems and partially to map out the landscape.
We have shown that the $\P(\cdot) = 0$ variants are more complex than their $\P(\cdot) = 1$ counterparts.
An intuitive explanation of this phenomenon is that for an instance of an $\P(\cdot)=1$ problems to be negative, tree branching and probabilistic branching ``work together'' to falsify the specification on some branch.
In contrast, for $\P(\cdot)=0$ problems, tree branching and probabilistic branching are ``adversaries'', like in MDPs.
Indeed, for lower bounds on $\P(\cdot)=0$ problems we have encoded alternation in various forms.

One might ask about the complexity of $\P(\UBA) = 1$.
Indeed, in trying to solve $\P(\LTL) = 1$ efficiently, the authors set out to solve $\P(\UBA)=1$ efficiently (perhaps in P or even NC), with the PSPACE transduction from LTL to UBA in mind.
However, the complexity of UBA universality is an open problem \cite{Rabinovich18}; only membership in PSPACE is known.
So even for the fixed transition system with $a \btran{1} a b$ and $b \btran{1} a b$ the problem $\P(\UBA)=1$ cannot be placed in P without improving the complexity of UBA universality.
A PSPACE-hardness proof of $\P(\UBA)=1$ might have to make use of both types of branching in BPs, as $\P(\UBA)=1$ is in NC for Markov chains~\cite{16BKKKMW-CAV}.

Model checking BPs quantitatively, i.e., computing the satisfaction probability, comparing it with a threshold, or approximating it, is left for future work.
Exact versions of these problems are computationally complex, as they are at least as hard as the corresponding $\P(\cdot)=0$ problem.
The paper~\cite{12CDK-MFCS} describes, for DPAs, nonlinear equation systems whose least nonnegative solution characterizes the satisfaction probabilities.
Newton's method is efficient for approximating the solution of such equation systems; see \cite{StewartEY15,EtessamiSY17}.

\bibliography{literature}

\newpage
\appendix
\crefalias{section}{appendix}
\section{Proofs of \texorpdfstring{\cref{sec:basic-results}}{Section \ref{sec:basic-results}}}
\subsection{Proof of \texorpdfstring{\cref{lem:as-finiteness-char}}{Lemma \ref{lem:as-finiteness-char}}}
\lemasfinitenesschar*
\begin{proof}
By \cref{lem:as-finite-SCC} it suffices to show that if $G$ is strongly connected then a $\B$-tree is infinite with positive probability if and only if $\Gamma$ is supercritical or linear.
So, let $G$ be strongly connected.

Call a type $X \in \Gamma$ \emph{immortal} if there is no finite $X$-tree.
If a type~$X$ is immortal, then the probability that a $\B[X]$-tree is finite is~$0$, and for all $Y \in \Gamma$ the probability that a $\B[Y]$-tree is finite is less than~$1$ (as $G$ is strongly connected).
If $\Gamma$ is linear then all types are immortal.
So we can assume in the following that $\Gamma$ is not linear.

Since $\Gamma$ is not linear and $G$ is strongly connected, for all $X \in \Gamma$ there is a finite prefix of an $X$-tree that has at least two leaves of type~$X$; and an $X$-tree has, with positive probability, that finite prefix.
Suppose some type, say $X$, is immortal.
Then every $X$-tree that has a finite prefix with $k$ ($k \in \N$) leaves of type~$X$ has at least $k$ (infinite) branches that go through those $k$ nodes of type~$X$.
By strong connectedness, it follows that an $X$-tree that has a finite prefix with $k$ leaves of type~$X$ has, with probability~$1$ (i.e., $\P_X$-almost surely), a finite prefix with $k+1$ leaves of type~$X$.
Hence, with probability~$1$ we have $\lim_{i \to \infty} Z_i = \infty$, where $Z_i$ is the number of nodes that are exactly $i$ levels under the root of an $X$-tree.
Denoting by $\Ex$ the expectation with respect to~$\P_X$, it follows with Fatou's lemma that $\lim_{i \to \infty} \Ex Z_i = \infty$.
We can write $\Ex Z_i = (M^i \vec{1})_X$, where $\vec{1}$ denotes the vector in~$\{1\}^\Gamma$.
Then \[ \lim_{i \to \infty} \norm{M^i \vec{1}} \ = \ \infty\,,\] where $\norm{\cdot}$ denotes the $1$-norm.
Since $G$ is strongly connected, by the Perron-Frobenius theorem \cite[Theorem~2.1.4~(b)]{BermanP94}, $M$ has an eigenvector $v \in \R^\Gamma$ with $v_Y > 0$ for all $Y \in \Gamma$ and $M v = \rho v$, where $\rho$ denotes the spectral radius of~$M$.
Since $M^i v = \rho^i v$, it follows that \[\lim_{i \to \infty} \rho^i \norm{v} \ = \ \lim_{i \to \infty} \norm{M^i v} \ = \ \infty\,.\]
Hence $\rho>1$.
We conclude that if some type is immortal, then we have $\rho>1$ and the probability that a $\B$-tree is finite is less than~$1$.

On the other hand, if no type is immortal, then it follows from \cite[Section~3]{13EGK-IPL} (building on \cite[Section~8.1]{EtessamiYannakakis09}) that a $\B$-tree is infinite with positive probability if and only if $\rho > 1$.
We conclude that, regardless of whether there exists an immortal type, a $\B$-tree is infinite with positive probability if and only if $\Gamma$ is supercritical.
\end{proof}

\subsection{Proof of \texorpdfstring{\cref{prop:as-finite-NC}}{Proposition \ref{prop:as-finite-NC}}}
\propasfiniteNC*
\begin{proof}
We use the characterization from \cref{lem:as-finiteness-char}.
One can compute in NL, and hence in NC, the directed acyclic graph of SCCs of~$G$ and the set of types that are reachable from~$X_0$.
Therefore, we assume without loss of generality that $G$ is strongly connected.
By \cref{lem:determine-spectral-radius} one can check in NC whether $\Gamma$ is supercritical.
Whether $\Gamma$ is linear can be checked in logarithmic space, hence in~NC.
\end{proof}

\subsection{Proof of \texorpdfstring{\cref{lem:AFT-1}}{Lemma \ref{lem:AFT-1}}}
\lemAFTone*
\begin{proof}
The problem can be rephrased as almost-sure finiteness of BPs with $\varepsilon$-rules allowed.
Indeed, given $\B$ and~$T$, one can eliminate all occurrences of types in~$T$ from all right-hand sides; then, $\finally T$ in the original BP corresponds to finiteness in the new BP.
Hence, the result follows from \cref{prop:as-finite-NC}.
\end{proof}

\subsection{Proof of \texorpdfstring{\cref{thm:DPA-1}}{Theorem \ref{thm:DPA-1}}}
\thmDPAone*
\begin{proof}
In \cite[Section~3]{12CDK-MFCS} it was shown that the problem $\P(\DPA) = 1$ can be decided in polynomial time.
We show how to implement this approach in NC.
In \cite[Section~3]{12CDK-MFCS} a product of the BP and the DPA is computed and analyzed; the product is a BP whose types are coloured with \emph{priorities} (natural numbers).
The product, call it $\B = \BPtuple$, can be computed in logarithmic space.
The question is then whether the probability is~$1$ that all branches of a $\B$-tree are such that the highest priority that appears infinitely often is even.
It is shown in \cite[Section~3.1]{12CDK-MFCS} that this is the case if and only if all types $X$ that are reachable from~$X_0$ and are associated with an odd priority satisfy $\P_X(\finally N_X) = 1$, where $N_X \subseteq \Gamma$ is a certain set of types that can be computed in NL by a simple reachability analysis in~$\B$.
By \cref{lem:AFT-1} one can determine in NC whether $\P_X(\finally N_X) = 1$.
The theorem follows.
\end{proof}

\subsection{Proof of \texorpdfstring{\cref{thm:DPA-0}}{Theorem \ref{thm:DPA-0}}}
\thmDPAzero*
\begin{proof}
Membership in P was shown in \cite[Theorem~15]{12CDK-MFCS}.
For P-hardness we reduce from the monotone circuit value problem.
Given a monotone circuit with output gate $g_{\mathit{out}}$, we construct a BP $\B = (\Gamma,\mathord{\btran{}},\Prob,X_{g_{\mathit{out}}})$ and a DBA~$\A$ such that $\P_{X_{g_{\mathit{out}}}}(\A \text{ accepts}) > 0$ if and only if $g_{\mathit{out}}$ evaluates to~$1$.

For each $\land$- and each $\lor$-gate~$g$ include a type $X_g \in \Gamma$.
Also include types $X_0, X_1 \in \Gamma$ for the inputs $0, 1$, respectively.
For each $\land$-gate~$g$ with children $g_1, \ldots, g_k$ include a rule
\begin{align*}
X_g &\btran{1} X_{g_1} \cdots X_{g_k}\,.
\intertext{For each $\lor$-gate~$g$ with children $g_1, \ldots, g_k$ include rules}
X_g &\btran{1/k} X_{g_1},\ \ldots,\ X_g \btran{1/k} X_{g_k}\,.
\intertext{Include rules}
X_0 &\btran{1} X_0 \text{ and }X_1 \btran{1} X_1\,.
\end{align*}

Construct a two-state DBA~$\A$ that accepts exactly those $w \in \Gamma^\omega$ that contain~$X_1$. 
It follows from a straightforward induction over the gate height (longest distance to an input) that, for any gate~$g$, it evaluates to~$1$ if and only if $\P_{X_g}(\A \text{ accepts}) > 0$.
\end{proof}

\subsection{Additional Definitions Concerning Alternating Turing Machines} \label{app:defs}

An \emph{alternating Turing machine} is a 6-tuple $(S_\exists, S_\forall, \Sigma, T, s_0, \sacc)$, where $S = S_\exists \cup S_\forall \cup \{\sacc\}$ is a finite set of (control) states partitioned into existential states $S_\exists$ and universal states $S_\forall$ and the (only) accepting state~$\sacc$, $\Sigma$ is a finite alphabet, $T \subseteq (S_\exists \cup S_\forall) \times \Sigma \times \Sigma \times \{-1,+1\} \times S$ is a transition relation, and $s_0$ is the initial state.
A transition $(s,a,a',D,s') \in T$ means that if $M$ is in state~$s$ and its head reads letter~$a$, then it rewrites the content of the current cell with the letter~$a'$, it moves the head in direction~$D$ (either left if $D=-1$, or right if $D=+1$), and it changes its state to~$s'$.
We assume that for all $s \in S_\exists \cup S_\forall$ and $a \in \Sigma$ there is at least one outgoing transition.
A configuration of an alternating Turing machine is given by a 3-tuple $(i, s, w)$ where $i \in \N$ indicates the header position, $s \in S$ is the current state of the Turing machine, and $w \in \Sigma^*$ is the contents of the memory tape.

For a word $w \in \Sigma^*$ we will write $w(i)$ to denote its $i$'th component, and $w[i=a]$ to denote the string $w(1)\ldots w(i-1) a w(i+1)\ldots w(n)$. The initial configuration of an alternating Turing machine on a word $w \in \Sigma^*$ is $(1, s_0, w)$, and given a transition $t = (s, a, a', D, s')$ and a configuration $c = (i, s'', w)$ we will use $c \triangleright t$ to denote the configuration $(i+D, s', w[i=a'])$ if $s'' = s$ and $w_i = a$ (and leave it undefined otherwise). We extend this notation to sequences of transitions $t_1\ldots t_n \in T^*$ by writing $c \triangleright t_1\ldots t_n = (c \triangleright t_1) \triangleright t_2\ldots t_n$. We will use $\pi_\N$, $\pi_S$, and $\pi_{\Sigma^*}$ to denote the projections of a configuration $(i, s, w)$ to $i$, $s$, and $w$, respectively.

For any $(s, a) \in (S_\exists \cup S_\forall) \times \Sigma$, let $T_{s,a} := \{(s,a,a',D,s') \in T \mid a' \in \Sigma,\ D \in \{-1,+1\},\ s' \in S\}$. A string $r = r_1r_2\ldots \in T^*\cup T^\omega$ is called a \emph{run} of $M$ on $w$ if for all $i$, $r_{i+1} \in T_{\pi_S(c_{i}), \pi_{\Sigma^*}(c_{i})(\pi_\N(c_{i}))}$, where $c_i = c_0\triangleright r_1\ldots r_i$ whenever $r_{i+1}$ exists. Any run $r$ is called an \emph{accepting run} if $r \in T^*$ and $\pi_S(c_0\triangleright r) = \sacc$. A strategy is a function $\sigma : \{(i, s, w, r) \in \N \times S \times \Sigma^* \times T^* \mid s \in S_\exists\} \rightarrow T$ such that $\sigma(i, s, w, r) \in T_{s, w_i}$ for any $i, s, w, r$. For a configuration $c$ and a run $r$, by abuse of notation, we will write $\sigma(c, r)$ to mean $\sigma(\pi_\N(c), \pi_S(c), \pi_{\Sigma^*}(c), r)$. A run is \emph{consistent} with a strategy $\sigma$ if for all $i$, $r_{i+1} = \sigma(c_0\triangleright r_1\ldots r_i, r_1\ldots r_i)$ whenever the latter is defined. The \emph{behaviour} of $\sigma$ is the set of runs consistent with $\sigma$.
A \emph{winning strategy} is a strategy $\sigma$ such that every run $r$ consistent with $\sigma$ is an accepting run. Note that whenever $\sigma$ is winning, its behaviour is finite.


Let $\sigma$ be a strategy, then we call $\sigma$ \emph{$N$-bounded} if for any $r$ in the behavior of $\sigma$, and any prefix $r_1\ldots r_k$ of $r$, $|\pi_{\Sigma^*}(c_0\triangleright r_1\ldots r_k)| < N$. A Turing machine is $N$-bounded if it has an $N$-bounded winning strategy.

\subsection{Proof of \texorpdfstring{\cref{thm:NBA-0}}{Theorem \ref{thm:NBA-0}}} \label{app:NBA-0}
\thmNBAzero*
\begin{proof}
Given the main body, it remains to prove EXPTIME-hardness of $\P(\NBA) = 0$.

Recall that alternating PSPACE equals EXPTIME.
We give a polynomial-time reduction from the problem of acceptance of a word by a PSPACE-bounded alternating Turing machine.
Without loss of generality, we can assume that the Turing machine is linear-bounded, i.e., uses only the space occupied by the input word.

Let $M = (S_\exists, S_\forall, \Sigma, T, s_0, \sacc)$ be a linear-bounded alternating Turing machine.
Let $w = a_1 \cdots a_n \in \Sigma^*$ be the input word.
As mentioned before, we can assume that $M$ uses exactly $n$ tape cells.
We construct a BP~$\B$ and an NBA~$\A$ such that $\P_\B(\A \text{ accepts}) > 0$ if and only $M$ accepts~$w$.

The BP~$\B$ has the following set of types:
\[
 \Gamma \ = \ (\{1, \ldots, n\} \times S \times \Sigma) \ \cup\ (\{1, \ldots, n\} \times T) \ \cup\ (\{1, \ldots, n\} \times \{\chk\}) \ \cup\ \{E\}
\]
Intuitively, a type $(i,s,a)$ means that the head is at position~$i$, the current state is~$s$, and the head is reading letter~$a$; a type $(i,t)$ means that the head is at position~$i$, and transition~$t$ is being executed;
a type $(i,\chk)$ means that the accepting state has been reached, and cell~$i$ is being ``checked'' (in a sense to be explained later); type~$E$ indicates an error.
The type $(1,s_0,a_1)$ is the start type of~$\B$.

For all $(i,s,a) \in \Gamma$ with $s \in S_\exists$, include rules
\[
 (i,s,a) \btran{1/k} (i,t_j) \quad (1 \le j \le k)\,,
\]
where $\{t_1, \ldots, t_k\} = T_{s,a}$.
(Intuitively, a transition going out of an existential state is chosen as the only child, uniformly at random.)
For all $(i,s,a) \in \Gamma$ with $s \in S_\forall$, include a rule
\[
 (i,s,a) \btran{1} (i,t_1) \cdots (i,t_k) \,,
\]
where $\{t_1, \ldots, t_k\} = T_{s,a}$.
(Intuitively, all possible transitions going out of a universal state are children.)
For all $(i,(s,a,a',D,s')) \in \Gamma$ with $1 \le i{+}D \le n$, include rules
\[
 (i,(s,a,a',D,s')) \btran{1/|\Sigma|} (i{+}D, s', b_j) \quad (1 \le j \le |\Sigma|)\,,
\]
where $\{b_1, \ldots, b_{|\Sigma|}\} = \Sigma$.
(Intuitively, the letter in the cell at the new head position $i{+}D$ is guessed uniformly at random.)
For all $(i,(s,a,a',D,s')) \in \Gamma$ with $i{+}D \in \{0,n{+}1\}$, include a rule
\[
 (i,(s,a,a',D,s')) \btran{1} E\,.
\]
(Intuitively, when the space bound is exceeded, move to the error type~$E$.)
Include a rule $E \btran{1} E$ (i.e., a self-loop).
For all $(i, \sacc, a) \in \Gamma$, include a rule
\[
 (i, \sacc, a) \btran{1} (1,\chk) \cdots (n,\chk)\,.
\]
(Intuitively, after reaching~$\sacc$ all $n$ cells are ``checked''.)
For all $(i,\chk) \in \Gamma$, include a rule $(i,\chk) \btran{1} (i,\chk)$ (i.e., a self-loop).

The NBA $\A = (Q, \Gamma, \delta, Q_0, \{f\})$ has the following set of states:
\[
 Q \ = \ (\{1, \ldots, n\} \times \Sigma) \ \cup \ \{f\}
\]
The set of initial states is $Q_0 = \{(1,a_1), \ldots, (n,a_n)\}$ (recall that $a_1 \cdots a_n$ is the input word).
The idea is that if a prefix $X_1 \cdots X_k \in \Gamma^*$ of a tree branch corresponds to a prefix of a correct computation of~$M$, then the set of automaton states in $\delta(Q_0,X_1 \cdots X_k)$ corresponds to the tape after this computation prefix.
In fact, the transition relation $\delta \subseteq Q \times \Gamma \times Q$ is \emph{deterministic}, i.e., for all $q \in Q$ and $X \in \Gamma$ there is at most one $q'$ with $(q,X,q') \in \delta$.
Moreover, for any transition $((i,a), X, (j,a')) \in \delta$ we will have $i=j$.

For $(i,a) \in Q$ and all $(j,s,b) \in \Gamma$ with $i \ne j$ or $a=b$, include a self-loop
\[
 ((i,a),(j,s,b),(i,a)) \in \delta\,.
\]
(Intuitively, letter~$a$ in cell~$i$ is compatible with the head being on cell~$j$ and reading letter~$b$.)
For all $(i,a) \in Q$ and all $(j,t) \in \Gamma$ with $i \ne j$, include a self-loop
\[
 ((i,a),(j,t),(i,a)) \in \delta\,.
\]
(Intuitively, the content of cell~$i$ stays unchanged when the head is at position~$j$.)
For all $i,s,a,a',D,s'$ with $(i,a) \in Q$ and $(i,(s,a,a',D,s')) \in \Gamma$, include a transition
\[
 ((i,a), (i,(s,a,a',D,s')), (i,a'))\in \delta\,.
\]
(Intuitively, the transition $(s,a,a',D,s')$ changes the content of cell~$i$ from $a$ to~$a'$.)
For all $(i,a) \in Q$, include a transition
\[
 ((i,a),(i,\chk),f) \in \delta\,.
\]
(Intuitively, the type $(i,\chk)$ checks if the computation has been consistent in cell~$i$.)
For all $X \in \Gamma$, include a self-loop $(f,X,f) \in \delta$.

We will show that in this case, $\P_\B(\A \text{ accepts}) > 0$ if and only if $M$ accepts~$w$.
Firstly note that $f$ is a sink state in $\A$, and hence, if $\A$ reaches $f$ after reading some prefix of a branch, it will accept any branch with this prefix. This means that a tree $t$ is accepted by $\A$ if and only if there exists a finite prefix of $t$ such that $\A$ reaches $f$ on all of its branches.

Assume that $M$ accepts $w$. Then there exists a linear-bounded winning strategy $\sigma$ for the existential player. We will write $c_0$ for the initial configuration of $M$ on $w$, $R$ for the behaviour of $\sigma$, and $N$ for the length of the longest run in $R$. We will show that there exists a finite prefix generated by $\B$ with nonzero probability that precisely models this strategy, and that is such that $\A$ reaches $f$ on all of its branches. Since any tree with this prefix is accepted, this implies that $\P_\B(\A \text{ accepts}) > 0$ if $M$ accepts~$w$.

Let $t$ be the tree defined as follows:
\begin{itemize}
	\item the root of $t$ is $(1, s_0, a_1)$,
	\item for any node $(p_i, s_i, b_i)$ on level $2i$ with $s_i \in S_\exists$, let $t_1\ldots t_{i}$ be such that the branch from the root to $(p_i, s_i, b_i)$ has nodes $(p_j, t_{j+1})$ on level $2j+1$ for each $0 \leq j < i$. Then $(p_i, s_i, b_i)$ has a child $(p_i, \sigma(c_0 \triangleright t_1\ldots t_{i}, t_1\ldots t_i))$,
	\item for any node $(p_i, s_i, b_i)$ on level $2i$ with $s_i \in S_\forall$, let $t_1\ldots t_{i}$ be such that the branch from the root to $(p_i, s_i, b_i)$ has nodes $(p_j, t_{j+1})$ on level $2j+1$ for each $0 \leq j < i$. Let $w_i = \pi_{\Sigma^*}(c_0 \triangleright t_1\ldots t_{i})$. Then $(p_i, s_i, b_i)$ has children $(p_i, t_{i+1})$ for each $t_{i+1} \in T_{s_i, w_i(p_i)}$,
	\item for any node $(p_i, t_{i+1})$ with $t_{i+1} = (s, a, a', D, s')$ on level $2i+1$, let $t_1\ldots t_{i}$ be such that the branch from the root to $(p_i, t_{i+1})$ has nodes $(p_j, t_{j+1})$ on level $2j+1$ for each $0 \leq j < i$. Let $c_{i+1} = c_0 \triangleright t_1\ldots t_{i+1}$. Then $(p_i, t_{i+1})$ has a child $(p_i+D, s', \pi_{\Sigma^*}(c_{i+1})(p_i+D))$,
	\item any node $(p_i, \sacc, b_i)$ has children $(j, \chk)$ for each $1 \leq j \leq n$,
	\item and any node $(j, \chk)$ has a child $(j, \chk)$.
\end{itemize}
By construction, $t$ is now such that for any node $(p_i, t_{i+1})$ with $t_{i+1} = (s, a, a', D, s')$ on level $2i+1$, there exist runs in $R$ prefixed by $t_1 \ldots t_{i+1}$, where $t_1\ldots t_{i}$ are such that the branch from the root to $(p_i, t_{i+1})$ has nodes $(p_j, t_{j+1})$ on level $2j+1$ for each $0 \leq j < i$.
Hence, by the fact that $\sigma$ is linear-bounded, $1 \leq p_i+D \leq n$ and thus all the transitions in $t$ are according to the rules of $\B$.
Moreover, since the length of runs in $R$ is bounded by $N$, all the nodes at level $2N+1$ must be of the form $(i, \chk)$, and since all states $(i, \chk)$ are sink states, $t$ is generated with a nonzero probability.
Finally pick any state $(i, \chk)$ in $t$.
The only types from which $(i, a) \in Q$ does not have an outgoing edge are of the form $(i, s, b)$.
However, for any state $(i, s, b)$ at level $2j$ on the branch to $(i, \chk)$, let $t_1\ldots t_{j}$ be such that the branch from the root to $(i,s,b)$ has nodes $(p_k, t_{k+1})$ on level $2k+1$ for each $0 \leq k < j$.
Then $(i, b) = (i, \pi_{\Sigma^*}(c_0 \triangleright t_1\ldots t_{j})(i)) = \delta((i,w(i)), (p_0, t_1)\ldots(p_{j-1}, t_{j}))$ and hence $(i, a)$ survives, and reaches $f$ upon reading $(i, \chk)$.

For the other direction, assume that $\P_\B(\A \text{ accepts}) > 0$.
Then there exists a prefix $t$ generated by $\B$ with branches of length $N$ for some $N$ such that $\A$ reaches $f$ on all of its branches. W.l.o.g.~we can assume that $\pi_S(c_0) \neq \sacc$, because otherwise any strategy is winning.
Note that for any accepted tree $t$ and any branch prefix $(1, s_0, a_1)(1, t_1)\ldots (p_i, s_i, b_i)(p_i, t_{i+1})$ in $t$, the set of reachable states in the automaton from $Q_0$ reflects the tape contents of $M$, ie.~$\delta(Q_0, (1, s_0, a_1)(1, t_1)\ldots (p_i, s_i, b_i)(p_i, t_{i+1}))=\{(j, w(j))\mid 1 \leq j \leq n, w = \pi_{\Sigma^*}(c_0 \triangleright t_1\ldots t_{i+1})\}$. Also $s_i = \pi_S(c_0 \triangleright t_1\ldots t_i)$.
If this was not the case, then there would exist $j$ such that $\delta((j, a_j),  (1, s_0, a_1)(1, t_1)\ldots (p_i, s_i, b_i)(p_i, t_{i+1})) = \emptyset$ and hence the branch reaching $(j, \chk)$ prefixed by $(1, s_0, a_1)(1, t_1)\ldots (p_i, s_i, b_i)(p_i, t_{i+1})$ does not reach $f$.
Let $n_0n_1\ldots$ be any branch in $t$.
Pick $m$ such that $n_{2i+1} = (a_i, t_{i+1})$ for all $0 \leq i < m$ and $n_{2m+1} = (a_m, t_{m+1})$.
Let $c = c_0 \triangleright t_1\ldots t_m$.
Then we define $\sigma$ to be any strategy such that for any such $c$, if $\pi_S(c) \in S_\exists$, then $\sigma(c, t_1\ldots t_m) = t_{m+1}$.
This is a valid strategy since for any branch in $t$, the set of reachable states of the automaton after reading a branch prefix reflects the tape contents of $M$ and the alphabet characters contained in the nodes of the branch prefix have to reflect the automaton states.
Note that $\sigma$ is $n$-bounded. We claim that $\sigma$ is a winning strategy and hence $M$ accepts $w$.

Let $r = t_1t_2\ldots$ be a run consistent with $\sigma$. We will show that there exists a branch $(1, s_0, a_1)(1, t_1)\ldots (p_i, t_{i+1})(p_{i+1}, \sacc, w_{i+1}(p_{i+1}))$ in $t$ with $p_i = \pi_\N(c_0\triangleright t_1\ldots t_i)$, $s_i = \pi_S(c_0\triangleright t_1\ldots t_i)$, and $w_i = \pi_{\Sigma^*}(c_0\triangleright t_1\ldots t_i)$ (and hence $|r| = i+1$ and $\pi_S(c_0\triangleright r) = \sacc$).
\begin{itemize}
	\item For any branch prefix $(1, s_0, a_1)(1, t_1)\ldots (p_j, t_{j+1})$ where $t_{j+1} = (s, a, a', D, s')$, according to the rules of $\B$,  $(p_j, t_{j+1})$ has a single child $(p_i+D, s', b)$. Let $c_{j+1} = c_0 \triangleright t_1\ldots t_{j+1}$. Since trees prefixed by $t$ are accepted, $p_i+D = \pi_\N(c_{j+1})$, $s' = \pi_S(c_{j+1})$, and $b = \pi_{\Sigma^*}(c_{j+1})(\pi_\N(c_{j+1}))$.
	\item For any branch prefix $(1, s_0, a_1)(1, t_1)\ldots (p_{j-1}, t_{j})(p_j, s_j, b_j)$ where $s_j \in S_\forall$, according to the rules of $\B$, $(p_j, s_j, b_j)$ has children $(p_j, t')$ for each $t' \in T_{s_j, b_j}$. Let $c_j = c_0\triangleright t_1\ldots t_j$. Since trees prefixed by $t$ are accepted, $s_j =  \pi_S(c_j)$ and $b_j = \pi_{\Sigma^*}(c_j)(\pi_\N(c_j))$. Hence, $t_{j+1} \in T_{s_j, b_j}$ and $(p_j, s_j, b_j)$ has a child $(p_j, t_{j+1})$.
	\item For any branch prefix $(1, s_0, a_1)(1, t_1)\ldots (p_{j-1}, t_{j})(p_j, s_j, b_j)$ where $s_j \in S_\exists$, according to the rules of $\B$, $(p_j, s_j, b_j)$ has a single child $(p_j, t')$. By definition of $\sigma$, $\sigma(c_0\triangleright t_1\ldots t_j, t_1\ldots t_j) = t'$.
\end{itemize}
Since $t$ is finite, every branch reaches states of the form $(p, \sacc, b)$ in a finite number of steps. Hence, $r$ is finite and reaches $\sacc$ and since $r$ is any run consistent with $\sigma$, $\sigma$ is an winning strategy. Thus, $M$ accepts $w$.

\end{proof}

\section{Proofs of \texorpdfstring{\cref{sec:coNBA}}{Section \ref{sec:coNBA}}}
\subsection{Proof of \texorpdfstring{\cref{lem:UBA-X1-f}}{Lemma \ref{lem:UBA-X1-f}}} \label{app:coNBA-1}
\lemUBAXonef*
\begin{proof}
Consider a branch $X_0 X_1 X_2 \cdots$ of a $\B$-tree accepted by~$\A$.
Then $\A$ has an accepting run \[q_0 \xrightarrow{X_0} q_1 \xrightarrow{X_1} q_2 \xrightarrow{X_2} \cdots\] with $q_0 \in Q_0$.
By the pigeonhole principle, there are $f \in F$ and $X_f \in \Gamma$ such that this run contains the segment $f \xrightarrow{X_f}$ infinitely often.
By its construction, the B\"uchi automaton $\A \times \B$ has the accepting run \[(q_0,X_0) \xrightarrow{X_1} (q_1,X_1) \xrightarrow{X_2} \cdots\,,\] which contains the state $(f,X_f)$ infinitely often.
Let $k \ge 1$ be such that $(q_k,X_k) = (f,X_f)$.
Then $\A[f,X_f]$ accepts $X_k X_{k+1} \cdots$ via the run \[\bar{q}_0 \xrightarrow{X_k} (q_k,X_k) \xrightarrow{X_{k+1}} (q_{k+1},X_{k+1}) \cdots\,.\]
Therefore, denoting by $E(f,X_f,n)$ for $n \in \N$ the event that there exists a branch $X_k X_{k+1} \cdots$ emanating from the $n$th (in a breadth-first order) node in the tree (necessarily a node of type $X_k = X_f$) such that $\A[f,X_f]$ has an accepting run \[\bar{q}_0 \xrightarrow{X_k} (q_k,X_k) \xrightarrow{X_{k+1}} (q_{k+1},X_{k+1}) \cdots\,,\] we have
\begin{align*}
\P_{X_0}(\A \text{ accepts some branch})\ \le \ &\sum_{f \in F} \sum_{X_f \in \Gamma} \sum_{n \in \N} \P_{X_0}(E(f,X_f,n))\,.
\end{align*}
Further, 
\begin{align*}
  &\ \P_{X_0}(E(f,X_f,n)) \\
 =&\ \P_{X_0}(\text{the $n$th node has type~$X_f$})  \cdot \P_{X_f}(\A[f,X_f] \text{ accepts some branch}) \\
 \le&\ \P_{X_f}(\A[f,X_f] \text{ accepts some branch})\,.
\end{align*}
The ``only if'' direction follows.

Towards the ``if'' direction, suppose that $\A \times \B$ has a path $(q_0,X_0) \Xrightarrow{w}* (f,X_f)$ with $q_0 \in Q_0$ and $f \in F$ such that the probability that some branch of a $\B[X_f]$-tree is accepted by~$\A[f,X_f]$ is positive.
Then there is a successor, say~$X_f'$, of~$X_f$ such that the probability that some branch of a $\B[X_f']$-tree is accepted by $\A \times \B$ when started in $(f,X_f)$ is positive.
Thus, the probability that some branch of a $\B$-tree (starts with $X_0 w X_f'$ and) is accepted by~$\A$ is positive.
\end{proof}

\subsection{Proof of \texorpdfstring{\cref{lem:Bdet}}{Lemma~\ref{lem:Bdet}}} \label{sub:Bdet-proof}

In this subsection we prove \cref{lem:Bdet}, which is instrumental for the main results of the paper.

\lemBdet*
\medskip

Fix a word $w \in \Gamma^+$ such that $\delta[f,X_f]((f,X_f),w)$ is maximal, i.e., there is no $w' \in \Gamma^+$ such that $\delta[f,X_f]((f,X_f),w') \supsetneq \delta[f,X_f]((f,X_f),w)$.

\begin{lemma} \label{lem:UBA-w}
Let $v \in \Gamma^*$ be a word such that $\delta[f,X_f]((f,X_f),v) \ni (f,X_f)$.
Then, $\delta[f,X_f]((f,X_f),v w) = \delta[f,X_f]((f,X_f), w)$; i.e., for any path $(f,X_f) \Xrightarrow{v w}* (q,X)$ there is a path $(f,X_f) \Xrightarrow{v}* (f,X_f) \Xrightarrow{w}* (q,X)$.
\end{lemma}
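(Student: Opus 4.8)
The plan is to treat $\delta[f,X_f]((f,X_f),u)$, for a word $u \in \Gamma^*$, as the value of the extended powerset transition function, i.e.\ the set of states $(q,X)$ for which there is a path $(f,X_f) \Xrightarrow{u}* (q,X)$ in $\A[f,X_f]$. With this reading the claim is a set equality, $\delta[f,X_f]((f,X_f),vw) = \delta[f,X_f]((f,X_f),w)$, from which the path reformulation is immediate. I would prove the two inclusions separately, the nontrivial one coming from the maximality of~$w$.

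First I would establish $\delta[f,X_f]((f,X_f),w) \subseteq \delta[f,X_f]((f,X_f),vw)$. Since $\delta[f,X_f]((f,X_f),vw) = \bigcup_{(q,X) \in \delta[f,X_f]((f,X_f),v)} \delta[f,X_f]((q,X),w)$ and, by hypothesis, $(f,X_f) \in \delta[f,X_f]((f,X_f),v)$, the set $\delta[f,X_f]((f,X_f),w)$ occurs as one of the summands of this union; hence every state reachable from $(f,X_f)$ by reading~$w$ is also reachable by reading~$v$ (returning to $(f,X_f)$) and then~$w$. For the reverse inclusion I would invoke maximality: because $w \in \Gamma^+$ we also have $vw \in \Gamma^+$, so $vw$ is an admissible competitor in the definition of~$w$; were $\delta[f,X_f]((f,X_f),vw) \supsetneq \delta[f,X_f]((f,X_f),w)$, the word $vw$ would violate the chosen maximality of $\delta[f,X_f]((f,X_f),w)$. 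Combining the two inclusions gives equality.

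Finally, the ``i.e.'' clause follows by unfolding the equality: given any path $(f,X_f) \Xrightarrow{vw}* (q,X)$ we have $(q,X) \in \delta[f,X_f]((f,X_f),vw) = \delta[f,X_f]((f,X_f),w)$, so there is a path $(f,X_f) \Xrightarrow{w}* (q,X)$; prefixing it with a path $(f,X_f) \Xrightarrow{v}* (f,X_f)$, which exists since $(f,X_f) \in \delta[f,X_f]((f,X_f),v)$, yields the stated decomposition $(f,X_f) \Xrightarrow{v}* (f,X_f) \Xrightarrow{w}* (q,X)$. I do not expect a genuine obstacle here: this is a purely automata-theoretic saturation argument about maximal reachable sets, and notably it does not use unambiguousness of~$\A$ (which becomes essential only later, e.g.\ in \cref{lem:key}). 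The only points requiring care are checking that $vw \in \Gamma^+$ so that maximality genuinely applies, and keeping the powerset reading of $\delta[f,X_f]$ consistent throughout.
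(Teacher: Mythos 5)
Your proof is correct and follows essentially the same route as the paper's (much terser) argument: the inclusion $\delta[f,X_f]((f,X_f),w) \subseteq \delta[f,X_f]((f,X_f),vw)$ from the hypothesis $(f,X_f) \in \delta[f,X_f]((f,X_f),v)$, then equality by the maximality of~$w$. Your added care in noting that $vw \in \Gamma^+$ (so that $vw$ is a legitimate competitor in the maximality condition) and in unfolding the path decomposition is a faithful elaboration of the paper's two-line proof, with no gap.
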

\begin{proof}
Since $\delta[f,X_f]((f,X_f),v) \ni (f,X_f)$, we have \[\delta[f,X_f]((f,X_f),v w) \ \supseteq \ \delta[f,X_f]((f,X_f), w)\,.\]
But $w$ is maximal.
\end{proof}

We enrich~$\Adet$ to obtain a DBA, $\Adet'$, whose states have an additional component keeping track of whether the word~$w \in \Gamma^+$ from above is being seen.
The accepting runs of $\Adet'$ contain infinitely many $w$-labelled segments that, loosely speaking, ``start from $(f,X_f)$''.
Formally, let \[W = \{0\} \cup \{v \in \Gamma^* \mid v \text{ is a suffix of } w\}\,,\] including~$w$ and the empty word~$\varepsilon$.
We assume $0 \not\in \Gamma$.
Define
\[
\Adet' \ :=\ (2^{\{\bar{q}_0\} \cup Q[f,X_f]} \times W, \deltadet', (\{\bar{q}_0\},0), (2^{Q[f,X_f]} \setminus \{\emptyset\}) \times \{\varepsilon\})\,,
\]
where
\begin{align*}
\deltadet'((U,X v), X) \ &= \ (\deltadet(U,X), v) \\
\deltadet'((U,X v), Y) \ &= \ (\deltadet(U,X), 0) \quad &&\text{for } X \ne Y \\
\deltadet'((U,v), X)   \ &= \ (\deltadet(U,X), w) \quad &&\text{for } v \in \{0, \varepsilon\},\ (f,X_f) \in \deltadet(U,X) \\
\deltadet'((U,v), X)   \ &= \ (\deltadet(U,X), 0) \quad &&\text{for } v \in \{0, \varepsilon\},\ (f,X_f) \not\in \deltadet(U,X)\,.
\end{align*}
It follows from \cref{lem:UBA-w} and the construction of~$\Adet'$ that $\Adet'$ has a single accepting state reachable from~$(\{\bar{q}_0\},0)$, namely $(\deltadet(\{(f,X_f)\},w), \varepsilon)$.

\begin{lemma} \label{lem:UBA-3-equivalences}
The following statements are equivalent:
\begin{enumerate}
\item[(i)] The probability that some branch of a $\B[X_f]$-tree is accepted by~$\A[f,X_f]$ is positive.
\item[(ii)] The probability that some branch of a $\B[X_f]$-tree has a run (accepting or not) in~$\A[f,X_f]$ is positive.
\item[(iii)] The probability that some branch of a $\B[X_f]$-tree is accepted by~$\Adet'$ is positive.
\end{enumerate}
\end{lemma}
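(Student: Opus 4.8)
The plan is to establish the cycle of implications (i) $\Rightarrow$ (ii) $\Rightarrow$ (iii) $\Rightarrow$ (i), which yields the claimed three-way equivalence. The implication (i) $\Rightarrow$ (ii) is immediate: an accepting run of $\A[f,X_f]$ on a branch is in particular a run, so the event in~(i) is contained in the event in~(ii), whence the probability in~(ii) is at least that in~(i).

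For (iii) $\Rightarrow$ (i) I would argue at the level of a single tree, so that inclusion of events gives the inequality of probabilities. Fix a tree with a branch $\beta$ accepted by~$\Adet'$. Since $\Adet'$ is deterministic and, by the remark following its definition, has a unique reachable accepting state $(\deltadet(\{(f,X_f)\},w),\varepsilon)$, acceptance of~$\beta$ means there are infinitely many positions $m_1 < m_2 < \cdots$ at which the subset-construction state reached along~$\beta$ contains $(f,X_f)$ and the $W$-component is (re)set to~$w$, with $w$ then spelled out over the next $|w|$ letters and producing the fixed set $D := \deltadet(\{(f,X_f)\},w)$. The crucial point is that reading~$w$ from the singleton $\{(f,X_f)\}$ reaches exactly~$D$, so from an occurrence of $(f,X_f)$ at position~$m_j$ every state of~$D$ is reachable by an $\A[f,X_f]$-path; and since the subset-construction state at~$m_{j+1}$ again contains~$(f,X_f)$, some state of~$D$ has a path to $(f,X_f)$ at~$m_{j+1}$. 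Composing these segment by segment, and prefixing the forced initial transition $\bar{q}_0 \xrightarrow{X_f} (f,X_f)$, yields an $\A[f,X_f]$-run on~$\beta$ visiting $(f,X_f)$ infinitely often, i.e.\ an accepting run. Notably this needs neither König's lemma nor $(f,X_f)\in D$; it uses only that $\deltadet(\{(f,X_f)\},w)=\delta[f,X_f]((f,X_f),w)=D$, together with the maximality of~$w$ via \cref{lem:UBA-w}, which guarantees that each completed $w$-segment returns to the same set~$D$.

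The implication (ii) $\Rightarrow$ (iii) is the probabilistic heart of the lemma and, I expect, the main obstacle. Via the subset construction, (ii) is equivalent to saying that with positive probability some branch keeps the reachable-state set nonempty forever; it is convenient to phrase everything in the deterministic product BP of $\B[X_f]$ with~$\Adet'$, in which each node carries the $\Adet'$-state of its branch prefix, the ``dead'' types are those with empty first component, and the distinguished type is $\Delta := (\deltadet(\{(f,X_f)\},w),\varepsilon)$. The combinatorial input is that from every live (nonempty) $\Adet'$-state one can, by strong connectedness of $Q[f,X_f]$, drive $(f,X_f)$ into the set within $|Q[f,X_f]|$ letters and then read~$w$, so that $\Delta$ is reachable within a bounded number $\ell$ of steps; consequently, from any node carrying a live state, with probability bounded below by some $\epsilon > 0$ depending only on~$\ell$ and the least transition probability of~$\B$, the tree has a length-$\ell$ branch continuation passing through a $\Delta$-node.

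The remaining, delicate step is to promote ``a live branch exists with positive probability'' to ``a live branch visiting~$\Delta$ infinitely often exists with positive probability''. This is precisely concern~(a) from the discussion surrounding \cref{lem:Bdet}: a priori all infinite live branches could eventually avoid~$\Delta$. I would resolve it with the branching-process principle that a target reachable, infinitely often, with probability bounded away from~$0$ is almost surely reached infinitely often, invoking the corresponding result on BPs from~\cite{12CDK-MFCS} to account for the tree branching rather than a single run. Combined with the other two implications, this closes the cycle and proves (i), (ii), and~(iii) equivalent.
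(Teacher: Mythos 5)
Your proposal is correct and matches the paper's proof essentially step for step: your (iii) $\Rightarrow$ (i) is the same stitching of cycles through $(f,X_f)$ via the maximality of $w$ (\cref{lem:UBA-w}; the paper phrases it as the decomposition $v_1 w v_2 w \cdots$ with paths re-rooted at $(f,X_f)$), and your (ii) $\Rightarrow$ (iii) is the same reduction to the product BP (the paper's $\Bdet'$), bounded-step reachability of the unique accepting state $(\deltadet(\{(f,X_f)\},w),\varepsilon)$ from every live type, and the infinitely-often principle for BPs, which the paper instantiates as \cite[Lemma~11]{12CDK-MFCS}. The details you gloss (the exact timing of the $W$-component reset in the reachability argument, and re-rooting the product BP at the accepting type before applying the BP lemma) are treated at a comparable level of detail in the paper itself.
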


\Cref{lem:UBA-3-equivalences} implies \cref{lem:Bdet}, as it follows from the definition of~$\Bdet$ that the probability that some branch of a $\B[X_f]$-tree has a run in~$\A[f,X_f]$ (cf.\ condition~(ii) in \cref{lem:UBA-3-equivalences}) equals the probability that some branch of a $\Bdet$-tree does not have any nodes of type~$\emptyset$ (cf.\ condition~(ii) of \cref{lem:Bdet}).
So it remains to prove \cref{lem:UBA-3-equivalences}.

\begin{proof}[Proof of \cref{lem:UBA-3-equivalences}]
(i) $\Longrightarrow$ (ii).
Trivial.

(iii) $\Longrightarrow$ (i).
Let $X_f X_1 X_2 \cdots$ be accepted by~$\Adet'$.
Then $X_1 X_2 \cdots$ can be decomposed in $v_1 w v_2 w \cdots$ with $v_1, v_2, \ldots \in \Gamma^*$ such that $\A[f,X_f]$ has paths \[(f,X_f) \Xrightarrow{v_1 w v_2 w \cdots v_i}* (f,X_f)\] for all~$i \ge 1$.
Let $i \ge 2$.
There is $(q,X)$ with \[(f,X_f) \Xrightarrow{v_1 w v_2 w \cdots v_{i-1} w}* (q,X) \Xrightarrow{v_i}* (f,X_f)\,.\]
By \cref{lem:UBA-w} there is a path \[(f,X_f) \Xrightarrow{v_1 w v_2 w \cdots v_{i-1}}* (f,X_f) \Xrightarrow{w}* (q,X)\,.\]
Thus also $(f,X_f) \Xrightarrow{w v_i}* (f,X_f)$.
Since $i \ge 2$ was arbitrary, it follows that $\A[f,X_f]$ has an accepting run \[\bar{q}_0 \xrightarrow{X_f} (f,X_f) \Xrightarrow{v_1}* (f,X_f) \Xrightarrow{w v_2}* (f,X_f) \Xrightarrow{w v_3}* \cdots\,.\]

(ii) $\Longrightarrow$ (iii).
For this part we use results from~\cite{12CDK-MFCS}, which considers the problem of model checking BPs against DPAs.
We need only a special case of such automata: (a)~the same (word) automaton is run on every branch of the tree, and (b)~our automaton~$\Adet'$ is a DBA, which can be viewed as a DPA whose states are labelled with only $2$~priorities: priority~$0$ for non-accepting states and priority~$1$ for accepting states.
The paper~\cite{12CDK-MFCS} constructs a product BP from the BP and the automaton, and subsequently considers BPs whose types are coloured with a priority.
In this way the model-checking problem reduces to computing the probability that there is a branch on which the highest priority that occurs infinitely often is odd.
Since our automaton~$\Adet'$ already embeds a BP, instead of taking another product, we define a BP, $\Bdet'$,  more directly based on~$\Adet'$, in the same way that $\Bdet$ was defined based on~$\Adet$ in \cref{sub:Bdet}.
More explicitly,
\[
\Bdet' \ := \ (\Gamma', \mathord{\btran{}'}, \Prob', (\{(f,X_f)\},w))\,,
\]
where the set of types $\Gamma' \subseteq 2^{Q[f,X_f]} \times W$ is the set of those states in~$\Adet'$ that are reachable (in~$\Adet'$) from~$(\{\bar{q}\},0)$ via a nonempty path (recall that they are of the form $(P \times \{X\},v)$ with $P \subseteq Q$ and $X \in \Gamma$ and $v \in W$), and
\begin{align*}
X' &\Xhookrightarrow{p}' \deltadet'(X',X_1) \cdots \deltadet'(X',X_k)
\intertext{
for all $X' = (P \times \{X\},v) \in \Gamma'$ with $P \ne \emptyset$ and all $X \btran{p} X_1 \cdots X_k$, and
}
(\emptyset,v) &\Xhookrightarrow{1}' (\emptyset,v)
\end{align*}
for all $v \in W$.
Recall that \[(\deltadet(\{(f,X_f)\},w), \varepsilon) =: X_w'\] is the single accepting state in~$\Adet'$ that is reachable from $(\{\bar{q}_0\},0)$.
We call a branch of~$\Bdet'$ \emph{accepting} if it contains $X_w'$ infinitely often.

Suppose ``(ii)'', i.e., the probability that some branch of a $\B[X_f]$-tree has a (non-accepting or accepting) run in~$\A[f,X_f]$ is positive.
Thus, the probability that some branch of a $\B[X_f]$-tree has a run in~$\Adet$ that does not enter the state~$\emptyset$ is positive.
Let $X_1 w' \in \Gamma^+$ be such that \[(f,X_f) \in \deltadet(\{(f,X_f)\},w X_1 w')\,.\]
Then, also the probability that some branch of a $\B[X_f]$-tree starts with $X_f w X_1 w'$ and has a run in~$\Adet$ that does not enter~$\emptyset$ is positive.
Thus, the probability that some branch of a $\B[X_1]$-tree starts with $X_1 w'$ and has a run in~$\Adet$, started in $\deltadet(\{(f,X_f)\},w)$, that does not enter~$\emptyset$ is positive.
Hence, the probability that some branch of a $\B[X_1]$-tree has a run in~$\Adet'$, started in $(\deltadet(\{(f,X_f)\},w), \varepsilon)$, that does not enter a state of the form $(\emptyset,v)$ is positive.
From the construction of~$\Bdet'$ it follows that the probability that some branch of a $\Bdet'[X_w']$-tree (i.e., with $X_w' = (\deltadet(\{(f,X_f)\},w), \varepsilon)$ as start type) does not have a node of a type of the form $(\emptyset,v)$ is positive.

Consider any type $(U_0,v_0)$ in~$\Bdet'$ with $U_0 \ne \emptyset$ and view it as a state in~$\Adet'$.
Then there is $u_1 \in \Gamma^*$ with \[(U_0,v_0) \Xrightarrow{u_1}* (U_1,v_1)\] where $U_1 \ne \emptyset$ and $v_1 \in \{0, \varepsilon\}$.
Let $u_2 \in \Gamma^*$ be a shortest word such that there is $(q_1,X_1) \in U_1$ with \[(q_1,X_1) \Xrightarrow{u_2}* (f,X_f)\] in~$\A[f,X_f]$.
It follows that in~$\Adet'$ we have \[(U_0,v_0) \Xrightarrow{u_1 u_2}* (U_2,w) \Xrightarrow{w}* (U_3,\varepsilon)\] with $(f,X_f) \in U_2$.
By \cref{lem:UBA-w} we have \[U_3 = \deltadet(\{(f,X_f)\},w)\,.\]
Hence $(U_0,v_0) \Xrightarrow{}* X_w'$.

Combining this reachability fact with the previous argument, we infer that the probability that some branch of a $\Bdet'[X_w']$-tree has only nodes of types from which $X_w'$ is reachable (in~$\Bdet'$) is positive.
It follows from \cite[Lemma~11]{12CDK-MFCS} that the probability that some branch of a $\Bdet'[X_w']$-tree is accepting is positive.%
\footnote{In terms of the notation therein, we instantiate \cite[Lemma~11]{12CDK-MFCS} with $X := X_w'$.
By the reachability argument above, $N_X = N_{X_w'}$ does not include types of the form $(U,v)$ with $U \ne \emptyset$.
Thus, we have argued that the probability of $\finally N_{X_w'}$ is $p < 1$.
Hence, \cite[Lemma~11]{12CDK-MFCS} asserts that the probability that a $\Bdet'[X_w']$-tree has an \emph{$X_w'$-branch} equals $1-p > 0$, where \emph{$X_w'$-branch} means accepting path in terms of our definition.
}
Hence, the probability that some branch of a $\Bdet'$-tree is accepting is positive.
From the construction of~$\Bdet'$ it follows that the probability that some branch of a $\B[X_f]$-tree is accepted by~$\Adet'$ is positive, i.e., ``(iii)''.
\end{proof}

\subsection{Proof of \texorpdfstring{\cref{thm:coNBA-1}}{Theorem \ref{thm:coNBA-1}}}
\thmcoNBAone*
\begin{proof}
Towards membership in PSPACE, fix a BP~$\B$ and an NBA~$\A$.
Since PSPACE is closed under complement, we can focus on the problem whether the probability is positive that a $\B$-tree has a branch accepted by~$\A$.
We use \cref{lem:UBA-X1-f}.
Since reachability in a graph is in~NL and, hence, in PSPACE, it suffices to decide in PSPACE whether the probability that some branch of a $\B[X_f]$-tree is accepted by~$\A[f,X_f]$ is positive.
In order to check that, by \cref{lem:Bdet} it suffices to construct the BP~$\Bdet$ and then invoke the NC procedure of \cref{lem:AFT-1} to check if $\P_{\Bdet}(\finally \{\emptyset\}) < 1$.
The BP~$\Bdet$ has exponential size but can be computed with a PSPACE transducer.
(In particular, whether a state in~$\Adet$ is reachable from~$\{\bar{q}\}$ via a nonempty path can be determined in NPSPACE $=$ PSPACE.)
By \cref{lem:PSPACE-transducer} it follows that $\P(\coNBA) = 1$ is in PSPACE.

PSPACE-hardness follows from the PSPACE-hardness~\cite{Vardi85} of the \emph{probabilistic emptiness problem}, which, given a Markov chain and an NBA, asks if the probability is~$0$ that the Markov chain generates a word accepted by the NBA.
(We remark that, in contrast, the problem whether a given transition system has a run accepted by a given NBA is in NL: search the product for an accepting cycle).
\end{proof}

\subsection{Proof of \texorpdfstring{\cref{thm:coNBA-0}}{Theorem~\ref{thm:coNBA-0}}} \label{app:coNBA-0}
\thmcoNBAzero*
\begin{proof}
Towards membership in EXPTIME, an NBA can be translated, in exponential time, to a DPA of exponential size; see, e.g., \cite{Piterman07}.
By shifting the priorities (colours) in the DPA by~$1$, we can make the DPA accept exactly those words that are rejected by the NBA.
Since $\P(\DPA) = 0$ is in P by \cref{thm:DPA-0}, it follows that $\P(\coNBA) = 0$ is in EXPTIME.

Concerning EXPTIME-hardness, we adapt the construction of the proof of \cref{thm:NBA-0}.
As in that proof, let $M = (S_\exists, S_\forall, \Sigma, T, s_0, \sacc)$ be a linear-bounded alternating Turing machine.
Let $w = a_1 \cdots a_n \in \Sigma^*$ be the input word, and assume again that $M$ uses exactly $n$ tape cells.
We construct a BP~$\B$ and an NBA~$\A$ such that the probability that all branches of the random tree are rejected by~$\A$ is positive if and only $M$ accepts~$w$.

For~$\B$ we use almost the same construction as in \cref{thm:NBA-0}, except that we do not need the types $(1,\chk), \ldots, (n,\chk)$.
We replace them by a single type $\en$.
Accordingly, for all $(i, \sacc, a) \in \Gamma$, we replace the rule
\begin{align*}
 (i, \sacc, a) &\btran{1} (1,\chk) \cdots (n,\chk)
\intertext{by a rule}
 (i, \sacc, a) &\btran{1} \en\,,
\end{align*}
and include a rule $\en \btran{1} \en$ (i.e., a self-loop).

We want to construct the NBA~$\A$ so that it accepts exactly those branches that correspond to infinite computations that do not arrive at $\sacc$, or to ``non-computations'', i.e., where the ``guessing'' in a rule
\[
 (i,(s,a,a',D,s')) \btran{1/|\Sigma|} (i{+}D, s', b_j) \quad (1 \le j \le |\Sigma|)\,,
\]
has been wrong.

The NBA $\A = (Q, \Gamma, \delta, Q_0, Q)$ has the same set of states as in \cref{thm:NBA-0}:
\[
 Q \ = \ (\{1, \ldots, n\} \times \Sigma) \ \cup \ \{f\}
\]
As in \cref{thm:NBA-0}, the set of initial states is $Q_0 = \{(1,a_1), \ldots, (n,a_n)\}$ (recall that $a_1 \cdots a_n$ is the input word).
As in \cref{thm:NBA-0}, the idea is that if a prefix $X_1 \cdots X_k \in \Gamma^*$ of a tree branch corresponds to a prefix of a correct computation of~$M$, then the set of automaton states in $\delta(Q_0,X_1 \cdots X_k)$ corresponds to the tape after this computation prefix.
In fact, the transition relation $\delta \subseteq Q \times \Gamma \times Q$ is \emph{deterministic}, i.e., for all $q \in Q$ and $X \in \Gamma$ there is at most one $q'$ with $(q,X,q') \in \delta$.
Moreover, for any transition $((i,a), X, (j,a')) \in \delta$ we will have $i=j$.
Unlike in \cref{thm:NBA-0}, all states are accepting.

For $(i,a) \in Q$ and all $(j,s,b) \in \Gamma$ with $i \ne j$ or $a=b$, include a self-loop
\[
 ((i,a),(j,s,b),(i,a)) \in \delta\,.
\]
(Intuitively, letter~$a$ in cell~$i$ is compatible with the head being on cell~$j$ and reading letter~$b$.)
For $(i,a) \in Q$ and all $(i,s,b) \in \Gamma$ with $a \ne b$, include a transition
\[
 ((i,a),(i,s,b),f) \in \delta\,.
\]
(Intuitively, letter~$a$ in cell~$i$ is \emph{not} compatible with the head being on cell~$i$ and reading letter~$b$; i.e., the prefix of the branch does not correspond to a correct computation.)
For all $(i,a) \in Q$ and all $(j,t) \in \Gamma$ with $i \ne j$, include a self-loop
\[
 ((i,a),(j,t),(i,a)) \in \delta\,.
\]
(Intuitively, the content of cell~$i$ stays unchanged when the head is at position~$j$.)
For all $i,s,a,a',D,s'$ with $(i,a) \in Q$ and $(i,(s,a,a',D,s')) \in \Gamma$, include a transition
\[
 ((i,a), (i,(s,a,a',D,s')), (i,a'))\in \delta\,.
\]
(Intuitively, the transition $(s,a,a',D,s')$ changes the content of cell~$i$ from $a$ to~$a'$.)
For all $X \in \Gamma$, include a self-loop $(f,X,f) \in \delta$.
Note that $(f,\en,f)$ is the only transition labeled with~$\en$.

In this way:
\begin{itemize}
\item
If a tree branch does not correspond to a correct computation, the automaton~$\A$ enters the state~$f$, remains there forever, and, thus, accepts.
\item
If a tree branch corresponds to an infinite computation not entering~$\sacc$, the set of states that~$\A$ can be in always reflects the tape. Thus, $\A$ accepts.
\item
If a tree branch corresponds to a computation entering~$\sacc$, the branch also enters~$\en$, and $\A$ does not enter~$f$. Thus, $\A$ rejects.
\end{itemize}
It follows that the probability that all branches of the random tree are rejected by~$\A$ is positive if and only $M$ accepts~$w$.
A more detailed argument would follow very similar lines as the proof of \cref{thm:NBA-0}.
\end{proof}

\section{Proof of \texorpdfstring{\cref{lem:key}}{Lemma \ref{lem:key}}}
\lemkey*
\begin{proof}
The automaton $\A[f,X_f]$ is unambiguous, as $\A$~is unambiguous, and has $(f,X_f)$ as (the only) accepting state.
Recall also that $(f,X_f)$ is reachable from all states in~$\A[f,X_f]$.
It follows that $\A[f,X_f]$ does not have \emph{diamonds}, i.e., $\A[f,X_f]$ does not have states $(q,X), (q',X')$ and a word $u \in \Gamma^*$ such that $\A[f,X_f]$ has two different paths $(q,X) \Xrightarrow{u}* (q',X')$.

By the Perron-Frobenius theorem \cite[Theorem~2.1.4~(b)]{BermanP94}, $M$ has an eigenvector $v \in (0,\infty)^{Q[f,X_f]}$ (all entries positive) with $M v = \rho v$.
(Think of the entries of~$v$ as ``weights'' of the states in~$\A[f,X_f]$.
Loosely speaking, the equality $(M v)_{(q,X)} = \rho v_{(q,X)}$ expresses that the expected combined weight of the ``successors'' of~$(q,X)$ is equal to the weight of~$(q,X)$ multiplied by~$\rho$.)

We ``lift''~$v$ to define a vector $\bar{v} \in [0,\infty)^{\Gamma'}$ where $\Gamma'$ is the set of types in~$\Bdet$ (recall that they are of the form $P \times \{X\}$ with $P \subseteq Q$ and $X \in \Gamma$):
\[
 \bar{v}_{P \times \{X\}} \ := \ \sum_{q \in P} v_{(q,X)}
\]
Denote by $\bar{M} \in \Q^{\Gamma' \times \Gamma'}$ the matrix defined before \cref{lem:as-finiteness-char}, but for $\Bdet$.
Then we have for all $P \times \{X\} \in \Gamma'$:
\begin{align*}
    (\bar{M} \bar{v})_{P \times \{X\}}
&\ =\ \sum_{X \btran{p} X_1 \cdots X_k} p \sum_{i=1}^k \bar{v}_{\deltadet(P \times \{X\}, X_i)} \\
&\ =\ \sum_{X \btran{p} X_1 \cdots X_k} p \sum_{i=1}^k \ \sum_{(r,X_i) \in \deltadet(P \times \{X\}, X_i)} v_{(r,X_i)} \\
&\ \mathop{=}^*\ \sum_{X \btran{p} X_1 \cdots X_k} p \sum_{i=1}^k \ \sum_{q \in P} \ \sum_{r : (q,X) \xrightarrow{X_i} (r,X_i)} v_{(r,X_i)} \\
&\ =\ \sum_{q \in P} \ \sum_{X \btran{p} X_1 \cdots X_k}  p \sum_{i=1}^k \ \sum_{r : (q,X) \xrightarrow{X_i} (r,X_i)} v_{(r,X_i)} \\
&\ =\ \sum_{q \in P} (M v)_{(q,X)}\\
&\ =\ \sum_{q \in P} \rho v_{(q,X)}\\
&\ =\ \rho \bar{v}_{P \times \{X\}}\;,
\end{align*}
where the third equality (marked with $\displaystyle\mathop{=}^*$) holds as for any $(r,X_i) \in \deltadet(P \times \{X\}, X_i)$ there is exactly one $q \in P$ with $(q,X) \xrightarrow{X_i} (r,X_i)$ in~$\A[f,X_f]$.
Indeed, towards a contradiction, suppose there are $q_1, q_2 \in Q$ with $q_1 \ne q_2$ and $(q_j,X) \xrightarrow{X_i} (r,X_i)$ for both $j \in \{1,2\}$.
Since $P \times \{X\}$ is reachable in~$\Adet$ from $\{(f,X_f)\}$, there is $u \in \Gamma^*$ with \[(f,X_f) \Xrightarrow{u}* (q_j,X) \xrightarrow{X_i} (r,X_i)\] in~$\A[f,X_f]$ for both $j \in \{1,2\}$, contradicting the absence of diamonds in~$\A[f,X_f]$.
We conclude from the above computation that $\bar{M} \bar{v} = \rho \bar{v}$; i.e., $\bar{v}$ is an eigenvector of~$\bar{M}$ with eigenvalue~$\rho$.

In the following, for subsets $\Delta \subseteq \Gamma'$, we write $\bar{M}_{\Delta} \in \Q^{\Delta \times \Delta}$ for the (square) principal submatrix obtained from~$\bar{M}$ by restricting it to the rows and columns indexed by elements of~$\Delta$.
Similarly, define $\bar{v}_{\Delta} \in [0,\infty)^{\Delta}$  by restricting $\bar{v}$ to the entries indexed by elements of~$\Delta$.
Since $\bar{M}$ and~$\bar{v}$ are nonnegative, we have $\bar{M}_{\Delta} \bar{v}_{\Delta} \le \rho \bar{v}_{\Delta}$ (the inequality is meant componentwise).

Note that $\bar{v}_\emptyset = 0$.
Define $\Gamma'' := \Gamma' \setminus \{\emptyset\}$.
Thus we have $\bar{M}_{\Gamma''} \bar{v}_{\Gamma''} = \rho \bar{v}_{\Gamma''}$.
All entries of~$\bar{v}_{\Gamma''}$ are positive, as all entries of~$v$ are.
By Perron-Frobenius theory \cite[Corollary~2.1.12]{BermanP94} it follows that $\rho$ is the spectral radius of~$\bar{M}_{\Gamma''}$.
The matrix $\bar{M}_{\Gamma''}$ is equal to the matrix defined before \cref{lem:as-finiteness-char}, but for~$\Bdet''$.
We complete the proof with the following case distinction.
\begin{itemize}
\item
Suppose $\rho > 1$.
Let $\Delta \subseteq \Gamma''$ be a bottom SCC of the graph of~$\bar{M}_{\Gamma''}$.
As $\Delta$ is bottom, $\bar{M}_\Delta \bar{v}_\Delta = \rho \bar{v}_\Delta$.
So the spectral radius of~$\bar{M}_\Delta$ is at least~$\rho > 1$ (in fact, it must be equal to~$\rho$).
It follows that $\Delta$~is supercritical in~$\Bdet''$.
Thus, by \cref{lem:as-finiteness-char}, a $\Bdet''$-tree is infinite with positive probability.
\item
Suppose that $\rho = 1$ and that $\A[f,X_f]$ does not have proper branching.
Let $\Delta \subseteq \Gamma''$ be a bottom SCC of the graph of~$\bar{M}_{\Gamma''}$.
Then $\bar{M}_\Delta \bar{v}_\Delta = \bar{v}_\Delta$, so by the Perron-Frobenius theorem \cite[Theorem~2.1.4~(b)]{BermanP94} the spectral radius of~$\bar{M}_\Delta$ is~$1$.
By the absence of proper branching we also have $\bar{M}_\Delta \vec{1} \le \vec{1}$, where $\vec{1}$ denotes the all-$1$ vector, i.e., the element of $\{1\}^\Delta$.
By Perron-Frobenius theory \cite[Theorem~2.1.11]{BermanP94} it follows that $\bar{M}_\Delta \vec{1} = \vec{1}$.
Thus, $\Delta$~is linear in~$\Bdet''$.
Hence, by \cref{lem:as-finiteness-char}, a $\Bdet''$-tree is infinite with positive probability.
\item
Suppose that $\rho=1$ and that $\A[f,X_f]$ has proper branching, i.e., there exist \[(q,Y) \xrightarrow{Z_1} (r_1,Z_1) \text{ and } (q,Y) \xrightarrow{Z_2} (r_2,Z_2) \text{ in } \A[f,X_f]\] and a rule $Y \btran{p} u_1 Z_1 u_2 Z_2 u_3$ with $u_1, u_2, u_3 \in \Gamma^*$.
Consider any SCC $\Delta \subseteq \Gamma''$ of the graph of~$\bar{M}_{\Gamma''}$.
Denote by $\rho_\Delta$ the spectral radius of~$\bar{M}_\Delta$.
As $\bar{M}_\Delta$ is a principal submatrix of~$\bar{M}$, we have $\rho_\Delta \le \rho = 1$ \cite[Corollary 2.1.6~(a)]{BermanP94}.
So $\Delta$ is not supercritical.
\begin{enumerate}
\item[(i)] $\rho_\Delta < 1$.
We have argued before \cref{lem:as-finiteness-char} that $\Delta$ being linear would imply $\rho_\Delta=1$.
Hence, $\Delta$~is not linear.
\item[(ii)] $\rho_\Delta = 1$.
Recall that $\bar{M}_{\Delta} \bar{v}_{\Delta} \le \bar{v}_{\Delta}$, so by Perron-Frobenius theory \cite[Theorem~2.1.11]{BermanP94} we must have $\bar{M}_{\Delta} \bar{v}_{\Delta} = \bar{v}_{\Delta}$.
Let $P \times \{X\} \in \Delta$ and $p \in P$.
Let $u_0 \in \Gamma^*$ with $(p,X) \Xrightarrow{u_0}* (q,Y)$ in~$\A[f,X_f]$. 
Hence \[(p,X) \Xrightarrow{u_0}* (q,Y) \xrightarrow{Z_j} (r_j,Z_j)\] in~$\A[f,X_f]$ for both $j \in \{1,2\}$.
Towards a contradiction, suppose $\Delta$ is linear.
If $\deltadet(P \times \{X\}, u_0)$ is not in~$\Delta$, then neither are $\deltadet(P \times \{X\}, u_0 Z_1)$ or $\deltadet(P \times \{X\}, u_0 Z_2)$.
Otherwise (i.e., $\deltadet(P \times \{X\}, u_0) \in \Delta$) there is $j \in \{1,2\}$ such that $\deltadet(P \times \{X\}, u_0 Z_j) \not\in \Delta$, as $\Delta$ is linear.
Either way there is $j \in \{1,2\}$ such that \[\deltadet(P \times \{X\}, u_0 Z_j) \not\in \Delta\,.\]
Note that $(r_j,Z_j) \in \deltadet(P \times \{X\}, u_0 Z_j) \ne \emptyset$.
Let $x Z$ (with $x \in \Gamma^*$ and $Z \in \Gamma$) be the shortest prefix of~$u_0 Z_j$ such that $U := \deltadet(P \times \{X\}, x) \in \Delta$ but $\deltadet(U,Z) \not\in \Delta$.
Then we have:
\begin{align*}
      \bar{v}_U
&\ =\    (\bar{M}_\Delta \bar{v}_\Delta)_U \\
&\ <\    (\bar{M}_\Delta \bar{v}_\Delta)_U + \bar{M}_{U,\deltadet(U,Z)} \bar{v}_{\deltadet(U,Z)} \\
&\ \le\  (\bar{M} \bar{v})_U \\
&\ =\    \bar{v}_U\,,
\end{align*}
a contradiction.
Thus, $\Delta$ is not linear.
\end{enumerate}
We conclude that in both cases $\Delta$~is neither linear nor supercritical.
Since $\Delta$ was an arbitrary SCC, we conclude from \cref{lem:as-finiteness-char} that a $\Bdet''$-tree is almost surely finite.
\item
Suppose $\rho < 1$.
Then, for any SCC $\Delta$ the spectral radius of $\bar{M}_{\Delta}$ is also less than~$1$ \cite[Corollary 2.1.6~(a)]{BermanP94}, so $\Delta$ is neither supercritical nor linear.
Thus, by \cref{lem:as-finiteness-char}, a $\Bdet''$-tree is almost surely finite.
\end{itemize}
Hence, the probability that a $\Bdet''$-tree is infinite is positive if and only if either $\rho > 1$ or $\rho = 1$ and $\A[f,X_f]$ does not have proper branching.
\end{proof}

\section{Proof of \texorpdfstring{\cref{thm:LTL-0}}{Theorem \ref{thm:LTL-0}}} \label{app:LTL-0}
\thmLTLzero*
\begin{proof}
Given the main body, it remains to show 2EXPTIME-hardness of $\P(\LTL) = 0$. Our construction is inspired by the proof of 2EXPTIME-hardness of model-checking a concurrent probabilistic program (another name for MDP) against an LTL formula, see Theorem 3.2.1 in \cite{CourcoubetisYannakakis95}.

We will use the fact that 2EXPTIME is equal to alternating EXPSPACE. Let $M = (S_\exists, S_\forall, \Sigma, T, s_0, \sacc)$ be an alternating Turing machine whose work tape usage is bounded by $2^n$ on any input of length $n$. Without loss of generality, we can assume that the machine has two possible next moves for each configuration and that it halts when it reaches the accepting state $\sacc$. For a given alternating TM $M$ and an input $w$ of length $n$, we will construct a BP~$\B$ and an LTL formula $\varphi$ both of size $\mathcal{O}(n)$ such that $\P_\B(\varphi) > 0$ if and only if $M$ accepts~$w$.

The branching process $\B$ is defined by the diagram in Fig.~\ref{fig:bp}. Every node in the diagram has a unique label that corresponds to a type of $\B$, although not all nodes are explicitly labelled. The $\CIRCLE$-nodes represent randomising branching and $\Square$-nodes represent tree branching. Namely, if a $\CIRCLE$-node $x$ has $k$ successors $y_1,\ldots, y_k$:  $\vcenter{\xymatrix@1@R=10pt@C=10pt{ & *+[l]{x\,\CIRCLE} \ar[dr] \ar[dl] &\\y_1 &  \cdots &y_k}}$, then $\B$ has the rules $x \btran{1/k} y_i$ for $i=1,\dots,k$. On the other hand, if a $\Square$-node $x$ has $k$ successors $y_1,\ldots, y_k$:  $\vcenter{\xymatrix@1@R=10pt@C=10pt{ & *+[l]{x\,\Square} \ar[dr] \ar[dl] &\\y_1 &  \cdots &y_k}}$, then $\B$ has the rule $x \btran{1} y_1\cdots y_k$.

The start type of $\B$ is $a$. The detailed diagram of the blocks $I$, $O$, $N$, $D_1$, $D_2$ and $D_3$ is shown in Fig.~\ref{fig:bl} on the left. All nodes in these blocks are $\CIRCLE$-nodes. The diagram of the blocks $C_1$ and $C_2$ is shown in Fig.~\ref{fig:bl} on the right. These blocks have $\Square$-nodes in the first $n+1$ levels, and the rest are $\CIRCLE$-nodes. The initial and final nodes are labelled by $u$ and $v$, respectively. The nodes below $u$ are labelled with $(\ell_i,0)$ and $(\ell_i,1)$, $i=1,\ldots,n$, as shown in the picture. Every block has its own unique set of labels $u$, $v$, $(\ell_i,0)$ and $(\ell_i,1)$, $i=1,\ldots,n$; however we do not distinguish them in the diagram for simplicity.

In every block, except for $I$, the nodes above $v$ are labelled by $(\ell_{n+1},\delta)$, where $\delta \in \Sigma\;\cup\; (S_\exists\cup S_\forall)\times \Sigma$ ranges over the symbols of the extended work tape alphabet. In block $I$, there are $n+1$ nodes above $v$ which are labelled by $(\ell_{n+1},\gamma_i)$ for $i=1,\ldots,n+1$ such that for an input word $w=w_1\ldots w_n$ we have $\gamma_1=(s_0,w_1)$, $\gamma_i=w_i$ for $1<i\leq n$, and $\gamma_{n+1}=\Box$ is the blank symbol. These nodes will define the initial configuration of the Turing machine $M$.

The intended behaviour of process $\B$ is as follows. It starts generating a tree $T$ with a root node $a$. Then it tries to constructs the initial configuration of the Turing machine $M$ on input $w=w_1\ldots w_n$ by looping through block $I$ for $2^n$ many times. Each iteration of the block $I$ produces a string of the form $u(\ell_1,b_1)\ldots (\ell_n,b_n)(\ell_{n+1},z)v$, where $b_1\ldots b_n$ is the address of a work tape cell in binary, and $z$ is the content of that cell. $\B$ is supposed to construct the initial configuration by specifying the content of the work tape starting with $0$ and ending with cell $2^n-1$.

Each iteration of the loop, closed by the arc $c\rightarrow b$, corresponds to a move from one configuration of the Turing machine to the next. First, in block $O$, process $\B$ tries to reproduce the current configuration (in the first iteration of the loop, it is the initial configuration) by making $2^n$ iterations. Then, using tree branching in block $C_1$, the process produces a full binary tree of height $n$, each branch of which looks like $u(\ell_1,b_1)\ldots (\ell_n,b_n)$. Note that the last type $(\ell_n,b_n)$ is randomising, and after it $\B$ tries to correctly reproduce the content of the cell $b_1\ldots b_n$ in the current configuration. If the current state of $M$ is existential, then $\B$ is expected to move to $m_1$; if it is universal, it is expected to move to $m_2$. Two successors of $m_1$ and $m_2$ correspond to two possible moves out of the current configuration. In $m_1$, $\B$ randomly chooses the next move; in $m_2$, $\B$ makes a tree branching with two children corresponding to two possible next moves. Then, in block $N$, $\B$ tries to reproduce the next configuration of $M$ in the same way as it produced the current configuration in block $O$. In block $C_2$, the process produces a full binary tree of height $n$ (in the same way as it does it in block $C_1$), and after each branch of the form $u(\ell_1,b_1)\ldots (\ell_n,b_n)$ it tries to correctly reproduce the content of the cell $b_1\ldots b_n$ in the next configuration. Then, on every branch $u(\ell_1,b_1)\ldots (\ell_n,b_n)(\ell_{n+1},z)v$ produced by $C_2$, in blocks $D_1$, $D_2$, $D_3$, the process tries to reproduce the content of the cell $b_1\ldots b_n$ from the old configuration (block $O$) and its two adjacent cells. Finally, in state $c$, $\B$ is expected to move from $c$ to the sink state $d$ if the new configuration if accepting. Otherwise, $\B$ is expected to move to $b$.

We now define an LTL formula $\varphi$ that describes the expected behaviour of process $\B$. The formula $\varphi$ is the conjunction of the following parts:
\begin{enumerate}
	\item In blocks $I$, $O$ and $N$, the process constructs a configuration of $M$ cell-by-cell in order starting from cell $0$ and ending with cell $2^n-1$.
	\item In block $I$, the process constructs the initial configuration.
	\item On a branch produced by $C_1$ that corresponds to index $k$, the cell content specified by $C_1$ is equal to that of cell $k$ defined in block $O$, in block $I$ (if this is the first iteration of block $C_1$) and in the previous iteration of block $N$ (if there was any).
	\item If the current configuration if existential, then $\B$ moves to $m_1$. Otherwise, it moves to $m_2$.
	\item On a branch produced by $C_2$ that corresponds to index $k$, the cell content specified by $C_2$ is equal to that of cell $k$ defined in block $N$, and the indices in blocks $D_1$, $D_2$, $D_3$ are $k-1$, $k$, $k+1$, respectively.
	\item The cell contents in blocks $D_1$, $D_2$, $D_3$ are equal to those defined in block $O$.
	\item The cell content specified by $C_2$ follows directly from the contents of the cells specified by $D_1$, $D_2$, $D_3$ and the rule of $M$ that was chosen on the current branch.
	\item If the new configuration is accepting, then $\B$ moves from $c$ to $d$. Otherwise, it moves to $b$.
	\item $\mathsf{F}\mathsf{G}\, d$, that is, eventually $d$ always holds.
\end{enumerate}

The above properties can be expressed using LTL formulas. We will not explicitly write them down but the details of these formulas are very similar to those defined in the proof of Theorem 3.2.1 from \cite{CourcoubetisYannakakis95}.

Now suppose that Turing machine $M$ accepts an input $w$. Hence there is a $2^n$-bounded winning strategy for the existential player. In this case, with some positive probability, $\B$ can generate a tree $T$ all whose branches satisfy $\varphi$ as follows:
\begin{itemize}
	\item First, it generates the initial configuration in blocks $I$ and $O$.
	\item On every branch produced by $C_1$, process $\B$ generates the correct cell content.
	\item Then it moves to either $m_1$, if the current configuration is existential, or to $m_2$, if it is universal.
	\item In the former case, $\B$ chooses the next move that agrees with the winning strategy of the existential player.
	\item Then in block $N$, it generates a new configuration of $M$ that follows from $O$ according to the chosen move. (If the tree branching node $m_2$ was chosen, then $\B$ generates correct new configurations on every branch.)
	\item Next, $\B$ generates correct cell content on each branch produced by $C_2$ and chooses correct indices and cell contents in blocks $D_1$, $D_2$, $D_3$.
	\item Finally, $\B$ moves from $c$ to $d$, if an accepting configuration is reached, or it moves back to $b$ otherwise. In the latter case, $\B$ generates in block $O$ the same configuration that was generated in $N$ and continues the process.
\end{itemize}
Note that since the existential player has a winning strategy, state $d$ will eventually appear on every branch of $T$. Since $T$ is finitely branching, it follows by K\"onig's lemma that the above process will reach $d$ on every branch of $T$ in a finite number of steps. After that, $\B$ repeats the rule $d \btran{1} d$ forever. Clearly, all these events can happen with some positive probability. Hence, $\P_\B(\varphi) > 0$.

Conversely, suppose there is a positive probability that $\B$ generates a tree $T$ all whose branches satisfy $\varphi$. Hence every branch of $T$ eventually reaches state $d$ (and then always repeats it). Since $T$ is finitely branching, it follows by K\"onig's lemma that there exists a finite prefix of $T$ whose every leaf is labelled by $d$. This prefix encodes the moves of the existential player (after each appearance of state $m_1$ in the prefix) that allow him to reach the accepting configuration, no matter what the universal player does. In other words, the existential player has a winning strategy, and hence $M$ accepts $w$. Formally, this can be shown along similar lines as in the proof of \cref{thm:NBA-0}.

Therefore, we proved that $M$ accepts $w$ if and only if $\P_\B(\varphi) > 0$.
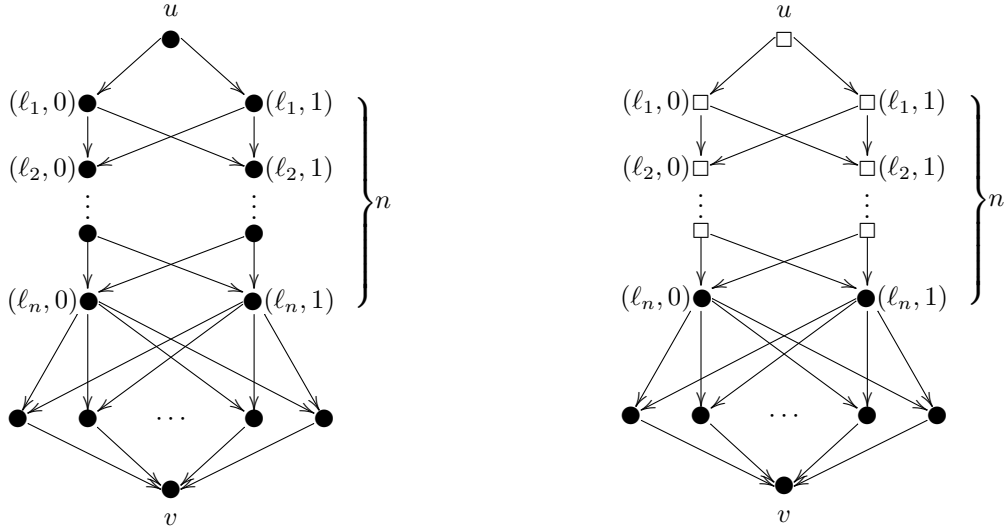
\begin{figure}[hp]
	\centerline{
	\xy
	\xymatrix "M"@R=15pt@C=5pt{
		& & *!<0pt,-1pt>{\txt{$u$\\ \CIRCLE}} \ar[dl]\ar[dr] & & \\
		& *!<13pt,0pt>{(\ell_1,0)\CIRCLE} \ar[d]\ar[drr] & & *!<-13pt,0pt>{\CIRCLE(\ell_1,1)} \ar[d]\ar[dll] & \\
		& *!<13pt,0pt>{(\ell_2,0)\CIRCLE} \ar@{}[d] |<<<{\vdots} &  & *!<-13pt,0pt>{\CIRCLE(\ell_2,1)} \ar@{}[d] |<<<{\vdots} & \\
		& *{\CIRCLE} \ar[d]\ar[drr] & & *{\CIRCLE} \ar[d]\ar[dll] & \\
		& *!<13pt,2pt>{(\ell_n,0)\CIRCLE} \ar[ddl] \ar[dd] \ar[ddrr] \ar[ddrrr]& & *!<-13pt,2pt>{\CIRCLE(\ell_n,1)} \ar[ddr] \ar[dd] \ar[ddll] \ar[ddlll] & \\
		&&&&\\
		*{\CIRCLE} \ar[drr] & *{\CIRCLE} \ar[dr] & {\cdots} & *{\CIRCLE} \ar[dl] & *{\CIRCLE} \ar[dll]\\
		& & *!<0pt,3pt>{\txt{\CIRCLE\\ $v$}} & &
	}
	\POS"M2,4"."M5,4"!C*-!L\frm{\}},!C*++++++!L\txt{$n$}
	\POS+(40,23)
	\xymatrix "N"@R=15pt@C=5pt{
	& & *!<0pt,-1pt>{\txt{$u$\\ \Square}} \ar[dl]\ar[dr] & & \\
	& *!<13pt,0pt>{(\ell_1,0)\Square} \ar[d]\ar[drr] & & *!<-13pt,0pt>{\Square(\ell_1,1)} \ar[d]\ar[dll] & \\
	& *!<13pt,0pt>{(\ell_2,0)\Square} \ar@{}[d] |<<<{\vdots} &  & *!<-13pt,0pt>{\Square(\ell_2,1)} \ar@{}[d] |<<<{\vdots} & \\
	& *{\Square} \ar[d]\ar[drr] & & *{\Square} \ar[d]\ar[dll] & \\
	& *!<13pt,2pt>{(\ell_n,0)\CIRCLE} \ar[ddl] \ar[dd] \ar[ddrr] \ar[ddrrr]& & *!<-13pt,2pt>{\CIRCLE(\ell_n,1)} \ar[ddr] \ar[dd] \ar[ddll] \ar[ddlll] & \\
	&&&&\\
	*{\CIRCLE} \ar[drr] & *{\CIRCLE} \ar[dr] & {\cdots} & *{\CIRCLE} \ar[dl] & *{\CIRCLE} \ar[dll]\\
	& & *!<0pt,3pt>{\txt{\CIRCLE\\ $v$}} & &
	}
	\POS"N2,4"."N5,4"!C*-!L\frm{\}},!C*++++++!L\txt{$n$}
	\endxy
	}
	\caption{Diagrams of the randomising (left) and tree (right) branching blocks.}
	\label{fig:bl}
\end{figure}
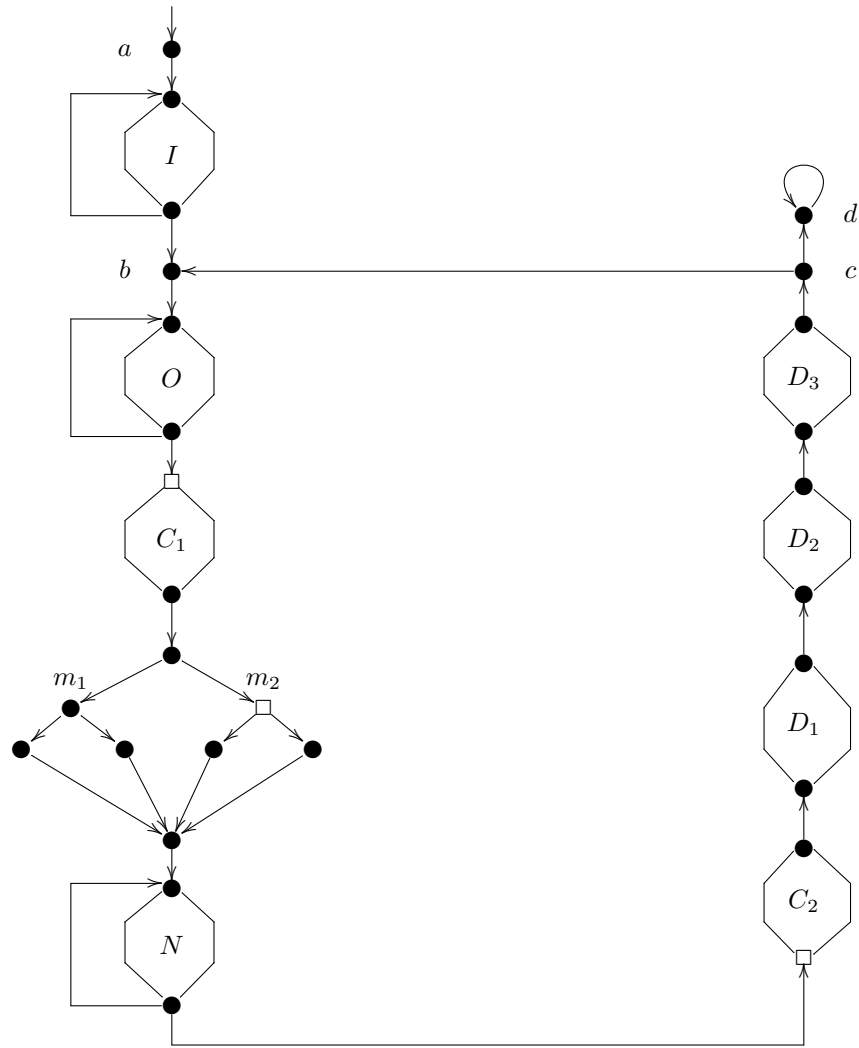
\begin{figure}[hp]
	\centerline{
	\xymatrix @R=8pt@C=8pt{
		&  &  & *+0{} \ar[d] & & & &\\
		&  & a & *{\CIRCLE} \ar[d] & & & &\\
		& *+0{} \ar[rr] \ar@{-}[ddd] &  & *!<0pt,2pt>{\CIRCLE} \ar@{}[ddd] |*+{I} \ar@{-}[dl] \ar@{-}[dr] & & & &\\
		&  & *+0{} \ar@{-}[d] &  & *+0{} \ar@{-}[d] & & &\\
		&  & *+0{} \ar@{-}[dr] & & *+0{} \ar@{-}[dl] & & &\\
		& *+0{} \ar@{-}[rr] &  & *!<0pt,-2pt>{\CIRCLE} \ar[d] & & & &&&&&&&&&&&&&*{\CIRCLE}\ar@(ur,ul)[] &d\\
		&  & b & *{\CIRCLE} \ar[d] & & & & *+0{}  &&&&&&&&&&&&*{\CIRCLE}\ar[u] \ar[llllllllllllllll]& c\\
		& *+0{} \ar[rr] \ar@{-}[ddd] &  & *!<0pt,2pt>{\CIRCLE} \ar@{}[ddd] |*+{O} \ar@{-}[dl] \ar@{-}[dr] & & & &&&&&&&&&&&&&*!<0pt,2pt>{\CIRCLE} \ar[u]\\
		&  & *+0{} \ar@{-}[d] &  & *+0{} \ar@{-}[d] & & &&&&&&&&&&&& *+0{} \ar@{-}[ur] & & *+0{} \ar@{-}[ul]\\
		&  & *+0{} \ar@{-}[dr] & & *+0{} \ar@{-}[dl] & & &&&&&&&&&&&&*+0{} \ar@{-}[u] &  & *+0{} \ar@{-}[u]\\
		& *+0{} \ar@{-}[rr] &  & *!<0pt,-2pt>{\CIRCLE} \ar[d] &  & & &&&&&&&&&&&&& *!<0pt,-2pt>{\CIRCLE} \ar@{}[uuu] |*+{D_3} \ar@{-}[ul] \ar@{-}[ur]\\
		&  &  & *{\Square} \ar@{}[ddd] |*+{C_1} \ar@{-}[dl] \ar@{-}[dr] & & & &&&&&&&&&&&&&*!<0pt,2pt>{\CIRCLE} \ar[u]\\
		&  & *+0{} \ar@{-}[d] &  & *+0{} \ar@{-}[d] & & &&&&&&&&&&&&*+0{} \ar@{-}[ur] & & *+0{} \ar@{-}[ul]\\
		&  & *+0{} \ar@{-}[dr] & & *+0{} \ar@{-}[dl] & & &&&&&&&&&&&&*+0{} \ar@{-}[u] &  & *+0{} \ar@{-}[u]\\
		&  &  & *!<0pt,-2pt>{\CIRCLE} \ar[d] & & & &&&&&&&&&&&&&*!<0pt,-2pt>{\CIRCLE} \ar@{}[uuu] |*+{D_2} \ar@{-}[ul] \ar@{-}[ur]\\
		& *!<0pt,10pt>{m_1} &  & *{\CIRCLE} \ar[dll] \ar[drr]&  & *!<0pt,10pt>{m_2} & &&&&&&&&&&&&&*!<0pt,3pt>{\CIRCLE} \ar[u]\\
		& *{\CIRCLE} \ar[dl] \ar[dr] &  &  &  & *{\Square} \ar[dl] \ar[dr] & &&&&&&&&&&&&*+0{} \ar@{-}[ur] & & *+0{} \ar@{-}[ul]\\
		*{\CIRCLE} \ar[ddrrr] &  & *{\CIRCLE}\ar[ddr] &  & *{\CIRCLE}\ar[ddl] & &*{\CIRCLE}\ar[ddlll] &&&&&&&&&&&&*+0{} \ar@{-}[u] &  & *+0{} \ar@{-}[u]\\
		&  &  & &  & & &&&&&&&&&&&&&*!<0pt,-2pt>{\CIRCLE} \ar@{}[uuu] |*+{D_1} \ar@{-}[ul] \ar@{-}[ur]\\
		&  &  & *{\CIRCLE} \ar[d] &  & & &&&&&&&&&&&&&*!<0pt,3pt>{\CIRCLE} \ar[u]\\
		& *+0{} \ar[rr] \ar@{-}[ddd] &  & *!<0pt,2pt>{\CIRCLE} \ar@{}[ddd] |*+{N} \ar@{-}[dl] \ar@{-}[dr] & & & &&&&&&&&&&&&*+0{} \ar@{-}[ur] & & *+0{} \ar@{-}[ul]\\
		&  & *+0{} \ar@{-}[d] &  & *+0{} \ar@{-}[d] & & &&&&&&&&&&&&*+0{} \ar@{-}[u] &  & *+0{} \ar@{-}[u]\\
		&  & *+0{} \ar@{-}[dr] & & *+0{} \ar@{-}[dl] & & &&&&&&&&&&&&&*!<0pt,-2pt>{\Square} \ar@{}[uuu] |*+{C_2} \ar@{-}[ul] \ar@{-}[ur]\\
		& *+0{} \ar@{-}[rr] &  & *{\CIRCLE} \ar@{-}[d]& & & &&&&&&&&&&&&&\\
		&  &  & *+0{} \ar@{-}[rrrrrrrrrrrrrrrr] & & & &&&&&&&&&&&&&*+0{} \ar[uu]\\
	}
	}
	\caption{Diagram of the branching process $\B$.}
	\label{fig:bp}
\end{figure}
\end{proof}

\end{document}